\documentclass[aps,12pt,amsmath,amssymb,tightenlines,superscriptaddress]{revtex4}
\usepackage{amsmath,amssymb,amsthm}
\usepackage{graphicx}
\usepackage{booktabs}
\usepackage[usenames, dvipsnames]{color}
\usepackage{setspace}
\usepackage{rotating}
\usepackage{verbatim}

\usepackage[ plainpages = false, pdfpagelabels, 
                 pdfpagelayout = useoutlines,
                 bookmarks,
                 bookmarksopen = true,
                 bookmarksnumbered = true,
                 breaklinks = true,
                 linktocpage=all,
                 pagebackref=false,
                 colorlinks = true,
                 linkcolor = BrickRed,
                 urlcolor  = blue,
                 citecolor = BrickRed,
                 anchorcolor = green,
                 hyperindex = true,
                 hyperfigures
                 ]{hyperref} 

    \setcounter{topnumber}{2}
    \setcounter{bottomnumber}{2}
    \setcounter{totalnumber}{4}     
    \setcounter{dbltopnumber}{2}    
    

\newcommand{\cA}{\mathcal{A}}
\newcommand{\cB}{\mathcal{B}}
\newcommand{\cC}{\mathcal{C}}

\newcommand{\mui}{M}
\newcommand{\oui}{U}

\newcommand{\exA}{\textbf{A}}
\newcommand{\exB}{\textbf{B}}
\newcommand{\exC}{\textbf{C}}
\newcommand{\exD}{\textbf{D}}

\newcommand{\depspace}{\mathfrak{D}}
\newcommand{\diff}{\Delta}
\newcommand{\scorder}{s}
\newcommand{\significance}{\sigma}

\theoremstyle{plain}
\newtheorem{theorem}{Theorem}

\newtheorem{lemma}[theorem]{Lemma}
\newtheorem{corollary}[theorem]{Corollary}

\newtheorem*{thm1}{Theorem 1}

\theoremstyle{definition}
\newtheorem{defn}{Definition}

\begin{document}
\title{\href{http://necsi.edu/research/multiscale/}{An Information-Theoretic Formalism for Multiscale Structure in Complex Systems}}
\author{\href{http://www.people.fas.harvard.edu/~ballen/}{Benjamin Allen}}
\affiliation{Department of Mathematics, Emmanuel College}
\affiliation{Program for Evolutionary Dynamics, Harvard University}
\affiliation{Center for Mathematical Sciences and Applications, Harvard University}
\author{\href{http://www.sunclipse.org}{Blake C.\ Stacey}}
\affiliation{Martin A.\ Fisher School of Physics, Brandeis University}
\affiliation{New England Complex Systems Institute}
\author{\href{http://www.necsi.edu/faculty/bar-yam.html}{Yaneer Bar-\!Yam}}
\affiliation{New England Complex Systems Institute}
\date{\today}
\begin{abstract}
We develop a general formalism for representing and understanding
structure in complex systems. In our view, structure is the totality
of relationships among a system's components, and these relationships
can be quantified using information theory.  In the interest of
flexibility we allow information to be quantified using any function,
including Shannon entropy and Kolmogorov complexity, that satisfies
certain fundamental axioms.  Using these axioms, we formalize the
notion of a dependency among components, and show how a system's
structure is revealed in the amount of information assigned to each
dependency. We explore quantitative indices that summarize system
structure, providing a new formal basis for the complexity profile and
introducing a new index, the ``marginal utility of information''.
Using simple examples, we show how these indices capture intuitive
ideas about structure in a quantitative way. Our formalism also sheds
light on a longstanding mystery: that the mutual information of three
or more variables can be negative. We discuss applications to complex
networks, gene regulation, the kinetic theory of fluids and multiscale
cybernetic thermodynamics.
\end{abstract}

\maketitle

\section{Introduction}

\subsection{Overview}

The field of complex systems seeks to identify, understand and
predict common patterns of behavior across the physical, biological
and social sciences~\cite{baryam2003, haken2006, millerpage2007,
  boccara2010, newman2011}.  It succeeds by tracing these behavior
patterns to the structures of the systems in question.  We use the
term ``structure'' to mean the totality of quantifiable relationships
among the components comprising a system.  Systems from different
domains and contexts can share key structural properties, causing them
to behave in similar ways.  For example, the central limit theorem
tells us that we can sum over many independent random variables and
obtain an aggregate value whose probability distribution is
well-approximated as a Gaussian.  This helps us understand systems
composed of statistically independent components, whether those
components are molecules, microbes or human beings.  Likewise,
different chemical elements and compounds display essentially the same
behavior near their respective critical points.  The critical
exponents which encapsulate the thermodynamic properties of a
substance are the same for all substances in the same universality
class, and membership in a universality class depends upon structural
features such as dimensionality and symmetry properties, rather than
on details of chemical composition~\cite{sethna2006,
  ncatlab-universality}.

Mathematical representations of systems encode different aspects of
structure~\cite{baryam2003, millerpage2007, castellano2009,
  boccara2010}.  Common representations include
networks~\citep{watts1998, barabasi1999, strogatz2001,
  newman2003networks, wang2003, watts2003, boccaletti2006networks,
  dorogovtsev2010}, lattice models and cellular
automata~\citep{chopard1998cellular, hoekstra2010, stacey2011},
interacting agent models~\cite{liggett1985interacting, durrett1994,
  axelrod1997complexity, helbing2001traffic, bonabeau2002},
differential equations~\cite{mackey1977oscillation}, difference
equations~\cite{may1976simple,ott1981} and continuum field
theories~\cite{turing1952, cross1993pattern, maini1997spatial}.  While
mathematical inference is a rigorous aspect of science once a
mathematical model is identified, the choice of representation for a
particular system or class of systems often relies upon an \emph{ad
  hoc} leap of intuition.  The task of constructing useful
representations becomes challenging for complex systems, where the set
of system components is not just large, but also interwoven and
resistant to decomposition.  Indeed, the set of system components for
real-world physical, biological or social entities can be expected to
be so intricate that, for all practical purposes, precise enumeration
is ultimately intractable.  Achieving a fundamental solution to this
problem is critical for our ability to empower theoretical physics as
a general approach to complex systems, and a practical solution is
critical for our ability to address many real-world challenges.

An important clue about how to achieve a general solution is found in
the renormalization group analysis of phase transitions, a prototype
for complex-systems thinking. According to this analysis, we can
characterize analytically the set of ``relevant'' parameters that are
necessary and sufficient to characterize the behavior of a system in
the thermodynamic limit.  This provides a formal conceptual basis for
discarding variables which are unnecessary for a successful practical
description.  In addition, it furnishes a formal approach to obtaining
those variables, eliminating the \emph{ad hoc} aspects of constructing
models. Generalizing this approach requires an understanding of
information theory in a multiscale context, a context that has not
been developed in information theory nor in the statistical
physics of phase transitions. In particular we need a formal and
general understanding of how mathematical representations capture the
structure of systems, \emph{i.e.,} the information that characterizes
the set of possible configurations or behaviors of a system. We must
also consider the fundamental nature of the concept of scale in order
to distinguish, where appropriate, finer-scale details that can be
selectively neglected in favor of larger-scale ones.

These issues underline the need to develop a generalized mathematical
framework for discussing the multiscale structure of systems that
builds on information theory. Here we develop such a framework,
incorporating a set of complex-systems ideas, which can anchor
discussions of mathematical representation and formalize our intuitive
notion of system structure.  Our formalism applies to any system for
which a suitable quantitative measure of information can be defined.

The need for such a theory is also apparent from the lack of a
quantitative notion of a ``complex system'' as distinguished from
other varieties of systems.  It makes intuitive sense that
quantitatively defining system complexity should enable the
identification of complex systems.  However, efforts to define
complexity encounter a paradox. One might think to quantify system
complexity using measures like Kolmogorov/algorithmic complexity or
Shannon information that quantify irregularity or unpredictability in
an object or a stochastic process. However, the systems deemed the
most ``complex'' by these indices are those in which the components
behave independently of each other, such as ideal gases. Such systems
lack the multiscale regularities and interdependencies that
characterize the systems typically studied by complex systems
researchers.

Some theorists have argued that true complexity is best viewed as
occupying a position between order and
randomness~\cite{grassberger1986, crutchfield1989, crutchfield1994}.
Music is, so the argument goes, intermediate between still air and
white noise.  This answer is unsatisfying, however.  Though complex
systems contain both order and randomness, they do not appear to be
mere blends of the two.  For example, a box containing both a crystal
and an ideal gas is a system with intermediate entropy, but would not
normally be considered a complex system.  In a truly complex system,
the balance of order and randomness is captured in the multiscale
relationships among the system's components---that is, in the system's
structure.  Hence, we are brought to the realization that a formal
theory of structure based upon a generalization of information theory
is critical for understanding what complexity is and for
characterizing the essential attributes of complex systems.

We begin by outlining, in Section~\ref{Information}, the properties
which such an information function must satisfy.  Examples include,
but are not limited to, Shannon information, algorithmic complexity,
and vector space dimension.  We require only minimal assumptions so
that our formalism can apply as broadly as possible.

In Section~\ref{Systems}, we formalize the notion of a system and
introduce illustrative examples to which we will refer throughout the
work.  Section~\ref{Dependencies} introduces the central notion of a
dependency space---a Venn diagram or Euler diagram representation of
inter-component relationships.  Section~\ref{sec:Independence} sets
out the idea of a subsystem---one system embedded within another---and
Section~\ref{Scale} formalizes the idea of scale, elaborating how to
quantify multiscale relationships via information theory.

This development culminates in Section~\ref{Indices of Structure},
which discusses two indices of multiscale structure: the complexity
profile~\cite{baryam2004b} and a new index, the marginal utility of
information (MUI).  These indices resolve the paradox of complexity,
order and randomness, by showing how information and complexity can
exist at multiple scales.  The systems of greatest interest to the
complex systems community are those displaying nontrivial complexity
at a wide range of scales.  Section~\ref{Combinatorics of the
  Complexity Profile} develops a combinatorial formula for the
complexity profile and related quantities.

In Section~\ref{Special Classes of Systems} we consider special
classes of systems for which the indices introduced in
Section~\ref{Indices of Structure} take a simplified form.  We use
these systems to illustrate important properties of the two indices.
Section~\ref{Multiscale Cybernetic Thermodynamics} builds on these
ideas to study systems acted upon by external agents, using
multicylinder Szil\'ard engines as an illustrative example.  Finally,
Section~\ref{Discussion} presents our general conclusions and outlines
directions for future work.  This section discusses related concepts
including negentropy, requisite variety and their implications for the
scientific characterization of complex systems.

\subsection{List of key concepts}

\begin{itemize}
\item \emph{System:} A system is an entity composed of constituent
  parts, which we call \emph{components}.  Systems can be dynamic or
  static, deterministic or probabilistic.

\item \emph{Information:} Information quantifies the degree of
  freedom, irregularity or unpredictability of a set of components.
  Specific measures of information can be chosen depending on the type
  of representation available for the system, or for different
  purposes.  In each context, an information measure indicates how
  many questions one needs answered to remove uncertainty about
  the system components under consideration.  Though different
  measures of information are computed differently and require
  different types of data, they share certain fundamental mathematical
  properties, which we outline in Section \ref{Information}.

\item \emph{Dependency:} A dependency among a set of components is the
  relationship (if any) among them causing information pertinent to
  one component to be pertinent to them all.  We introduce a new
  notation for dependencies; for example, the dependency among
  components $a$, $b$, and $c$ is denoted $a;b;c$.  We also consider
  conditional dependencies such as~$a;b|c$, which stands for the
  relationship between $a$ and $b$ that exist independently of their
  relationships with~$c$.  The strength of the relationships that
  comprise a dependency can be quantified using information
  theory. For example, the conditional mutual information $I(a;b|c)$
  quantifies how strongly $a$ and $b$ are related in their behavior,
  excluding effects attributable to~$a$ and $b$'s mutual relationships
  with~$c$.
  
  \item \emph{Structure:} A system's structure is the totality of
  relationships among all sets of components, or, equivalently, the
  collection of all dependencies in the system.  We can characterize
  structure quantitatively in terms of the amounts of information
  assigned to each dependency.  Since this definition of structure
  makes no reference to the nature of the components or the mechanisms
  by which they interact, the structures of systems from very
  different contexts can be analyzed and compared using this
  framework.
  
\item \emph{Scale:} Relative size plays a central role in
  understanding and quantifying structure. The scale of a system
  behavior is given by the number of components that are engaged in
  that behavior. Formally, scale is the number of components involved
  in a dependency of the system. The extent of system behavior at a
  particular scale is quantified by the amount of information assigned
  to dependencies at this scale. Information and scale are
  complementary: As information is about the degree of freedom, scale
  is about constraints associated with redundancy. A large-scale
  behavior requires redundant information among the many components
  engaged in that behavior.

\item \emph{Indices of Structure:} Since systems with many components
  involve a large number of dependencies, the full structure of a
  system can be unwieldy to represent.  We develop two indices which
  give summary characterizations of a system's overall structure.  The
  first is the \emph{complexity profile}, an expression of the
  tradeoff between complexity and coordination introduced in prior
  work~\cite{baryam2004b}.  The second is a new measure, the
  \emph{marginal utility of information} (MUI).  These indices
  characterize, respectively, the amount of information that is
  present in the system behavior at different scales, and the
  descriptive utility of limited information through its ability to
  describe behavior of multiple components.

\end{itemize}


\section{Information}
\label{Information}

In defining the concept of structure we make use of a measure of
information. Conceptually, information specifies a particular entity
out of a set of possibilities and thus enables us to describe or
characterize that entity. A measure of information characterizes the
amount of information needed. Rather than adopting a specific
information measure, we consider that the amount of information may be
quantified in different ways, each appropriate to different
contexts. To unify these measures, we develop an axiomatically based
approach that considers a generalized \emph{information function}
satisfying two axioms. These axioms are satisfied by Shannon
information and algorithmic complexity among others. We use the
information function to map out how information is shared among
components of a system. This sharing---in which information about some
components can be gained by examining others---is central to our
discussion of structure.

Let $A$ be the set of components in a system.  An information
function, $H$, assigns a nonnegative real number to each subset $U
\subset A$, representing the amount of information needed to describe
the components in~$U$.  We require that such a function satisfy two
axioms:

\begin{itemize}
\item \emph{Monotonicity:} The information in a subset $U$ that is
  contained in a subset $V$ cannot have more information than $V$,
  that is, $H(U) \leq H(V)$.
\item \emph{Strong subadditivity:} Given two subsets, the information
  contained in both cannot exceed the information in each of them
  separately minus the information in their intersection:
\begin{equation}
H(U \cup V) \leq H(U) + H(V) - H(U \cap V).
\label{eq:strong-subadditivity}
\end{equation}
\end{itemize}

Strong subadditvity expresses how information combines when parts of a
system ($U$ and $V$) are regarded as a whole ($U \cup V$).
Information regarding $U$ may overlap with information regarding $V$
for two reasons.  First, $U$ and $V$ may share components; this is
corrected for by subtracting $H(U \cap V)$.  Second, constraints in
the behavior of non-shared components may reduce the information
needed to describe the whole. Thus, information describing the whole
may be reduced due to overlaps or redundancies in the information
applying to different parts, but it cannot be increased. These
redundancies are directly related to emergent collective behaviors.

The above axioms are best known in the context of Shannon entropy;
however, they apply to a number of measures that quantify information
or complexity, and different measures are appropriate for different
types of system:

\begin{itemize}

\item \emph{Microcanonical or Hartley entropy}: For a system with
a finite number of joint states, $H_0(U)=\log m$, where $m$
is the number of joint states available to the subset $U$ of
components.  Here, information content measures the number of
yes-or-no questions which must be answered to identify one joint state
out of~$m$ possibilities.

\item \emph{Boltzmann--Shannon entropy}: For a system characterized by
  a probability distribution over all possible joint states, $H(U) =
  -\sum_{i=1}^m p_i \log p_i$, where $p_1, \ldots, p_m$ are the
  probabilities of the joint states available to the components in
  $U$~\cite{shannon1948}.  Here, information content measures the
  number of yes-or-no questions which must be answered to identify one
  joint state out of all the joint states available to~$U$, where more
  probable states can be identified more concisely.

\item \emph{Algorithmic complexity}: For a system whose subsets can
  each be encoded as character strings, the algorithmic complexity
  $H(U)$ is the length of a maximally efficient description of~$U$
  according to some algorithmic scheme.  This notion has been
  formalized in a number of ways.  When a subset $U$ can be encoded as
  a binary string, the algorithmic complexity of~$U$ can be quantified
  as the length of the shortest self-delimiting program producing this
  string, with respect to some universal Turing machine.  Information
  content then measures the number of machine-language instructions
  which must be given to reconstruct $U$.  While conceptually clean,
  this definition is problematic.  First, the algorithmic complexity
  is only defined up to a constant which depends on the choice of
  universal Turing machine.  Second, thanks to the halting problem,
  the algorithmic complexity can only be estimated, not computed
  exactly.  We can establish upper bounds, but not precise values.
  These difficulties have led to modifications of the algorithmic
  complexity concept in which the description scheme is less
  wide-ranging than the set of all Turing machine
  programs~\cite{shallit2001, calude2009, ahnert2010}.
  
\item \emph{Logarithm of period}: For a deterministic dynamic system
  with periodic behavior, an information function can be defined as
  the logarithm of the period of a set of components (\emph{i.e.,} the
  time it takes for the joint state of these components to return to
  an initial joint state)~\cite{steudel2010}.  This information
  function measures the number of questions which one should expect to
  answer in order to locate the position of those components in their
  cycle.
  
\item \emph{Vector space dimension}: For a system the joint states of
  whose components can be described as points in a vector space, a
  possible information function is the dimension of the smallest vector
  space needed to describe the joint states of the components in~$U$.
  This dimension can be computed in practice, for example, by
  performing a principal components analysis on the variables
  representing components in~$U$~\cite{allen2008}.  Here, information
  content measures the number of coordinates one must specify in order
  to locate the joint state of~$U$.

\item \emph{Matroid rank}: A matroid consists of a set of elements
  called the \emph{ground set}, together with a \emph{rank function}
  that takes values on subsets of the ground set.  Rank functions are
  defined to include the monotonicity and strong subadditivity
  properties~\cite{dougherty2007}, and generalize the notion of vector
  subspace dimension. Consequently, the rank function of a matroid is
  an information function in our framework, with the ground set
  identified as the set of system components.
\end{itemize}


\section{Systems}
\label{Systems}

\subsection{Definitions}
\label{sec:SystemsDefinitions}

We formally define a \emph{system} $\cA$ to be a finite set $A$ of
components, together with an information function $H_\cA$ (in this
case and for other definitions, we omit the subscript when only one
system is under consideration).  The choice of information function
will reflect how the system is modeled mathematically, and it affects
the kind of statements we can make about its structure.

A \emph{subsystem} is a smaller system embedded in a larger one.
Formally, we define a subsystem of~$\cA = (A,H_\cA)$ as a pair $\cB=(B,
H_\cB)$, where $B$ is a subset of~$A$ and $H_\cB$ is the restriction
of~$H_\cA$ to subsets of~$B$.

\subsection{Static, Dynamic, and Probabilistic Systems}

Systems can be static (existing in one state only), probabilistic
(existing in a number of possible states with associated
probabilities), or dynamic (existing in a sequence of states through
time).  Dynamic systems can be either deterministic or
stochastic.

Different information measures are appropriate depending on the
static, dynamic or probabilistic nature of the system in question.
For example, static systems may be amenable to algorithmic complexity
measures, whereas Shannon entropy applies naturally to probabilistic
systems.  Dynamic systems are most directly addressed as time
histories.  A single time history can be studied using algorithmic
measures, while an ensemble of time histories may be studied using
probabilistic measures.  Dynamic systems may also be treated as
probabilistic systems, using the approach of ergodic theory, wherein
the frequencies of occupancy of different states over extended periods
of time are treated as a probability distribution.  Our framework can
then characterize the structure of the system in terms of its
statistical behavior over long timescales.

The methods outlined here can be used to explore the dynamics of a
system's structure, using information measures whose values vary as
relationships change within a system over time.  However, our current
work focuses only on structure as an unchanging property of a system.

\subsection{Empirically motivated examples}

Our framework can be applied to a wide variety of real-world complex
systems.  We highlight four in particular:
\begin{itemize}
\item \emph{Gene regulatory systems:} Genes within a cell change over
  time in their expression levels, \emph{i.e.,} their rate of protein
  or RNA production.  Proteins produced by one gene may promote or
  inhibit the expression of other genes; thus, genes are an
  interdependent system with regard to their expression
  levels~\cite{jacob1961, britten1969gene, carey2001transcriptional,
    elowitz2002stochastic, lee2002, shen2002network, boyer2005,
    chowdhury2010}.  Individual genes can be represented as components
  of a system, and the information function quantifies the range of
  behaviors available to sets of genes.  Relationships, \emph{e.g.,}
  promotion or inhibition of one gene by another, can be quantified
  using mutual information.
\item \emph{Neural systems:} In a nervous system, neurons transmit
  electrical signals to each other.  These signals can be excitatory
  or inhibitory.  If the sum of input signals in a neuron exceeds some
  activation threshold, this neuron will ``fire'' and transmit signals
  to other neurons, promoting or inhibiting their firing in
  turn~\cite{hopfield1982neural, rabinovich2006}.  The components are
  neurons, and the information measure quantifies the range of joint
  spiking behavior in a collection of neurons (cf.~\cite{schneidman2006weak}).
\item \emph{Financial markets:} Financial markets are complex
  interdependent systems~\cite{mandelbrot1967distribution,
    mantegna1999, sornette2004stock, may2008, schweitzer2009,
    harmon2010, haldane2011systemic, harmon2011predicting, misra2011},
  where investors can be represented as system components, and the
  information function quantifies the range of investment activities
  among a set of investors.  Alternatively, one may view the assets as
  the components, and the information function quantifies the range of
  joint behavior in the prices of a set of assets.
\item\emph{Spin systems:} Many systems in statistical and
  condensed-matter physics are represented by considering components
  arranged on a graph or lattice.  The states of these components are
  characterized by discrete or continuous values, and these values
  vary stochastically according to which configurations are
  energetically favorable.  The contribution made by an individual
  component to the system's total energy depends on its interactions
  with its neighbors.  The prototypical example is the Ising model, in
  which each component has a ``spin'', which can be ``up'' or
  ``down'', and the interaction energy of a neighboring pair of spins
  depends on whether they are parallel or antiparallel.  Spin-system
  models play a vital role in the study of magnets, material mixtures
  such as alloys, liquid-gas phase transitions and other physical
  systems~\cite{domb1972, baryam2003, kardar2007b}.  The appropriate
  information function is the Shannon information, which is physically
  significant owing to the correspondence between Shannon information
  and thermodynamic entropy~\cite{feynman1996, sgs2004}.
\end{itemize}

\subsection{Simple examples}

To illustrate our formalism, we shall use four simple systems as examples.
\begin{itemize}
\item \emph{Example \exA: Three independent bits:} The
  system comprises three components, each of which is equally likely
  to be in state 0 or state 1, and the system as a whole is equally
  likely to be in any of its eight possible states.

\item \emph{Example \exB: Three completely interdependent bits:} Each
  of the three components is equally likely to be in state 0 or state
  1, but all three components are always in the same state.

\item \emph{Example \exC: Independent blocks of dependent bits:} Each
  component is equally likely to take the value 0 or 1; however,
  the first two components always take the same value, while the third
  can take either value independently of the coupled pair.

\item \emph{Example \exD: The $2+1$ parity bit system}: Three bits
  which can exist in the states 110, 101, 011, or 000 with equal
  probability.  Each of the three bits is equal to the parity (0 if
  even; 1 if odd) of the sum of the other two.  Any two of the bits
  are statistically independent of each other, but the three as a
  whole are constrained to have an even sum.
\end{itemize}


\section{Dependencies}
\label{Dependencies}

Structure in complex systems reflects the observation that components
are not independent of each other.  This lack of independence implies
that the behavior or state of a component may then be inferred, in
whole or in part, from the behaviors or states of others.  We
illustrate this principle with three examples:
\begin{itemize} 
\item In the Example \exC~above, the state of the first component
  is determined by the state of the second, and vice versa.  In
  contrast, the state of the third component cannot be obtained from,
  nor used to obtain, the states of the first two.
\item In gene regulatory systems, the expression level of a gene may
  be determinable, in whole or in part, from the expression level of
  other genes that have regulatory interactions with this gene.
\item In a fixed structure such as a building, the components
  (\emph{e.g.,}~bricks, windows, etc.) are located in fixed spatial
  relationship to each other.  If the structure as a whole were moved
  in three-dimensional space, relative to some point of reference
  (which can be achieved by moving either the structure or the point
  of reference), the locations of three components would suffice to
  determine the locations of all others.
\end{itemize}

We call relationships such as these \emph{dependencies}.  Such
dependencies form the basis for our theory of structure.  With this
flexible notion of dependencies, our formalism describes not only
rigid structures such as a building, but also ``soft'' structures
arising from relationships that are not fully determinate,
\emph{e.g.,}~statistical or probabilistic relationships.  This section
introduces a general, information-theoretic language for describing
and quantifying dependencies.

\subsection{Notation for dependencies}

A \emph{dependency} among a collection of components $a_1, \ldots,
a_m$ is the relationship (if any) among these components such that the
behavior of some of the components is in part obtainable from the
behavior of others.  We denote this dependency by the expression
$a_1;\ldots;a_m$.  This expression represents a relationship, rather
than a number or quantity.  We use a semicolon to keep our notation
consistent with standard information theory (see below).

We can identify a more general concept of \emph{conditional
  dependencies}.  Consider two disjoint sets of components $a_1,
\ldots, a_m$ and $b_1, \ldots, b_k$.  The conditional dependency
$a_1;\ldots;a_m|b_1, \ldots, b_k$ represents the relationship (if any)
between $a_1,\ldots, a_m$ such that the behavior of some of these
components can yield improved inferences about the behavior of others,
relative to what could be inferred from the behavior of~$b_1, \ldots,
b_k$.  We call this the dependency of~$a_1, \ldots, a_m$ \emph{given}
$b_1, \ldots, b_k$, and we say $a_1, \ldots, a_m$ are \emph{included}
in this dependency, while $b_1, \ldots, b_k$ are \emph{excluded}.

A system's dependencies can be organized in a Venn diagram, as in
Figure \ref{fig:3comps}.  We call this diagram a \emph{dependency
  diagram.}  Each dependency in a system corresponds to a region of
the dependency diagram.

We call a dependency \emph{irreducible} if every system component is
either included or excluded. The irreducible dependencies in a
three-component system are pictured in Figure \ref{fig:3comps}. We
denote the set of all irreducible dependencies of a system $\cA$ by
$\depspace_\cA$.

The relationship between the components and dependencies of $\cA$ can
be captured by a mapping from~$A$ to subsets of~$\depspace_\cA$.  A
component $a \in A$ maps to the set of irreducible dependencies that
involve $a$ (or in visual terms, the region of the dependency diagram
that corresponds to component $a$).  We represent this mapping by the
function $\delta$.  For example, in a system of three components $a$,
$b$, $c$, we have
\begin{equation}
\delta(a) = \{ (a;b;c), \; (a;b|c), \; (a;c|b),  \; (a|b,c) \}.
\end{equation}
The parentheses around each dependency are used only to delineate
dependencies from each other.  We can extend the domain of this
function to subsets of components, by mapping each subset $U \subset
A$ onto to the set of all irreducible dependencies that involve at
least one element of~$U$; for example,
\begin{equation}
\delta(\{a,b\}) = \{ (a;b;c), \; (a;b|c), \; (a;c|b),  \; (b;c|a),  \; (a|b,c), \; (b|a,c)   \}.
\end{equation}
Visually, $\delta(\{a,b\})$ is the union of the circles representing
$a$ and $b$ in the dependency diagram. Finally, we can extend the
domain of this function to dependencies, by mapping the dependency
$a_1; \ldots; a_m|b_1, \ldots, b_k$ onto the set of all irreducible
dependencies that include $a_1, \ldots, a_m$ and exclude $b_1, \ldots,
b_k$; for example,
\begin{equation}
\delta(a|c) = \{ (a;b|c), \; (a|b,c) \}.
\end{equation}
Visually, $\delta(a|c)$ consists of the regions corresponding to~$a$
but not to~$c$.

\begin{figure}
\includegraphics[width=8cm]{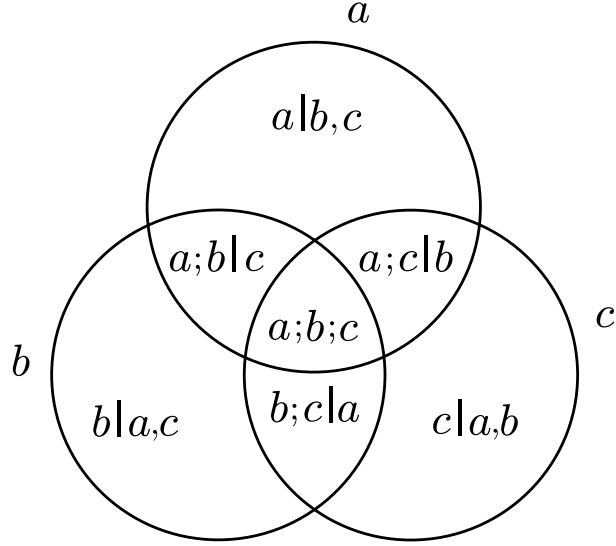}
\caption{The dependency diagram of a system with three components,
  $a$, $b$ and $c$, represented by the interiors of the three circles.
  The seven irreducible dependencies shown above correspond to the
  seven interior regions of the Venn diagram encompassed by the
  boundaries of the three circles.  Reducible dependencies such as
  $a|b$ are not shown.}
\label{fig:3comps}
\end{figure}

\begin{figure}[ht]
\includegraphics[width=8cm]{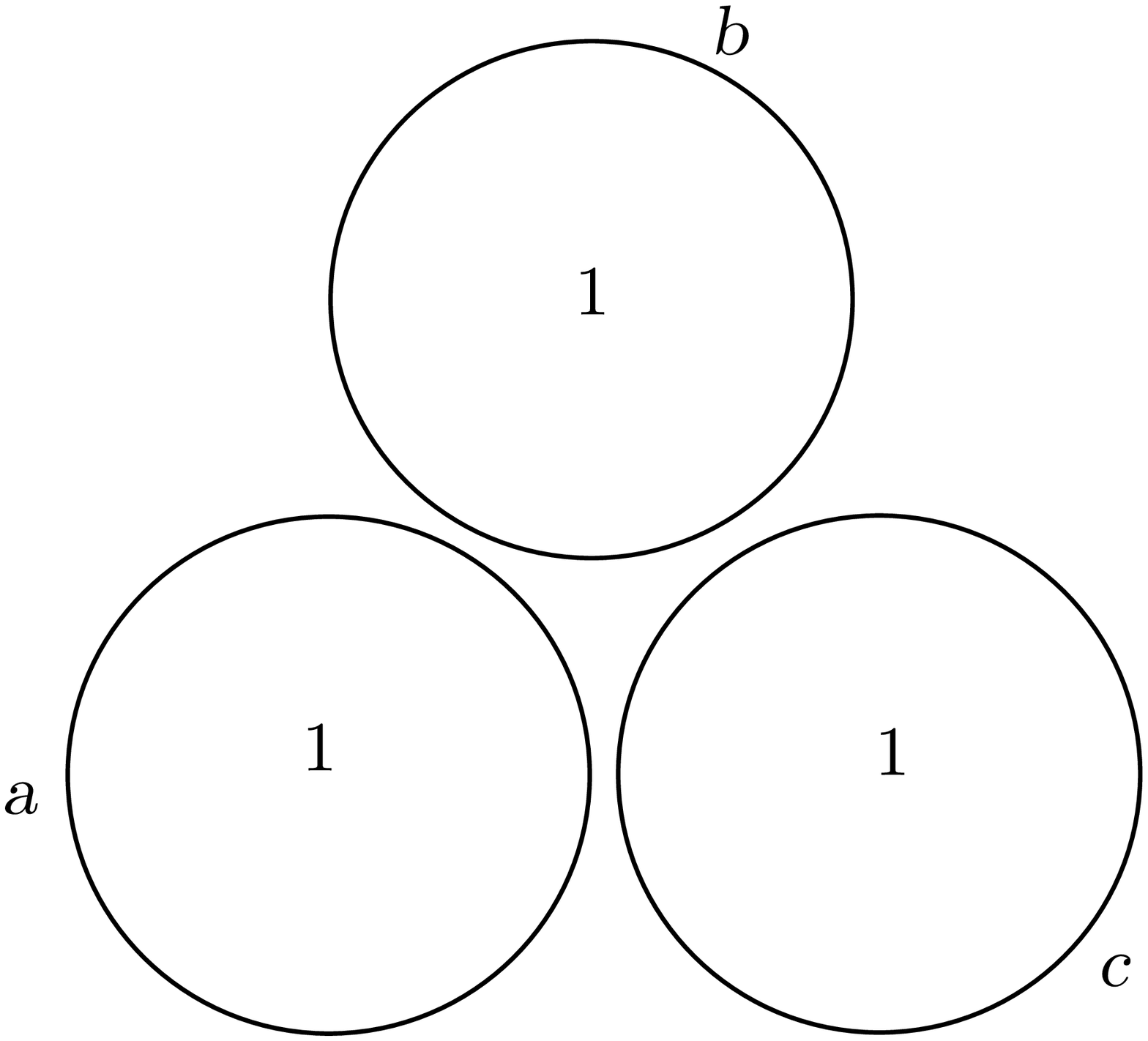}
\caption{\label{fig:exA} Dependency diagram for example \exA.}
\end{figure}

\begin{figure}[ht]
\includegraphics[width=5cm]{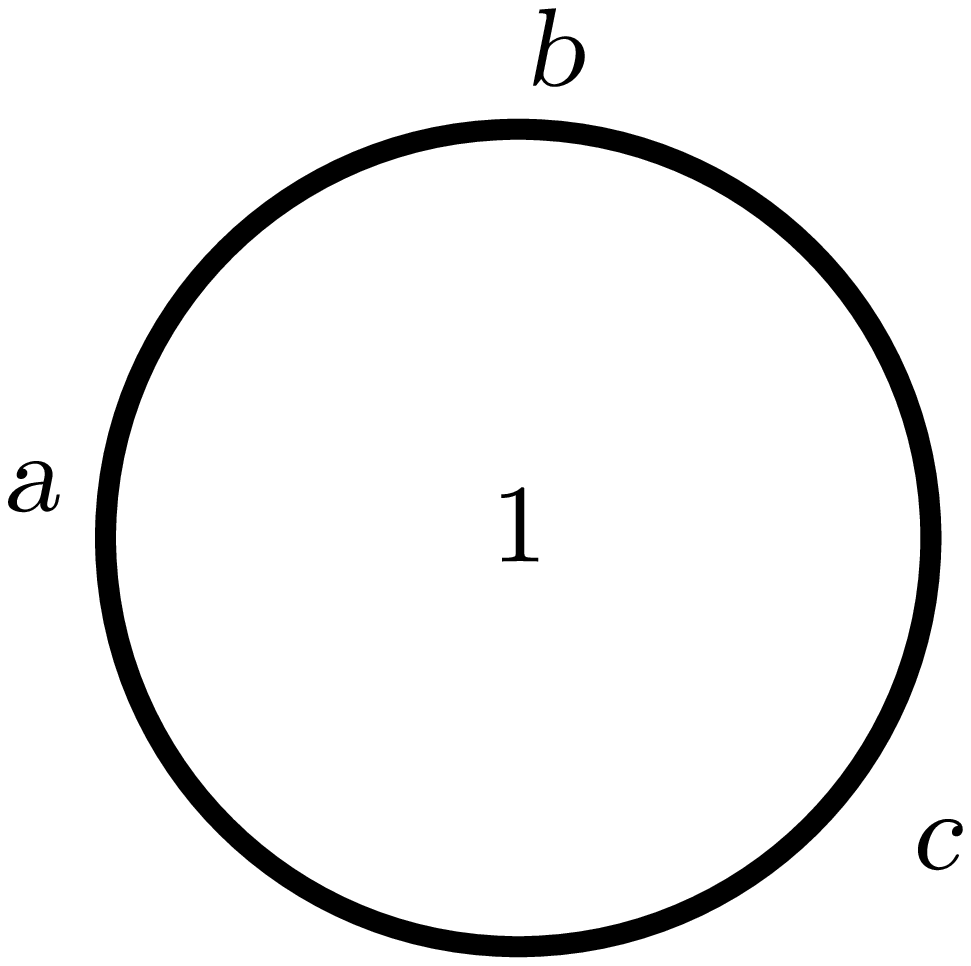}
\caption{\label{fig:exB} Dependency diagram for example \exB.}
\end{figure}

\begin{figure}[ht]
\includegraphics[width=6cm]{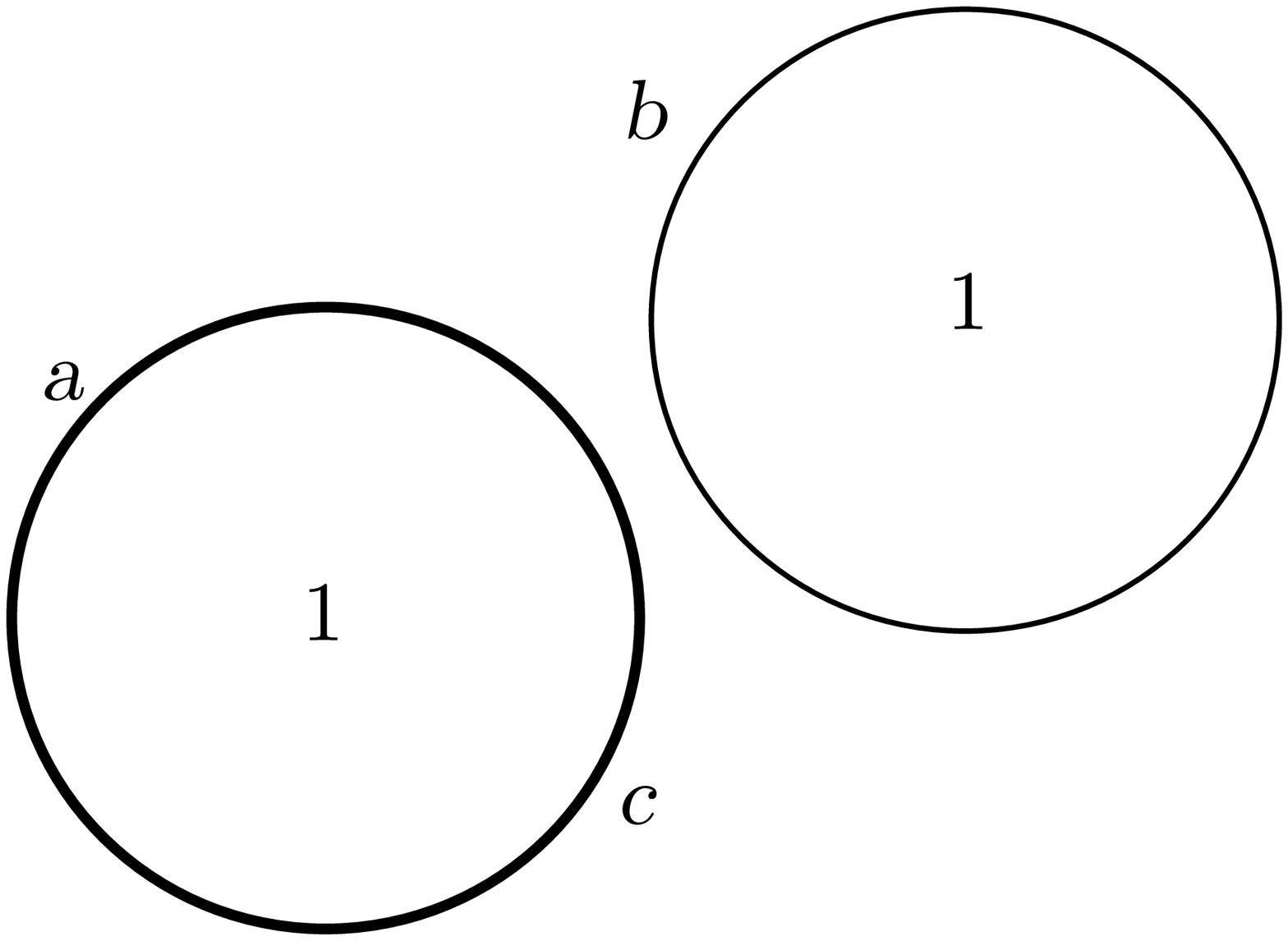}
\caption{\label{fig:exC} Dependency diagram for example \exC.}
\end{figure}

\begin{figure}[ht]
\includegraphics[width=8cm]{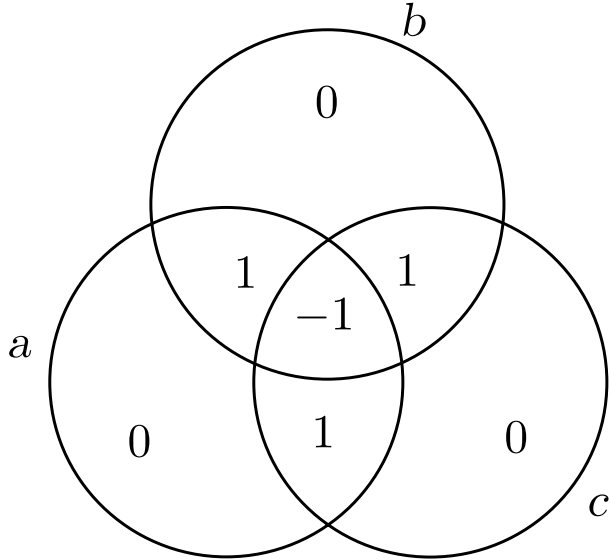}
\caption{\label{fig:paritybit} Dependency diagram for the $2+1$ parity
  bit system, example \exD.}
\end{figure}

\subsection{Information quantity in dependencies}
\label{InformationQuantity}

If a collection of components are dependent, such that the behavior of some
can be inferred from the behavior of others, this is reflected in
shared information among these components.  For example, in a system
of random variables, with Shannon entropy as an information function,
any statistical dependence between components $a$ and $b$ will cause
their joint information $H(a,b)$ to be less than the sum of their
separate informations $a$ and $b$, indicating the presence of shared
(mutual) information.  The amount of shared information quantifies the
strength of this dependence. Our formalism for system structure is
based on quantifying shared information across subsets of components
and multiple scales.

To formalize this idea, we introduce a function $I_\cA$ that
quantifies the shared information in the dependencies of a system
$\cA$.  The values of $I_\cA$ (the shared information in dependencies)
may be derived from the values of the previously defined information
function $H_\cA$, which is the joint information in sets of
components.  $H_\cA$ and $I_\cA$ characterize the same
quantity---information---but are applied to different kinds of
arguments: $H_\cA$ is applied to subsets of components of~$\cA$, while
$I_\cA$ is applied to dependencies.

To mathematically define the shared information $I_\cA$, we first
specify a solvable system of equations that determines its value on
irreducible dependencies.  For each subset $U \subset A$, we set
\begin{equation}
\label{eq:infosum}
\sum_{x \in \delta(U)} I(x) = H(U).  
\end{equation}
As $U$ runs over all subsets of $A$, the resulting system of equations
determines the values $I(x)$, $x \in \depspace_\cA$, in terms of the
values $H(U)$, $U \subset A$.  The solution is an instance of the
inclusion-exclusion principle \cite{erickson1996}, and can also be
obtained by Gaussian elimination.  An explicit formula obtained in the
context of Shannon entropy \cite{yeung1991} applies as well to any
information function.

We extend $I$ to dependencies that are not irreducible by defining the
shared information $I(x)$ to be equal to the sum of the values
of~$I(y)$ for all irreducible dependencies $y$ encompassed by a
dependency $x$:
\begin{equation}
\label{eq:Idependency}
I(x) = \sum_{y \in \delta(x)} I(y).
\end{equation}

More generally, we can extend the shared information $I$ to take, as
its argument, any set of irreducible dependencies $D
\subset \depspace_\cA$.  This is done by setting
\begin{equation}
I(D) = \sum_{y \in D} I(y).
\end{equation}
The above relation gives $\depspace_\cA$ the structure of a finite
signed measure space, with measure $I$. It is a signed measure space
because $I$ can take negative values (see below).

\begin{figure}[h]
\includegraphics[width=8cm]{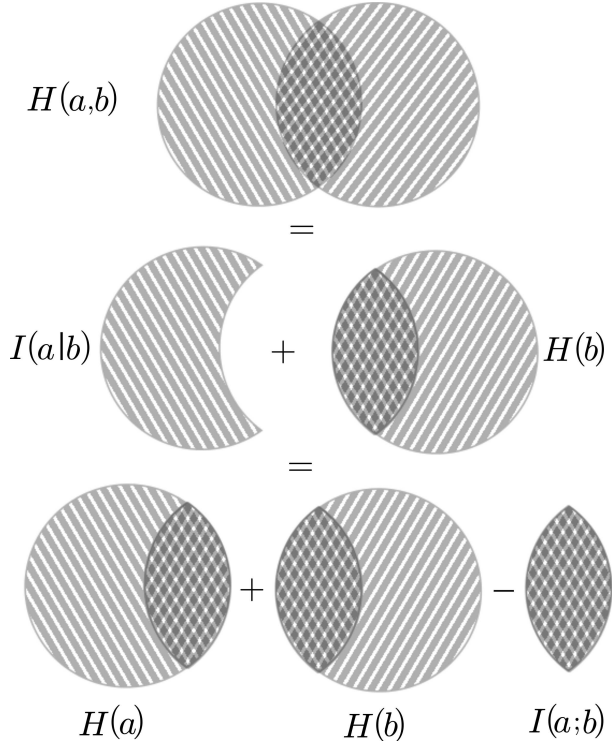}
\caption{\label{fig:venn-twocircles} An information diagram for a
  two-component system.  The information function $H$ is defined on
  sets of components, and the shared information $I$ is defined on
  dependencies.  In this picture, the circle representing component
  $a$ is shaded with left-leaning lines, and the circle for component
  $b$ is shaded with right-leaning lines.  The area which represents
  the mutual information is the area where both shadings overlap.  We
  can find the shared information of a dependency by adding and
  subtracting the values of $H$ for appropriate subsets.  For example,
  the mutual information $I(a;b)$ is $H(a) + H(b) - H(a,b)$, and the
  conditional information $I(a|b)$ is $H(a,b) - H(b)$.}
\end{figure}

The notation we use is chosen to correspond to that of Shannon
information theory.  This correspondence can be illustrated by
considering a system of random variables, with $H(U)$ representing the
joint Shannon entropy of variables in the set $U$.  In this case, $I$
represents the mutual and/or conditional information of a collection
of variables.  For instance, in a system with two random variables,
$a$ and $b$, solving \eqref{eq:infosum} yields
\begin{equation}
I(a;b) = H(a) + H(b) - H(a,b).
\end{equation}
This coincides with the classical definition of the mutual information
of $a$ and $b$ \cite{shannon1948, cover1991}.  It can similarly be
shown that, using Shannon information as the information function,
\begin{itemize}
\item $I(a_1|b_1, \ldots,b_k)$ is the conditional entropy of $a_1$
  given $b_1, \ldots,b_k$, and
\item $I(a_1; a_2|b_1, \ldots, b_k)$ is the conditional mutual
  information of $a_1$ and $a_2$ given $b_1, \ldots,b_k$.
\end{itemize}.

In general, for any information function $H$, we observe that the
information of one component conditioned on others, the shared
information $I(a_1|b_1, \ldots, b_k)$ is nonnegative due to the
monotonicity axiom.  Likewise, the mutual information of two
components conditioned on others, $I(a_1; a_2|b_1, \ldots, b_k)$, is
nonnegative due to the strong subadditivity axiom.

The crux of our formalism is that the collection of amounts of
information $I(x)$, as $x$ ranges over all dependencies of a system,
comprises a complete representation of a system's structure.
Formally, a system's structure is defined as the totality of
relationships among its components, and the collection of values
$I(x)$ provide a full quantitative characterization of those
relationships.  We have thus obtained our general representation of
structure; the remainder of this work will be concerned with
highlighting aspects of this representation that capture important
structural properties of systems.

While our formalism is built from the basic tools of information
theory---mutual and conditional information---the aims and scope of
our work depart from traditional information theory in a number of
directions.  First, information theory typically restricts its focus
to one or two variables at a time.  Multivariate mutual
information---the mutual information of three or more variables---has
been discussed in various contexts \cite{mcgill1980, han1980,
  yeung1991, caves1996, cerf1997, jakulin2003,
  baryam2004a,baryam2004b,baryam2004c, sgs2004, metzler2005,
  kolchinsky2011, james2011, baryam2012}, but it has not been
integrated into mainstream information theory, nor has its centrality
in the representation of complex systems been exploited.  Second,
information theory is primarily concerned with amounts of independent
bits of information; consequently, redundant information is typically
considered irrelevant, except insofar at it provides error correction
\cite{gallagher1968information,cover1991}. In contrast, we focus on
what redundant information reveals about relationships between
components and scales of behavior found in complex systems.  Third, by
defining information functions through their essential properties
(axioms) rather than by specific formulas, our formalism is applicable
to system representations for which traditional information measures
cannot be used.

We note that the study of multivariate information presents additional
challenges that do not arise in studying the information of only one
or two variables, due to the combinatorial number of quantities to be
calculated, the difficulty of calculation and in some cases the
difficulty of interpretation.  For instance, while the conditional
information $I(a_1|b_1, \ldots,b_k)$ and the conditional mutual
information $I(a_1; a_2|b_1, \ldots, b_k)$ are both nonnegative, the
mutual information of three or more variables can be negative.  Such
negative values appear to capture an important property of
dependencies, but the interpretation of these values as quantities of
information is somewhat counterintuitive.  As an example of negative
multivariate mutual information, consider the dependency diagram of
example \exD, as shown in Figure~\ref{fig:paritybit}.  The tertiary
shared information $I(a;b;c)$ is negative in this case.


\section{Independence}
\label{sec:Independence}

Independence is a central concept in the study of systems.  We define
independence by stating that components $a_1, \ldots, a_k$ of a system
$\cA=(A, H_\cA)$ are \emph{independent} of each other if their joint
information is equal to the sum of the information in each separately:
\begin{equation}
H(a_1, \dots, a_k) = H(a_1) + \ldots + H(a_k).
\end{equation}
This definition generalizes conventional notions of independence in
information theory, linear algebra and matroid theory.

We can extend this definition to apply at the level of
subsystems.  Subsystems $\cB_i=(B_i, H_{\cB_i})$ of~$\cA$, for~$i=1,
\ldots, k$, are defined to be independent of one another if
\begin{equation}
\label{eq:independence}
H_\cA(B_1 \cup \ldots \cup B_k) =  H_{\cB_1}(B_1) + \ldots + H_{\cB_k}(B_k).
\end{equation}
We recall from Section \ref{sec:SystemsDefinitions} that $H_{\cB_i}$ is the
  restriction of $H_\cA$ to subsets of $B_i$.

An immediate consequence of this definition is that if two subsystems are independent, they cannot have any components in common, except in the trivial case that each shared component has zero information.  A second important property, which we prove in Appendix \ref{app:independence}, is that if subsystems $\cB_1, \ldots, \cB_k$ are independent, then all components and subsystems of~$\cB_i$ are independent of all components and subsystems of~$\cB_j$ for all $j \neq i$.  In matroid theory, this is known as the hereditary property of independence \cite{perfect1981independence}.  For example, if subsystems $\{a\}$ and  $\{b,c\}$ of $\cA = \left(\{a,b,c\}, H_\cA \right)$ are independent, then $a$ and $b$ are independent and $a$ and $c$ are independent. The converse, however, is not true: In example \exD, $a$ is independent of $b$ and $a$ is also independent of $c$, but $\{a\}$ is not independent of $\{b,c\}$.   This occurs due to a global constraint among $a$, $b$ and $c$ that arises only when the three components are considered together (see Figure~\ref{fig:paritybit}).  More generally, for subsystems $\cB$ and $\cC$ of $\cA$, it is possible for $\cB$ to be independent of each subsystem of $\cC$ but not independent of $\cC$ itself.  This and other properties of independence are derived in Appendices \ref{app:independence} and \ref{app:MUIpbit} as consequences of the axioms of information.


\section{Scale}
\label{Scale}

A defining feature of complex systems is that they exhibit nontrivial
behavior on multiple scales~\cite{baryam2003,baryam2004b,baryam2004c}.
For example, stock markets can exhibit small-scale behavior, as when
an individual investor sells a small number of shares for reasons
unrelated to overall market activity.  They can also exhibit
large-scale behavior, \emph{e.g.,}~a large institutional investor
sells many shares~\cite{misra2011}, or many individual investors sell
shares simultaneously in a market panic~\cite{harmon2011predicting}.

While the term ``scale'' has different meanings in different
scientific contexts, we use the term scale here in the sense of the
number of entities or units acting in concert.  We view scales as
additive, in that a collection of many individual components acting in
perfect coordination is regarded as equivalent to a single component,
whose scale is the sum of the scales of the individual components.

The notion of scale can be seen as complementary, or even orthogonal,
to the notion of information.  In the market panic example, since many
investors are doing the same thing, there is much overlapping or
redundant information in their actions---the behavior of one can be
largely inferred from the behavior of others.  Because of this
redundancy, the amount of information needed to describe their
collective behavior is low. This redundancy also makes this collective
behavior large-scale and highly significant.

\subsection{Scales of components}

For many systems, it is reasonable to regard all components as having
\emph{a priori} equal scale. In this case we may choose the units of
scale so that each component has scale equal to 1.  For other systems,
it is necessary to represent the components of a system as having
different intrinsic scales, reflecting their built-in size,
multiplicity or redundancy.  For example, in a system of many physical
bodies, it may be natural to identify the scale of each body as a
function of its mass, reflecting the fact that each body comprises
many molecules moving in concert.  In a system of investment banks
\cite{may2010systemic,haldane2011systemic,beale2011}, it may be
desirable to assign weight to each bank according to its volume of
assets.  In these cases, we denote the \emph{a priori} scale of a
system component $a \in A$ by~$\significance(a) >0$, defined in terms
of some meaningful scale unit.

\subsection{Scales of irreducible dependencies}
\label{ScaleOfDependency}

We can extend the notion of scale to apply to irreducible
dependencies.  Large-scale dependencies refer to relationships between
many components, and/or components of large intrinsic scale; whereas
small-scale dependencies refer to few components, and/or components of
small intrinsic scale.  The scale of a dependency may be considered to
quantify its importance to the system as a whole.

In a system with all components having equal scale, we define the
scale of an irreducible dependency $x \in \depspace_\cA$, denoted
$\scorder_\cA(x)$ or just $\scorder(x)$, to be the number of included
components.  This definition coincides with the intuitive
understanding of scale as the number of components acting in concert.
For example, in a system $\cA$ with $A=\{a,b,c\}$, the dependency
$a|b,c$ has scale 1, since it represents the behavior of $a$ that is
independent of $b$ and $c$.  The dependency $a;b;c$ has scale~3, since
it represents the behavior of $a$, $b$ and $c$ that is mutually
determinable.

If the components have different intrinsic scales, we define the scale
of a irreducible dependency $x$ to be
\begin{equation}
\label{eq:scaledef}
\scorder_\cA(x)= \sum_{\substack{a \in A \\ 
               x \text{ includes }a}} 
               \significance(a).
\end{equation}
In words, $\scorder_\cA(x)$ is the total scale of components included
in $x$, or, equivalently, the total number of scale units involved in
the mutually determinable behaviors represented by $x$.

\subsection{Scale-weighted information} \label{Significant Information}

A key concept in our analysis of structure---and a significant point
of departure from traditional information theory---is that in our
framework, any information about a system is understood as applying at
a specific scale.  This scale indicates the number of components, or
more generally, units of scale, to which this information
pertains. Information that is shared among a set of
components---arising from correlated or concerted behavior among these
components---has scale equal to the sum of the scales of these
components. In an insect swarm, for example, the motions of individual
insects are highly coordinated, so that there is a high degree of
overlap in information describing the motion of each insect; this
overlapping information therefore applies at a large scale.  In
emphasizing the scale at which information applies, we depart from
traditional information theory, which generally treats equal
quantities of information as interchangeable.

Since the scale of information quantifies the number of components or
units to which it applies, it is often natural to weight quantities of
information by their scale.  In this way, redundant information is
counted according to its multiplicity.  Scale-weighted information
helps characterize system structure, and plays a central role in the
quantitative indices of structure we explore in Section \ref{Indices
  of Structure}.

We define the \emph{scale-weighted information} $S(x)$ of an
irreducible dependency $x$ to be the scale of~$x$ times its
information quantity
\begin{equation}
S(x) = \scorder(x) I(x).
\end{equation}
We define the scale-weighted information of a subset $U
\subset \depspace_\cA$ of the dependence space to be the sum of the
scale-weighted information of each irreducible-dependency in this
subset:
\begin{equation}
\label{SIdef}
S(U) = \sum_{x \in U} S(x) = \sum_{x \in U} \scorder(x) I(x).
\end{equation}

The scale-weighted information of the entire dependency space
$\depspace_\cA$---that is, the scale-weighted information of the
system $\cA$---is invariant under changes in the system's structure.
Specifically, this total scale-weighted information is always equal to
the sum of the scale-weighted information of each component,
regardless of the relationships among these components. We state this
property in the following theorem, whose proof is given in
Appendix~\ref{app:SI}.
\begin{theorem}
\label{totalSI}
For any system $\cA$, the total scale-weighted information, $S(\depspace_\cA) = \sum_{x \in \depspace_\cA} s(x) I(x)$,  is given by the scale and information of each component, independent of the information shared among them:
\begin{equation}
S(\depspace_\cA) = \sum_{a \in A} \significance(a) H(a).
\end{equation}
\end{theorem}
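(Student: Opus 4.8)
The plan is to prove the identity purely algebraically, by interchanging the order of summation in the definition of $S(\depspace_\cA)$ and then collapsing the inner sums using the linear system~\eqref{eq:infosum} that defines the shared information $I$. No appeal to the monotonicity or strong subadditivity axioms is required; the statement is a consequence only of how $\scorder$ and $S$ are built from $\significance$ and how $I$ is built from $H$.

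First I would substitute the definition of the scale of a dependency, equation~\eqref{eq:scaledef}, into the definition of scale-weighted information, equation~\eqref{SIdef}, applied to $U = \depspace_\cA$:
\begin{equation}
S(\depspace_\cA) = \sum_{x \in \depspace_\cA} \scorder(x)\, I(x) = \sum_{x \in \depspace_\cA} \left( \sum_{\substack{a \in A \\ x \text{ includes } a}} \significance(a) \right) I(x).
\end{equation}
Because $A$ and $\depspace_\cA$ are finite, this double sum may be reorganized, grouping terms by the component $a$ rather than by the dependency $x$:
\begin{equation}
\label{eq:SIswap}
S(\depspace_\cA) = \sum_{a \in A} \significance(a) \sum_{\substack{x \in \depspace_\cA \\ x \text{ includes } a}} I(x).
\end{equation}

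The key step is to recognize the inner sum in~\eqref{eq:SIswap}. By definition, $\delta(\{a\})$ is the set of irreducible dependencies that involve $a$; since every irreducible dependency either includes or excludes each component, ``involving $a$'' coincides with ``including $a$'', so $\{x \in \depspace_\cA : x \text{ includes } a\} = \delta(\{a\})$. Applying~\eqref{eq:infosum} with $U = \{a\}$ then gives $\sum_{x \in \delta(\{a\})} I(x) = H(a)$, and substituting this into~\eqref{eq:SIswap} yields $S(\depspace_\cA) = \sum_{a \in A} \significance(a) H(a)$, as claimed.

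The only point demanding any care — the closest thing to an obstacle — is this identification of the inner index set with $\delta(\{a\})$: one must confirm that it contains exactly the irreducible dependencies that \emph{include} $a$, and not, say, those that merely mention $a$ among their excluded components. This is immediate from the definition of $\delta$ on a component (as illustrated by the example $\delta(a) = \{(a;b;c),(a;b|c),(a;c|b),(a|b,c)\}$) together with the irreducibility dichotomy. Once that is settled, the rest is the bookkeeping of a finite rearrangement plus a single application of the defining equations, and no calculation beyond~\eqref{eq:SIswap} is needed.
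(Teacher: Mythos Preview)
Your proof is correct and follows essentially the same approach as the paper's own proof: substitute the definition of $\scorder(x)$, interchange the order of summation, and recognize the inner sum as $\sum_{x \in \delta(\{a\})} I(x) = H(a)$ via~\eqref{eq:infosum}. The paper's version is terser, but the logical structure is identical.
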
 

The total scale-weighted information, $S(\depspace_\cA)$, can thus be
considered a conserved quantity.  Its value does not change if the
system is reorganized or restructured.  This property arises directly
from the fact that scale-weighted information counts redundant
information according to its multiplicity; thus,
changes in information overlaps do not change the total.


\section{Quantitative indices of structure} \label{Indices of Structure}

Our definition of system structure as the amounts of information in
each of a system's dependencies presents practical difficulties for
implementation, in that the number of quantities grows combinatorially
with the number of system components.  It is therefore important to
have measures that summarize a system's structure.  Here we discuss
two such measures: the \emph{complexity profile} \cite{baryam2004b}
and a new measure, the \emph{marginal utility of information}.

\subsection{Complexity profile}
\label{sec:ComplexityProfile}

The complexity profile concretizes the observation that a complex
system is one which exhibits structure at multiple scales
\cite{baryam2003, baryam2012}.  The complexity profile of a system
$\cA$ is defined as a real-valued function $C_\cA(y)$ on the positive
real numbers whose value equals the total amount of information of
scale~$y$ or higher in~$\cA$:
\begin{equation}
\label{eq:Cdef}
C_\cA(y) = I \big(\{x \in \depspace_\cA\; : \; \scorder(x) \geq y\} \big).
\end{equation}
The complexity profile reveals the levels of interdependence in a
system. For systems where components are highly independent, $C(0)$ is
large and $C(y)$ decreases sharply in~$y$, since only small amounts of
information apply at large scales in such a system.  Conversely, in
rigid or strongly interdependent systems, $C(0)$ is small and the
decrease in~$C(y)$ is shallower, reflecting the prevalence of
large-scale information, as shown in in Figure 7.  We plot the
complexity profiles of our four running examples in Figure
\ref{fig:cp-toy-examples}.

\begin{figure}
\includegraphics[width=8cm]{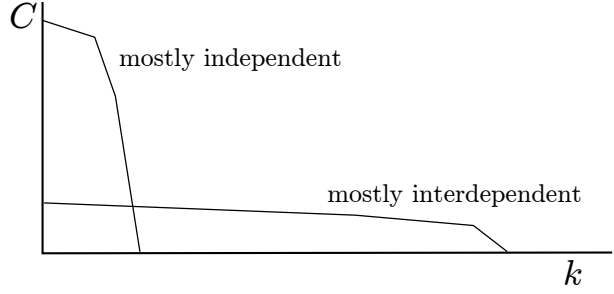}
\caption{\label{fig:CP} Complexity profiles for two systems, one whose
  components are largely independent of one another, and one whose
  components are strongly interdependent.}
\end{figure}

The complexity profile satisfies the following properties:
\begin{enumerate}
\item \emph{Conservation law:} The area under $C(y)$ is equal to the
  total scale-weighted information of the system, and is therefore
  independent of the way the components depend on each other
  \cite{baryam2004b}:
 \begin{equation}
 \int_0^\infty C(y) \; dy = S(\depspace_\cA).
\label{eq:cp-conservation}
 \end{equation} 
 This result follows from the conservation law for scale-weighted
 information, Theorem \ref{totalSI}, as shown in
 Appendix~\ref{app:conservation}.
\item \emph{Total system information:} At the lowest scale $y = 0$,
  $C(y)$ corresponds to the overall joint information: $C(0)=H(A)$.
  In particular, if Shannon entropy is the information function in
  question, then $C(0)$ is the Shannon entropy of the joint
  probability distribution for all the system's degrees of freedom.
  For physical systems, this is the total entropy of the system, in
  units of information rather than the usual thermodynamic units.
\item \emph{Largest scale of dependency:} If there are no interactions
  or correlations of scale $k$ or higher---formally, if $I(x)=0$ for
  all dependencies of scale greater than or equal to~$k$---then
  $C(y)=0$ for~$y \geq k$.  That is, the complexity profile vanishes
  for scales larger than the largest scale of organization within the
  system.
\item \emph{Additivity:} If a system $\cA$ is the union of two
  independent subsystems $\cB$ and $\cC$, the complexity profile of
  the full system is the sum of the profiles for the two subsystems,
  $C_\cA(y)=C_\cB(y) + C_\cC(y)$.  We prove this additivity property
  from the basic axioms of information functions in
  Appendix~\ref{app:additivity}.
\end{enumerate}

\begin{figure}[ht]
\includegraphics[width=8cm]{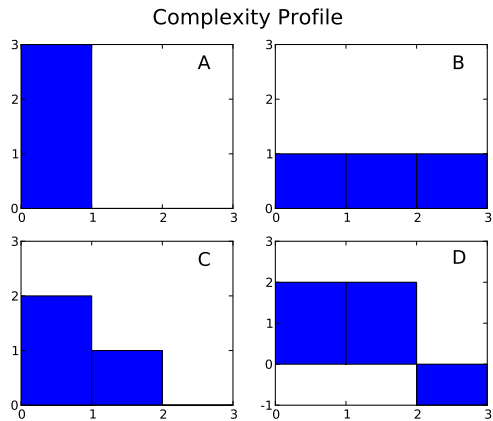}
\caption{\label{fig:cp-toy-examples} Complexity profile $C(k)$ for
  examples \exA~through \exD.  Note that the total (signed) area
  bounded by each curve equals $S(\depspace_\cA) = \sum_{a \in A}
  H(a)= 3.$}
\end{figure}

Due to the combinatorial number of dependencies for an arbitrary
system, calculation of the complexity profile may be computationally
prohibitive; however, computationally tractable approximations to the
complexity profile have been developed \cite{baryam2012}.

\subsection{Marginal utility of information}
\label{sec:MUI}
 
Here we introduce an alternative measure characterizing multiscale
structure: the \emph{marginal utility of information}, denoted
$\mui(y)$.  This index quantifies how well a system can be
characterized using a limited amount of information.

To obtain this measure, we first ask how much scale-weighted
information (as defined in Section \ref{Significant Information}) can
be represented using $y$ or fewer units of information.  We call this
quantity the \emph{maximal utility of information,} denoted $\oui(y)$,
and rigorously define it below.  For small values of~$y$, an optimal
characterization will convey only large-scale features of the system.
As $y$ increases, smaller-scale features will be progressively
included in the description.  For a given system $\cA$, the maximal
amount of scale-weighted information that can be represented,
$\oui(y)$, is constrained not only by the information limit $y$, but
also by the pattern of information overlaps in~$\cA$---that is, the
structure of~$\cA$.  More strongly interdependent systems allow for
larger amounts of scale-weighted information to be described using the
same amount of information $y$.
  
We define the marginal utility of information as the derivative of
maximal utility: $\mui(y) = \oui'(y)$.  $\mui(y)$ quantifies how much
scale-weighted information each additional unit of information can
impart.  The value of~$M(y)$, being the derivative of scale-weighted
information with respect to information, has units of scale.

The marginal utility of information has many properties similar to
those of the complexity profile, but with the axes reversed: the
argument of~$\mui(y)$ is information, while the value of~$\mui(y)$ has
units of scale.  Indeed, we show in Section \ref{Independent
  Subsystems} that, for a class of particularly simple systems, the
marginal utility of information and the complexity profile are
generalized inverses of each other.  $\mui(y)$ declines steeply for
rigid or strongly interdependent systems, and shallowly for weakly
interdependent systems.

We now develop the mathematical definition of the maximal utility
$U(y)$.  We call any entity $d$ that imparts information about system
$\cA$ a \emph{descriptor} of $\cA$.  The utility of a descriptor 
will be defined as a quantity of the form
\begin{equation}
\label{eq:utilitymaximand}
u=\sum_{a \in A} \significance(a) I(d; a).
\end{equation}
For this to be a meaningful expression, we consider each
descriptor $d$ to be an element of an augmented system $\cA^\dag=(A^\dag, H_{\cA^\dag})$,
whose components include $d$ as well as the original components of
$\cA$, which is a subsystem of $\cA^\dag$.  The amount of information that $d$ conveys
about any subset $V \subset A$ of components is given by
\begin{equation}
\begin{split}
I(d;V) & = I_{\cA^\dag}(d;V)\\  
& = H_{\cA^\dag}(d) + H_{\cA^\dag}(V)
 - H_{\cA^\dag} \big(\{d\} \cup V \big).
 \end{split}
\end{equation}
 For example, the amount that $d$ conveys about a component $a \in A$
 can be written $I(d;a)=H(d)+H(a)-H(d,a)$. $I(d;A)$ denotes the
 total information $d$ imparts about the system.  Because the original
 system $\cA$ is a subsystem of~$\cA^\dag$, the augmented information
 function $H_{\cA^\dag}$ coincides with~$H_\cA$ on subsets of~$A$.  
 
The quantities $I(d;V)$ are constrained by the structure of $\cA$ and the laws of information theory.  Applying the axioms of information functions to~$H_{\cA^\dag}$, we
 arrive at the following constraints on $I(d;V)$:
 \renewcommand{\labelenumi}{(\roman{enumi})}
\begin{enumerate}
\item $0 \leq I(d;V) \leq H(V)$ for all subsets $V \subset A$.
\item For any pair of nested subsets $W \subset V \subset A$, $ 0 \leq
  I(d;V)-I(d;W) \leq H(V)-H(W)$.
\item For any pair of subsets $V,W \subset A$, 
\begin{multline*}
I(d;V)+I(d;W) - I(d;V \cup W) - I(d;V \cap W)\\ \leq H(V) + H(W) - H(V
\cup W) - H(V \cap W).
\end{multline*}
\end{enumerate}
 
To obtain the maximum utility of information, we interpret the values $I(d;V)$ as variables subject to the above constraints. We define $U(y)$ as the maximum value of the utility expression, Eq.~\eqref{eq:utilitymaximand}, as $I(d;V)$ vary
subject to constraints~(i)--(iii) and that the total information $d$ imparts about $\cA$ is less than or
equal to~$y$: $I(d;A) \leq y$.  

$\oui(y)$ characterizes the maximal
amount of scale-weighted information that could in principle be conveyed about $\cA$ using $y$ or less units of information, taking into account the information-sharing in $\cA$ and the fundamental constraints on how information can be shared.  

$U(y)$ is well-defined since it is the maximal value of a linear
function on a bounded set.  Moreover, elementary results in linear
programming theory \cite{wets1966} imply that $\oui(y)$ is piecewise
linear, increasing and concave in~$y$.  It follows that $\mui(y)$ is
piecewise constant, positive and nonincreasing.

The marginal utility of information satisfies four properties
analogous to those satisfied by the complexity profile:
 \renewcommand{\labelenumi}{\arabic{enumi}.}
\begin{enumerate}
\item \emph{Conservation law:} The total area under the curve
  $\mui(y)$ equals the total scale-weighted information of the
  system:
\begin{equation}
\int_0^\infty \mui(y) \, dy = S(\depspace_\cA).
\label{eq:mui-conservation}
\end{equation}
This property follows from the observation that, since $\mui(y)$ is
the derivative of~$U(y)$, the area under this curve is equal to the
maximal utility of any descriptor, which is equal to~$S(\depspace_A)$
since utility is defined in terms of scale-weighted information.
\item \emph{Total system information:} The marginal utility vanishes for information values larger than the total system information, $\mui(y) = 0$ for $y > H(A)$, since, for higher values, the system has already been fully described. 
\item \emph{Largest scale of dependency:} If there are no interactions
  or correlations of degree $k$ or higher---formally, if $I(a_1;
  \ldots; a_k)=0$ for all collections $a_1, \ldots, a_k$ of~$k$
  distinct components---then $\mui(y)\leq k$ for all $y$.
\item \emph{Additivity:} If $\cA$ separates into independent
  subsystems $\cB$ and $\cC$, then
\begin{equation}
\label{eq:oui-additivity}
U_\cA(y) = \max_{\substack{y_1 + y_2 = y\\y_1,y_2 \geq 0}} \left(U_\cB(y_1) + U_\cC(y_2) \right).
\end{equation}
The proof follows from recognizing that, since information can apply either to $\cB$ or to $\cC$ but not both, an optimal description allots some amount $y_1$ of information to subsystem $\cB$, and the rest, $y_2 = y-y_1$, to subsystem $\cC$.  The optimum is achieved when the total maximal utility over these two subsystems is maximized.  Taking the derivative of both sides and invoking the concavity of $U$ yields a corresponding formula for the marginal utility $M$:
\begin{equation}
\mui_\cA(y) = \min_{\substack{y_1 + y_2 = y\\y_1,y_2 \geq 0}} \max
\big \{ \mui_\cB(y_1), \mui_\cC(y_2) \big \}.
\label{eq:mui-additivity}
\end{equation}  
Detailed proofs of Eqs.~\eqref{eq:oui-additivity} and \eqref{eq:mui-additivity} are provided in Appendix~\ref{app:mui}.   This additivity property can also be expressed as the reflection (generalized inverse) of $\mui$.  For any piecewise-constant, nonincreasing function $f$, we define the reflection $\tilde{f}$ as
\begin{equation}
\tilde{f}(x)  = \max\{y:f(y) \leq x \}.
\end{equation}
A generalized inverse \cite{baryam2012} is needed since, for piecewise constant functions, there exist $x$-values for which there is no $y$ such that $f(y)=x$.  For such values, $\tilde{f}(x)$ is the largest $y$ such that $f(y)$ does not exceed $x$.  This operation is a reflection about the line $f(y) = y$, and applying it twice recovers the original function. If $\cA$ comprises independent subsystems $\cB$ and $\cC$, the additivity property,  Eq.~\eqref{eq:mui-additivity}, can be written in terms of the reflection as 
\begin{equation}
\label{eq:mui-additivityreflection}
\tilde{\mui}_\cA(x) = \tilde{\mui}_\cB(x) + \tilde{\mui}_\cC(x).
\end{equation}
Eq.~\eqref{eq:mui-additivityreflection} is also proven in  Appendix~\ref{app:mui}.
\end{enumerate}

\begin{figure}[ht]
\includegraphics[width=8cm]{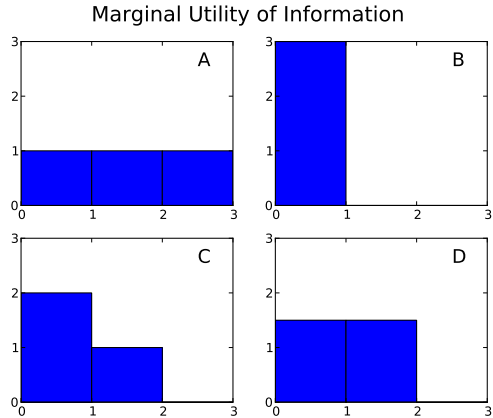}
\caption{\label{fig:mui-toy-examples} Marginal Utility of Information
  for examples \exA\ through \exD.  The total area under each curve is
  $\int_0^\infty \mui(y)\, dy = S(\depspace) = 3$.  Note that in
  examples \exA, \exB\ and \exC, the MUI is the reflection of the
  complexity profile shown in Figure~\ref{fig:cp-toy-examples}.  In
  examples \exA, \exB\ and \exC, the MUI curves can be thought of as
  sums of rectangles, one rectangle for each independent subsystem.}
\end{figure}

The MUI curves for our four running examples are shown in
Figure~\ref{fig:mui-toy-examples}.  Each curve is completely
determined by the dependency space of that system.  In each of the
four examples, the conservation law Eq.~(\ref{eq:mui-conservation})
implies that the total area under the MUI curve is 3.  We can deduce
from the ``largest scale of dependency'' property that for example
\exA, $\mui(y) \leq 1$ for all~$y$.  This suggests that the MUI curve
for example \exA{} should be a horizontal line at $\mui(y) = 1$ for $0
\leq y < 3$.  We can confirm this using the additivity property,
Eq.~(\ref{eq:mui-additivity}), because example \exA{} is a set of
three independent subsystems of one component each.  In
example \exB{}, a set of three fully correlated components, the
largest scale of dependency implies an upper bound on the MUI
of~$\mui(y) \leq 3$.  We deduce that the MUI curve for example \exB{}
should be $\mui(y) = 3$ for~$0 \leq y < 1$: describing one component
describes them all, so any descriptor having an information content of
1 or more can describe the whole system.  Example \exC{} can be broken
down into two independent subsystems, one of a single component and
the other consisting of a fully correlated pair.  Providing
information about the pair yields a higher return on investment, in
terms of scale-weighted information, than describing the isolated
component.  The MUI curve of example \exC{} is thus a horizontal line
$\mui(y) = 2$ for~$0 \leq y < 1$, which drops discontinuously
to~$\mui(y) = 1$ for~$1 \leq y < 2$, and falls to zero thereafter.

The most interesting case is the parity bit system, example \exD.
Symmetry considerations imply that a descriptor of maximal utility
conveys an equal amount of information about each of the three
components $a$, $b$ and $c$.  Constraints~(i)--(iv) then yield that
the amount described about each component must equal $y/2$ for~$0 \leq
y \leq 2$, and 1 for~$y>2$.  Thus the maximal utility is $U(y) = 3y/2$
for~$0 \leq y \leq 2$, and 3 for~$y>2$, and the marginal utility of
information is
\begin{equation}
M(y)=U'(y)= \begin{cases} \frac{3}{2} & 0 \leq y \leq 2\\
0 & y>2.
\end{cases}
\label{eq:MUI-exD-maintext}
\end{equation}
More generally, if an $N$-component system has a constraint which
manifests at the largest scale, and if the structure is symmetric as
it is in example~\exD, then
\begin{equation}
M(y)=U'(y)= \begin{cases} \frac{N}{N-1} & 0 \leq y \leq N-1\\
0 & y>N-1.
\end{cases}
\end{equation}
A detailed derivation is provided in Appendix \ref{app:MUIpbit}.  

The information overlaps
among the three components of example~\exD{} and the optimal
descriptor with information $y$ is illustrated in Figure~\ref{fig:mui-example-d}.  The
marginal utility of information $M(y)$  captures
the intermediate level of interdependency among components in the
parity bit system, in contrast to the maximal independence and maximal
interdependence in examples \exA~and \exB, respectively
(Figure~\ref{fig:mui-toy-examples}).

\begin{figure}[ht]
\includegraphics[width=8cm]{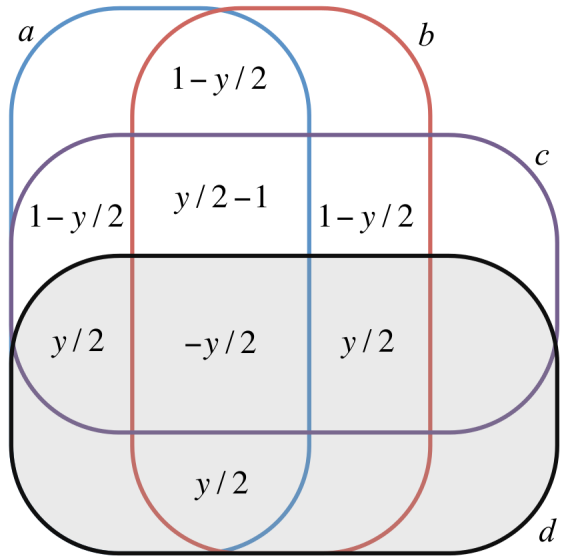}
\caption{\label{fig:mui-example-d} Information overlaps in the parity
  bit system, example \exD, augmented with a descriptor $d$ having
  information content $y \leq 2$ and maximal utility.  The amounts of
  information in the region corresponding to~$d$ sum to~$y$, the
  overall information of~$d$. The amounts of overlapping information
  among components $a$, $b$ and $c$, summing over the regions
  described and not described by~$d$, correspond to the amounts of
  information overlap in the original parity bit system, which are
  shown in Figure \ref{fig:paritybit}.}
\end{figure}

The idea of descriptors provides insight into negative values of
mutual information, as discussed in Section \ref{InformationQuantity}.
In example \exD, the tertiary shared information $I(a;b;c)$ is
negative.  Suppose there were a descriptor $d$ which applied only to
the irreducible dependency $a;b;c$ and not to any other irreducible
dependency.  That is, suppose $I_{\cA_\dag} (d;a;b;c)=-1$ and
$I_{\cA_\dag}(d;x)=0$ for any irreducible dependency $x$ of~$\cA$
other than $a;b;c$.  Then the total information in $d$, which equals
the sum of $d$'s shared information with all irreducible dependencies
of $\cA$, would be negative one: $h_{\cA_\dag}(d)=-1$.  This negative
information is impossible according to our axioms.  Thus the
(negative) amount of shared information associated with the
triple-overlap region cannot be described on its own.  It can,
however, be described implicitly as other aspects of the system are
described.  For instance, a \emph{complete} description of the parity
bit system, which contains full information about all three components
($I_{\cA_\dag}(d;a)=I_{\cA_\dag}(d;b)=I_{\cA_\dag}(d;c)=1$),
implicitly contains all the information assigned to the dependency
$a;b;c$.  The presence of information which can only be described
implicitly, rather than directly, has a physical meaning which we
explore in Section~\ref{Multiscale Cybernetic Thermodynamics}.

The MUI is closely connected to a number of other important quantities
studied in different fields of science, a point we will examine in the
Discussion section.


\section{Combinatorics of the Complexity Profile}
\label{Combinatorics of the Complexity Profile}
Previous works have developed and applied an explicit formula for the
complexity profile \cite{baryam2004a, baryam2004b, baryam2004c,
  sgs2004, metzler2005}.  This formula applies to the case that all
components have equal intrinsic scales.  To construct this formula, we
first define the quantity $Q(j)$ as the sum of the joint information
of all collections of $j$ components:
\begin{equation}
Q(j) = \sum_{i_1,\ldots,i_j} H(a_{i_1},\ldots,a_{i_j}).
\label{eq:Q}
\end{equation}
The complexity profile can then be expressed as
\begin{equation}
C(k) = \sum_{j = N-k}^{N-1} (-1)^{j+k-N} 
\binom{j}{j+k-N}
       Q(j+1),
\label{eq:CP}
\end{equation}
where $N=|A|$ is the number of components in $\cA$ \cite{baryam2004b,
  baryam2004c}.  The coefficients in this formula can be inferred from the
inclusion-exclusion principle \cite{erickson1996}.  Equation~\eqref{eq:CP} provides a method for
computing the complexity profile for any system from the values $H(U)$ of the information function.

To relate Eq.~(\ref{eq:CP}) to the properties of the complexity profile discussed in
Section~\ref{sec:ComplexityProfile}, we consider an arbitrary system $\cA$ of three components, $A= \{a,b,c\}$.  At scale $k=1$, Eq.~\eqref{eq:CP} gives
\begin{equation}
C(1) = Q(3) = H(a,b,c).
\end{equation}
We note that $C(1)$ equals the total information in $\cA$, consistent with Property 2 of Section \ref{sec:ComplexityProfile}.  At scale 2,
\begin{align*}
C(2) & = Q(2)-2Q(3)\\
& = H(a,b) + H(a,c) + H(b,c) - 2H(a,b,c)\\
& = I(a;b|c) + I(a;c|b) + I(b;c|a) + I(a;b;c).
\end{align*}
From the final expression above, it can be seen that $C(2)$ equals the total information in all dependencies
of scale 2 or higher.  We observe that this quantity vanishes if all variables are independent.  Finally, for scale 3,
\begin{align}
C(3) & = Q(1) - Q(2) + Q(3)\nonumber\\
& = H(a) + H(b) + H(c) - H(a,b) - H(a,c) - H(b,c) + H(a,b,c)\nonumber\\
& = I(a;b;c).\label{eq:C3}
\end{align}
Thus $C(3)$ returns only the information that is shared among all
three variables (which may be negative, as in running example $\exD$).
Again, for independent variables, $C(3)$ vanishes.


\section{Special Classes of Systems}
\label{Special Classes of Systems}

\subsection{Independent collection of intradependent blocks}
\label{Independent Subsystems}

One important special class of systems is those which break down into
independent subsystems (``blocks'') such that all components within
each block are entirely interdependent.  Examples \exA{}, \exB{} and
\exC{} all have this property.  In example \exC{}, components $a$ and
$b$ are in one block and component $c$ is in another.  For such
systems, the complexity profile and the marginal utility of
information can both be easily computed and are related to each other
in a simple manner.

In the simplest case, the entire system comprises a single block. Example \exB{} is such a system, in that the state of any one bit determines the state of all bits.   More generally, for any system $\cA=(A, H_\cA)$ which comprises a single block, each nonempty subset of components contains complete information about the system: $H(V)=H(A)$ for all nonempty $V \subset A$.  Using the definition of the complexity profile, we find that $C(x)$ has constant value $H(A)$ for all $0 \leq x \leq \sigma$ and is zero for $x>\sigma$, where $\sigma$ is the total scale of all components.  We can express $C(x)$ using a step function:
\begin{equation}
\label{eq:CPoneblock}
C(x) = H(A) \Theta(\sigma-x),
\end{equation}
where the $\Theta(y)$ has value 1 for $y \geq 0$ and 0 otherwise.

To compute the marginal utility of information for such a system, we observe that a descriptor with maximal utility will have $I(d;V)= \min\{y, H(A)\}$ for each subset $V \subset A$ and each value of the informational constraint $y$.  From this it follows that 
\begin{equation}
\label{eq:MUIoneblock}
\mui(y) = \sigma \Theta \big(H(A)-y \big).
\end{equation}
We observe that the reflection (generalized inverse; see Section \ref{sec:MUI}) $\tilde{\mui}(x)$ of $\mui(x)$ coincides exactly
with $C(x)$.

More generally, we can consider a system which comprises $m$ independent blocks.  A block is defined as a subsystem $\cB = (B, H_\cB)$ with the property that $H(V)=H(B_i)$ for each nonempty $V \subset B_i$.  Suppose $\cA$ is the disjoint union of blocks $\cB_i = (B_i, H_{\cB_i})$ for $i=1, \ldots, m$ which are independent as subsystems (see Section \ref{sec:Independence}).  Then additivity over independent subsystems (Property 4 in Sections \ref{sec:ComplexityProfile} and \ref{sec:MUI}), together with Eqs.~\eqref{eq:CPoneblock} and \eqref{eq:MUIoneblock}, implies that
\begin{equation}
C(x) = \tilde{\mui}(x) = \sum_{i=1}^m H(B_i) \Theta(\sigma_i-x),
\end{equation}
where $\sigma_i$ is total scale of components in block $\cB_i$.

We have thus established the following \emph{reflection principle} for systems of this type:
\begin{theorem}
\label{reflection}
For any system $\cA$ composed of independent blocks, the complexity profile and the MUI are reflections of each other:
\begin{equation}
C(x) = \tilde{\mui}(x).
\end{equation}
\end{theorem}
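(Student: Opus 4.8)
The plan is to establish the single-block case directly from the defining relations of the indices, observe that the reflection operation converts the step function for $\mui$ into the one for $C$, and then glue the general case together using the additivity properties already recorded for both indices. So first I would treat a single block: suppose $\cA=(A,H_\cA)$ satisfies $H(V)=H(A)$ for every nonempty $V\subset A$. Feeding each proper subset $U\subsetneq A$ into \eqref{eq:infosum} and using $H(U)=H(A)$, one shows---by inclusion-exclusion over the sets of included components---that all of the shared information is concentrated in the unique top-scale irreducible dependency $a_1;\dots;a_N$, which then carries $I=H(A)$ while every other irreducible dependency carries $I=0$. Since that dependency has scale $\sigma=\sum_{a\in A}\significance(a)$, the definition \eqref{eq:Cdef} gives $C_\cA(x)=H(A)\,\Theta(\sigma-x)$, as in \eqref{eq:CPoneblock}.

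For the MUI of a single block I would take as candidate optimal descriptor the values $I(d;V)=\min\{y,H(A)\}$ on nonempty $V\subset A$, verify that these satisfy constraints~(i)--(iii), and verify that no descriptor does better: constraint~(i) forces $I(d;a)\le H(a)=H(A)$, and constraint~(ii) applied to $\{a\}\subset A$ forces $I(d;a)\le I(d;A)\le y$, so $u=\sum_{a\in A}\significance(a)I(d;a)\le\sigma\min\{y,H(A)\}$, with equality for the candidate. Hence $U_\cA(y)=\sigma\min\{y,H(A)\}$ and $\mui_\cA(y)=\sigma\,\Theta(H(A)-y)$, as in \eqref{eq:MUIoneblock}. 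A short one-variable computation of $\tilde f(x)=\max\{y:f(y)\le x\}$ for the two-level staircase $f(y)=\sigma\,\Theta(H(A)-y)$ returns $H(A)\,\Theta(\sigma-x)$---the only delicate point being the boundary convention on $\Theta$ that makes reflection an involution---so $C_\cB=\tilde{\mui}_\cB$ for every single block $\cB$.

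Now let $\cA$ be the disjoint union of independent blocks $\cB_1,\dots,\cB_m$. Additivity of the complexity profile over independent subsystems (Property~4 of Section~\ref{sec:ComplexityProfile}, proved in Appendix~\ref{app:additivity}), iterated over the $m$ blocks, gives $C_\cA=\sum_{i=1}^m C_{\cB_i}$; and the reflected form of MUI additivity, \eqref{eq:mui-additivityreflection} (proved in Appendix~\ref{app:mui}), iterated likewise, gives $\tilde{\mui}_\cA=\sum_{i=1}^m\tilde{\mui}_{\cB_i}$. Since each $\cB_i$ is a single block, the previous step gives $C_{\cB_i}=\tilde{\mui}_{\cB_i}$ for every $i$, and summing over $i$ yields $C_\cA=\tilde{\mui}_\cA$, as claimed.

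The main obstacle I anticipate is the single-block treatment of the MUI, and in particular checking that the candidate descriptor values are jointly admissible under all of constraints~(i)--(iii), so that $U_\cA(y)$ really equals $\sigma\min\{y,H(A)\}$ rather than merely being bounded above by it; every other step is a one-variable computation or an appeal to a property already established in the excerpt. It is worth noting that the reflection operation is not additive for general nonincreasing step functions, so the gluing step genuinely relies on the special identity \eqref{eq:mui-additivityreflection} for independent subsystems and not on any naive linearity of reflection; as a built-in consistency check, the conservation laws \eqref{eq:cp-conservation} and \eqref{eq:mui-conservation} force $\int_0^\infty C_\cA(x)\,dx=\int_0^\infty\mui_\cA(y)\,dy=S(\depspace_\cA)=\sum_{a\in A}\significance(a)H(a)$, which the reflection respects.
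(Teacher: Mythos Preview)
Your proposal is correct and follows essentially the same route as the paper: compute $C$ and $\mui$ explicitly for a single block to obtain the step functions \eqref{eq:CPoneblock} and \eqref{eq:MUIoneblock}, verify they are reflections of one another, and then invoke additivity of $C$ (Property~4 of Section~\ref{sec:ComplexityProfile}) and the reflected additivity of $\mui$ (Eq.~\eqref{eq:mui-additivityreflection}) to assemble the general case. You supply more detail than the paper on the single-block steps---in particular the inclusion-exclusion argument localizing all shared information to the top dependency, and the admissibility and optimality checks for the candidate descriptor---but the overall architecture is identical.
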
 

This relationship between $C(x)$ and $M(y)$ does not hold for every system.  We show in Appendix~\ref{app:MUIpbit} that $C(x)$ and $M(y)$ are not reflections of each other in the case of example \exD, and, more generally, for a class of systems that exhibit negative information.  

\subsection{Systems with exchange symmetry among components}

We can simplify the equations for the complexity profile for
  systems which have exchange symmetry---all subsets having
  the same number of components contain the same amount of
  information.   Formally, for each set $U \subset A$,
the information of $U$ is a function of the cardinality $|U|$,  written as a subscript,
$H(U) = H_{|U|}$.  Examples $\exA$, $\exB$ and $\exD$
satisfy this constraint, but example $\exC$ does not.

The monotonicity axiom, defined in Section~\ref{Information},
implies that $H_k \leq H_{k+1}$.  Furthermore, the strong
subadditivity axiom, Eq.~(\ref{eq:strong-subadditivity}), implies that
if we take the sets $U = \{a,b\}$ and $V = \{b,c\}$, then
\begin{equation}
H_3 -H_2 \leq H_2 - H_1.
\end{equation}
It is easy to verify that this inequality holds for examples $\exA$,
$\exB$ and $\exD$.  For a symmetric system of $N$ components, we
have the more general ``concavity" property
\begin{equation}
H_{n+2} - H_{n+1} \leq H_{n+1} - H_n.
\end{equation}
This follows from considering the two overlapping sets $U =
\{a_1,\ldots,a_n,a_{n+1}\}$ and $V =
\{a_2,\ldots,a_n,a_{n+1},a_{n+2}\}$.  The symmetry condition lets us
write $H(U) = H(V) = H_{n+1}$, while the information of their union is
$H(U \cup V) = H_{n+2}$ and that of their intersection is $H(U \cap V)
= H_n$.  From this concavity property, it follows that if $H_{j+1} =
H_j$ for some $j$, then $H_k = H_j$ for all $k \in \{j,\ldots,N\}$;
that is, once the information levels off, it stays level.

Concavity is easy to verify if $H_j$ is constant, the case of complete
interdependence; or if $H_j$ is proportional to $j$, the case of complete
independence. It also is manifest in the more general situation $H_j \propto
j^\alpha$, where the ``independence parameter'' $\alpha$ interpolates
from $\alpha = 0$ (interdependence) to $\alpha = 1$ (independence).

Exchange symmetry also simplifies the form of the complexity
profile.  The result takes a particularly appealing form when stated
in terms of the information in dependencies of scale $k$ and
no higher, which we denote $D(k)$.  Recalling that the complexity
profile $C(k)$ indicates the information in dependencies of
scale $k$ and higher, we write
\begin{equation}
D(k) = C(k) - C(k+1).
\label{eq:D-of-k}
\end{equation}
The sum of $D(k)$ over all scales $k$ is $C(1)$.  As we did
for~$C(k)$, we can write a combinatorial formula for~$D(k)$:
\begin{equation}
D(k) = \sum_{j = N-k}^N (-1)^{j+k-N+1}
\binom{j}{j+k-N}
       Q(j).
\label{eq:D-of-k-combinatoric}
\end{equation}
When exchange symmetry holds, the information specific to
scale $k$ becomes
\begin{equation}
D(k) = \binom{N}{k} \sum_{l=0}^k (-1)^{l+1}
       \binom{k}{l} H_{l+N-k},
\end{equation}
while the information of scale $k$ and higher becomes
\begin{equation}
C(k) = \binom{N}{k} (-1)^{k}\sum_{l=0}^{k-1} (-1)^{l+1}
       \frac{k-l}{l+N-k+1}
       \binom{k}{l} H_{l+N-k+1}.
\end{equation}
For any fixed scale $k$, the complexity $D(k)$ is (up to a prefactor)
the binomial transform of the sequence $a_l \equiv H_{l+N-k}$.  This,
combined with the concavity property, allows
one to confirm that $D(k) \geq 0$ for~$k = 1,2$; \emph{i.e.,}
complexity can only be negative at scale $k = 3$ or higher.
Negative $D(k)$ arises from the leveling-off of the information
content $H_j$.

The binomial transform of a sequence can be rewritten using the
forward difference operator, $\diff$, whose action on a sequence
$\{a_n\}$ is given by~$(\diff a)_n = a_{n+1} - a_n$.  The complexity
$D(k)$ is given by the $k$\textsuperscript{th} finite difference
of~$H_{l+N-k}$:
\begin{equation}
D(k) = \binom{N}{k}(-1)^k (\diff^k H_{N-k})_0.
\end{equation}
Exchange symmetry among components is a reasonable and useful
simplification for some physical systems.  We discuss its relevance to
kinetic theory in Appendix~\ref{app:green}.  Previous work
  studied the complexity profile of the Ising model in the case of
  exchange symmetry \cite{sgs2004}.

\subsection{Weakly Interdependent Systems}
\label{kinetic theory}

Suppose that the components of our system are only weakly coupled, as
would be the case in a nearly-ideal gas or a magnet at high
temperature.  Then the complexity profile $C(k)$ will be rapidly
decaying, similar to example \exA, and the total scale-weighted
information of the dependency space, $S(\depspace)$, will be roughly
given by the first-scale complexity $C(1)$.  For some purposes, $C(1)$
is what we wish to obtain: for a physical system in thermal
equilibrium, $C(1)$ is the physical entropy, which connects statistics
to thermodynamics.  We now derive approximations for $S(\depspace)$
and for $C(1)$ which are useful in the weak-coupling limit.

From the conservation-law property of the complexity profile,
Eq.~(\ref{eq:cp-conservation}), we know that the total scale-weighted
information $S(\depspace)$ is the sum of $C(k)$ over all scales $k$.
Progressively improved approximations can be obtained by taking
partial sums of the form
\begin{equation}
S(\depspace) \approx \sum_{k=1}^{k_\mathrm{max}} C(k),
\label{eq:approx1}
\end{equation}
where $1 \leq k_\mathrm{max} \leq |A|$ is the degree of the
approximation.  This method is applicable when dependencies at larger
scales---binary, tertiary and so forth---become less significant even
as their number increases combinatorially.  The approach relies on
neglecting shared information at scales greater than a cutoff
$k_\mathrm{max}$, \emph{i.e.,}~large-scale dependencies among the
system components. In some circumstances, this approximation can
characterize the system behavior.

We now develop a systematic approach for approximating $C(1)$ given
quantities of shared information pertaining to progressively larger
scales.  For the first-order approximation, we neglect all shared
information pertaining to scales greater than 1, yielding
\begin{equation}
C(1) \approx S(\depspace)=\sum_{i=1}^{|A|} H(a_i).
\end{equation}
This is the first-order approximation according to
Eq.~\eqref{eq:approx1}.  We refine this approximation using the
inclusion-exclusion principle applied to the dependency space.  If
information is shared among pairs of components, the first-order
estimate of~$C(1)$ is too large.  We subtract from it the
shared information within pairwise dependencies.
\begin{equation}
C(1) \approx \sum_{i=1}^{|A|} H(a_i) - \sum_{i\neq j} I(a_i;a_j).
\end{equation}
This, in turn, undercounts the shared information content of tertiary
dependencies, so we add the tertiary mutual information summed over
all triplets, and so on.  Continuing this process, we write the
entropy $C(1)$ as the sum
\begin{equation}
\label{eq:C1series}
C(1) = \sum_{i=1}^{|A|} H(a_i) - \sum_{i\neq j} I(a_i;a_j)
       + \sum_{i\neq j\neq k} I(a_i;a_j;a_k) - \ldots
\end{equation}
Truncating this series after $k_\mathrm{max}$ terms, where $1 \leq
k_\mathrm{max} \leq |A|$, constitutes an approximation of the entropy
$C(1)$ to order $k_\mathrm{max}$.

If the system has exchange symmetry as discussed
in the previous section, then the shared information of any
dependency including $k$ components is
\begin{equation}
I_k \equiv \sum_{l=1}^k (-1)^{l+1}\binom{k}{l} H_l.
\end{equation}
With this relation, Eq.~\eqref{eq:C1series} for the joint entropy simplifies to
\begin{equation}
C(1) = \sum_{k=1}^{|A|} (-1)^{k+1}\binom{|A|}{k} I_k.
\label{eq:green-joint-info}
\end{equation}
The complete sum yields the exact value of~$C(1)$, which is the
  joint information of all components, $H_{|A|}$.  Indeed, if one
performs the entire sum over all scales, everything cancels except
$H_{|A|}$, because the binomial transform from~$\{H_l\}$
  to~$\{I_k\}$ is its own inverse.

One field where this approximation is valuable is the kinetic theory
of fluids \cite{green1952}.  Here, one is interested in approximating
the entropy $C(1)$ as well as possible given only small-scale
correlations.  Green's expansion is a method for doing this
systematically.  The terms in Green's expansion are integrals over
probability distributions involving successively larger numbers of
variables (see Appendix~\ref{app:green}).  However, the motivation for
each term, and the derivation of the coefficients, is not
straightforward.  The meaning of Green's expansion becomes clear when
the expansion is interpreted using Shannon information theory and our
multiscale formalism.  Green's expansion is
Eq.~(\ref{eq:green-joint-info}), written in the language of kinetic
theory.  Furthermore, all the coefficients in Green's entropy
expansion follow from the fact that the binomial transform is
self-inverse.  This is one example of the valuable perspective gained
by starting with a general axiomatic framework.

\section{Multiscale Cybernetic Thermodynamics}
\label{Multiscale Cybernetic Thermodynamics}
Thus far, we have considered system structure as an unchanging
quantity, and without explicit interaction of the system with its
environment.  We now build on this conceptual foundation by studying
systems influenced by their surroundings.  We consider the problem of
\emph{intentional} influences, which an agent outside a system uses to
regulate, guide or exploit that system.  Our approach enables us to
consider one of the primary limitations which intentional agents often
face.  Typically, an agent has only partial information about a system
of interest.  Furthermore, the available information may pertain to a
limited set of scales.  Our multiscale formalism allows us to express
the limitations which an agent faces in such a situation.

We consider, as a simple but illustrative example, the \emph{Szil\'ard
  engine,} a \emph{gedankenexperiment} consisting of a cylinder
immersed in a heat bath \cite{bennett1982, feynman1996, delrio2010,
  toyabe2010, koski2014, jun2014}.  Each end of the cylinder (left and
right) is a moveable piston.  In the middle of the cylinder is a
partition separating the left and right halves which can be removed
and reinserted, and somewhere within the cylinder, on one side or the
other of the partition, is a single atom.  When the Szil\'ard engine
is in thermal equilibrium with the surrounding heat bath, we can
extract useful work from it, provided we know which side of the
partition the atom is on.

The operational cycle of the Szil\'ard engine extracts energy from
information.  The engine operator (engineer) uses one bit of knowledge
about the atom's location, which side of the partition it is on, to
extract an energy $k_BT\log 2$.  After the operation, the atom is
equally likely to be on either side of the partition, so further
cycling requires gaining new knowledge about the engine's internal
configuration (and, if the engineer has a finite memory, therefore
requires erasing the prior datum within that memory
\cite{feynman1996}).  An engineer who has no knowledge of the atom's
position inside the Szil\'ard cylinder is just as likely to
\emph{expend} energy working the machine as they are to extract it, so
on average, they will obtain no useful work from the device.

The process of energy extraction from information starts with the
partition in place and engineer knowledge of which side of the
partition the atom is on.  If the atom is on the left side of the
partition, the engineer pushes the piston in from the right-hand side
without expending energy. The engineer removes the partition and the
bouncing atom pushes the cylinder back as heat flows into the cylinder
from the reservoir.  The heat flow keeps the atom at the same average
kinetic energy despite pushing the cylinder.  After the piston reaches
the right-hand end of the cylinder, the engineer re-inserts the
partition.  At this time, the atom can be anywhere within the
cylinder.  The magnitude of the energy obtained $k_BT\log 2$ is set by
the thermal energy of the heat bath, $k_B T$.  The factor of~$\log 2$
originates from the change in the spatial volume accessible to the
atom during the Szil\'ard engine cycle, which doubles.  A doubling in
volume is associated with an increase of thermodynamic entropy given
by
\begin{equation}
\Delta S = k_B \log\left(\frac{V_{\rm final}}{V_{\rm initial}}\right)
 = k_B \log 2.
\end{equation}

The information resource required to operate a Szil\'ard engine is
more properly expressed as a \emph{mutual information} between the
engine and its engineer (or control mechanism).  Consider an engineer
presented with an ensemble of $L$ Szil\'ard cylinders.  If the
configuration of each cylinder is predictable, then the engineer can
extract $L k_B T\log 2$ of energy by the end of the sequence.  If the
configurations are completely uncertain, the expected energy gain
averages to zero.  More generally, the energy gain decreases by $k_B T
\log 2$ for each cylinder for which the engineer must ask, ``Is the
atom on the left side of the partition?''  Thus, the amount of energy
which the engineer can extract from this sequence is $(L - H) k_B T
\log 2$, where $H$ is the number of yes/no questions which the
engineer must ask about the sequence~\cite{bennett1982, feynman1996}.
The number of yes/no questions about~$X$ which one can answer knowing
the value of $Y$ is their mutual information.  If the engineer has
access to a variable $Y$ which provides partial information about the
configuration of the cylinder sequence, then $H$ is reduced by the
mutual information between $Y$ and the cylinders, and the energy gain
increases proportionally.  Since this is true for an ensemble of
independent cylinders, for each cylinder in the ensemble the expected
energy gain is proportional to the available information about that
cylinder.

We can also consider multiple Szil\'ard cylinders as a single
system, which leads to a multiscale generalization.  Imagine $N$
Szil\'ard cylinders immersed in a heat bath at temperature $T$.  The
relevant property of each cylinder, the side occupied by an atom, is a
random variable.  Knowing about the positions of the atoms inside the
cylinders---that is, having a description of the $N$ system
components---allows an engineer to extract energy, at the cost of
making obsolete that knowledge.  Correlations among cylinders imply
that knowledge applicable to one is also applicable to another, so
that knowledge of one cylinder can be leveraged for a greater energy
gain.

When we characterize the configuration of a multi-cylinder Szil\'ard
engine, a natural measure of the usefulness of a descriptor is the
amount of energy we can extract from the machine using that
descriptor.  Here, the benefit of having a formalism that
characterizes the mutual information between the observer and the
system becomes apparent.  The available energy is proportional to the
utility defined in Section~\ref{sec:MUI}.  The descriptor $d$ has a
mutual information $I(d;X_i)$ with the $i$\textsuperscript{th}
cylinder of the engine.  Having this much information about cylinder
$X_i$ enables extracting from~$X_i$ a quantity of energy proportional
to the mutual information and to the thermal energy $k_BT$.  (Sagawa
and Uedo~\cite{Sagawa2012} provide an explicit protocol for extracting
the energy $k_B T I(d;X) \log 2$ in the case where the descriptor $d$
provides accurate knowledge of the cylinder $X$ with some error rate,
$\epsilon$.  The key step of the protocol is to only move the piston
partway, due to the probability of error.)

The MUI measures the amount of additional energy which can be gained
by making use of additional information. Given the ability to
choose information that one knows about the system, the additional
energy that can be gained is $\Delta E = M(y) k_BT \log 2$.

A real-world engineer working with ordinary tools can possess only
coarse-grained information about a system.  Therefore, what the
engineer can do with that system is limited.  Classical thermodynamics
is a phenomenological macroscopic treatment of this situation.  The
other extreme is the hypothetical being known as Maxwell's Demon,
which has exhaustive information about the finest-scale details of the
system.  The demon can exploit this information to extract the maximal
possible energy.  Descriptions having partial utility realize the
``continuum of positions'' \cite{fuchs2012inbook} between these two
extremes.

We can use our indices of multiscale structure to characterize what an
intentional agent can do when equipped with information that applies
to particular scales.  A single bit that is relevant at a large scale
provides the opportunity to extract a large amount of energy.  For
example, given $k$ dependent cylinders, we can extract in total
$k \times k_B T \log 2$ units of energy by acting independently on
each cylinder.  There are subtleties, however, in the macroscopic
process of extracting this energy.  If the information that is
available indicates that all cylinders are in the same state, a single
coherent action may be used to extract all the energy.  If the
cylinders are specified to alternate in some spatially structured way,
the ability to extract the energy using a coherent action requires a
mechanism to couple to that alternating structure.

More generally, we can consider engines that comprise independent
blocks of cylinders.  A multi-cylinder Szil\'ard engine of this type
is a system in which all components have the same intrinsic scale and
one bit of information apiece: $\significance(a) = \sigma$, $H(a) = 1$
for all $a \in A$.  Then $D(k)$, as defined in Eq.~(\ref{eq:D-of-k}),
is the number of blocks of size $k$.  Knowing the internal
configuration of each block requires one bit of information and
enables the extraction of $(k_BT)k\log 2$ in energy.  One block is not
correlated with another, so making use of a second block requires a
second bit of data.  In all, making use of all blocks at scale $k$
requires $D(k)$ bits and results in an energy gain $E$ given by
\begin{equation}
\frac{E}{k_BT} = D(k)k\log 2.
\end{equation}
We recall that generally the sum of $D(k)$ over all scales $k$ is
$C(1)$, which in this context is the joint Shannon information for the
entire multi-cylinder Szil\'ard engine.  Therefore, $C(1)$ is the
amount of information required in order to extract the energy from all
the blocks.

For any multi-cylinder Szil\'ard engine, even one not made of
independent blocks, if we have $C(1)$ bits of information, we can
predict the configuration of all the cylinders.  We can, therefore,
extract the maximum total amount of energy, by operating on each
cylinder in turn.  However, it is \emph{not} generally true that
$D(k)$ represents an extractable amount of energy for each value
of~$k$, even though summing $D(k)$ over all $k$ always yields $C(1)$.
If $D(k)$ is negative for some scale $k$, as in example \exD, then
there exists \emph{no} partial description which allows the Szil\'ard
engine operator to extract the energy associated with scale $k$ and no
other.  Information which can only be specified implicitly cannot be
utilized in isolation, only as part of an operation on a larger
dependency within a system.

\section{Discussion}
\label{Discussion}

\subsection{Characterizing complex systems}

Let us return to the question posed in the Introduction of how a complex system can be quantitatively defined.  Of all systems of $n$ components, with fixed values $H(a_1), \ldots, H(a_n)$ for the information of individual components,  which can be characterized as ``complex'' and what constitutes an appropriate measure of complexity?  The maximal total information $H(A)$ is achieved by letting all components be independent, so that $H(A) = H(a_1) + \ldots + H(a_n)$.  However, such a system contains no nontrivial interactions or dependencies, and is thus rather simple from a complex systems point of view.

Our formalism resolves this difficulty by emphasizing that all information applies at a particular scale.  In a system of fully independent components, information is maximized at the lowest scale but is absent at any higher scale.  In contrast, the systems of greatest interest to complex systems researchers contain information at many scales, with larger-scale information arising from redundancy in smaller-scale information.  This key property of complex systems is captured in our two indices of structure, the complexity profile and the marginal utility of information.  Both indices quantify the amount of information that applies at each scale, allowing the systems that exhibit multiscale complexity to be identified.

To illustrate this point, consider the example mentioned in the
Introduction of a box containing both a crystal and an ideal gas.  For
this system, information applies at two scales: that of the crystal
and that of the gas particles.  The complexity profile for the
contents of the box is the sum of two rectangles (\emph{i.e.,}~step
functions), one indicating the large-scale structure of the crystal
and the other the small-scale structure of the ideal gas.  By the
reflection principle, the MUI curve for this joint system is also the
sum of two rectangles.  Both indices of structure make clear that the
gas-and-crystal example lacks the multiscale organization that
distinguishes complex systems.

All systems are subject to a tradeoff in independence versus interdependence, due to the fact that larger-scale information arises from overlaps in the information pertaining to indvidual components.  This tradeoff is captured in our formalism by the conservation of the total scale-weighted information $S(\depspace_\cA)$ (Theorem \ref{totalSI}).  Both the complexity profile and the MUI reflect this tradeoff in their respective conservation laws, Eqs.~\eqref{eq:cp-conservation} and \eqref{eq:mui-conservation}.

\subsection{Negentropy}

The idea of using entropy or information to quantify structure has deep roots in physics.  One of the earliest and most influential attempts was Schr\"{o}dinger's concept of \emph{negative entropy} \cite{schrodinger1944life}, later shortened to \emph{negentropy} \cite{brillouin1953negentropy}, defined as the difference between a system's actual entropy and the maximum possible entropy of a system with the same matter, energy and volume.  Schr\"{o}dinger introduced negentropy to express the quality of order in living organisms due to their nonequilibrium nature.  Living beings are not in a state of maximum possible entropy, and negentropy is an attempt to quantify this difference.  We consider this attempt to be limited, in that it does not capture the multiscale aspect of organization present in living systems.

To make this point clear, we first define negentropy exactly within our information-theoretic framework.  Consider a physical system $\cA$ in which the information of a subset of components is defined as the joint physical entropy of these components considered together.  For convenience, assume each component has unit scale.  The maximum possible entropy of such a system---which would be attained if all components of $\cA$ were independent---is equal by Theorem \ref{totalSI} to the total scale-weighted information $S(\depspace_\cA)$.  In contrast, the actual entropy of $\cA$ is equal to the total (non-scale-weighted) information $I(\depspace_\cA)$, which can also be identified as $C(1)$.  The negentropy (\emph{i.e.,} the difference between the maximum possible and actual entropy) can thus be defined as $J(\cA) = S(\depspace_\cA)-I(\depspace_\cA)$.  We remark that negentropy is equivalent to the quantity called ``multi-information'' in network information theory \cite{studeny1998multiinformation,schneidman2003network,kolchinsky2011}. Since we have assumed that all components have scale one, the total information, $I(\depspace_\cA)$, equals the information at scale one, $C(1)$.  We can then use the conservation law, Eq.~\eqref{eq:cp-conservation}, to express negentropy in terms of the complexity profile:
\begin{equation}
J(\cA) = \sum_{k=1}^\infty C(k) - C(1) = \sum_{k=2}^\infty C(k).
\end{equation}

Negentropy represents a limited view of organization in that it treats
as the same all scales but the smallest.  According to this measure,
it is irrelevant whether a decrease of entropy arises from many bits
at scale 2 or a few bits at much larger scale.  This can be seen from
examples \exC{} and \exD{}, which both have negentropy equal to one
bit, despite having qualitatively different kinds of structure.  The
importance of the specific scale of the structure is captured by our
indices that include scale as a complementary axis to information.
Given the significance of macroscopic structure to scientific
observations of physical, biological and social systems, it seems that
a useful measure must necessarily make this distinction.

\subsection{Requisite variety}

The discipline of cybernetics, an ancestor to modern control theory,
used Shannon's information theory to quantify the difficulty of
performing tasks, a topic of relevance both to organismal survival in
biology and to system regulation in engineering.  Cyberneticist
W.\ Ross Ashby considered scenarios in which a regulator device must
protect some important entity from the outside environment and its
disruptive influences~\cite{ashby1956}.  In Ashby's examples, each
state of the environment must be matched by a state of the regulatory
system in order for it to be able to counter the environment's
influence on a protected component.  Successful regulation implies
that if one knows only the state of the protected component, one
cannot deduce the environmental influences; \emph{i.e.,} the job of
the regulator is to minimize mutual information between the protected
component and the environment.  This is an information-theoretic
statement of the idea of homeostasis.  Ashby's ``Law of Requisite
Variety'' states that the regulator's effectiveness is limited by its
own information content, or \emph{variety} in cybernetic terminology.
An insufficiently flexible regulator will not be able to cope with the
environmental variability.  A multiscale extension of Shannon
information theory provides a multiscale cybernetics, with which we
can study the scenarios in which ``that which we wish to protect'' and
``that which we must guard against'' are each systems of many
components, as are the tools we employ for regulation and
control~\cite{baryam2004a, baryam2004b, baryam2004c}.

Multiscale information theory enables us to overcome a key limitation
of the requisite variety concept.  In the examples of traditional
cybernetics~\cite{ashby1956}, each action of the environment requires
a specific, unique reaction on the part of the regulator.  This
neglects the fact that the impact which an event in the environment
has on the system depends upon the \emph{scale} of the environmental
degrees of freedom involved.  There is a great difference between
large-scale and fine-scale impacts.  Systems can deflect fine-scale
impacts without needing to specifically respond to them, while they
need to respond to large-scale ones or perish.  For example, a human
being can be indifferent to the impact of a falling raindrop, whereas
the impact of a falling rock is much more difficult to neglect, even
if specifying the state of the raindrop and the state of the rock
require the same amount of information.  An extreme case is the impact
of a molecule: air molecules are continually colliding with us, yet
the only effects we have to cope with actively are the large-scale,
collective behaviors like high-speed winds.  Ashby's Law does not make
this distinction.  Indeed, there is no framework for the discussion
due to the absence of a concept of scale in the information theory he
used: Each state is equally different from every other state and
actions must be made differently for each different environment.

Thus, in order to account for the real-world conditions, a multiscale
generalization of Ashby's Law is needed. According to such a Law, the
responses of the system must occur at a scale appropriate to the
environmental change, with larger-scale environmental changes being
met by larger-scale responses.  As with the case of raindrops
colliding with a surface, large-scale structures of a system can avoid
responding dynamically to small-scale environmental changes which
cause only small-scale fluctuations in the system.

Given a need to respond to larger-scale changes of the environment,
coarser-scale descriptions of that environment may suffice.  A
regulator that can marshall a \emph{large-scale response} can use a
coarse-grained description of the environment to counteract
large-scale fluctuations in the external conditions.  In this way,
limited amounts of information can still be useful.  To make requisite
variety a practical principle, one must recognize that information
applies to specific scales.

Ashby aimed to apply the requisite variety concept to biological
systems, as well as technological ones.  An organism which lacks the
flexibility to cope with variations in its environment dies.  Thus, a
mismatch in variety/complexity is costly in the struggle for survival,
and so we expect that natural selection will lead to organisms whose
complexity matches that of their environment.  However, ``the
environment'' of a living being includes other organisms, both of the
same species and of others.  Organisms can act and react in concert
with their conspecifics, and the effect of any action taken can depend
on what other organisms are doing at the same time~\cite{allen2014}.
In some species, such as social insects~\cite{tschinkel2014}, distinct
scales of the individual, colony and species are key features
characterizing collective action.  This suggests a multiscale
cybernetics approach to the evolution of social behavior: We expect
that scales of organization within a population---the scales, for
example, of groups or colonies---will evolve to match the scales of
the challenges which the environment presents.  Furthermore, the
concept of multiscale response applies within the individual organism
as well.  Multiple scales of environmental challenges are met by
different scales of system responses.  To protect against infection,
for example, organisms have physical barriers (\emph{e.g.,} skin),
generic physiological responses (\emph{e.g.,} clotting, inflammation)
and highly specific adaptive immune responses, involving interactions
among many cell types, evolved to identify pathogens at the molecular
level. The evolution of immune systems is the evolution of separate
large- and small-scale countermeasures to threats, enabled by
biological mechanisms for information transmission and
preservation~\cite{stacey2008}.  As another example, the muscular
system includes both large and small muscles, comprising different
numbers of cells, corresponding to different scales of environmental
challenge (\emph{e.g.,}~ pursuing prey and escaping from predators
versus chewing food)~\cite{bar2003complexity}.

\subsection{Benefits of an axiomatic formalism}

Because complex systems arise in a wide range of scientific contexts,
it is challenging to formulate consistent definitions for key
concepts.  Rooting our definitions in mathematically general axioms
for information enables our formalism to apply to a wide range of
empirical and model systems.

The axiomatic basis of our constructions also allows the unification
of ideas from different areas of mathematics and science.  For
example, the definition of independence in
Eq.~\eqref{eq:independence}---which subsumes the definitions of
independent random variables and of linearly independent vector
spaces---allows the notion of independence to be applied rigorously in
any context for which an information function is available.  A further
example is conditional independence, which is defined in probability
theory in terms of the joint distribution of three or more random
variables.  This definition implies a condition on the Shannon
entropies of the distributions involved, a condition which can be
abstracted to a more general information measure context.
Specifically, system components $a_1$ and $a_3$ are conditionally
independent given component $a_2$ if
\begin{equation}
H(a_1,a_3|a_2) = H(a_1|a_2) + H(a_3|a_2).
\label{eq:conditional-independence}
\end{equation}
This enables discussions of Markov chains, Markov random fields
\cite{steudel2010} and ``computational mechanics'' \cite{ellison2009,
  mahoney2009, crutchfield2009, crutchfield2011} to be subsumed in a
general formalism and thence applied in algorithmic, vector-spatial or
matroidal contexts.

\subsection{Approximations to the marginal utility of information}

Our new index of structure, the MUI, is philosophically similar to
data-reduction or dimensional reduction techniques like principal
component analysis, multidimensional scaling and detrended fluctuation
analysis \cite{peng1994,hu2002}; to the Information Bottleneck methods
of Shannon information theory \cite{slonim1999, shalizi2000,
  tishby2000, ziv2005}; to Kolmogorov structure functions and
algorithmic statistics in Turing-machine-based complexity theory
\cite{vereshchagin2003, gruenwald2004, vitanyi2006}; and to Gell-Mann
and Lloyd's ``effective complexity'' \cite{gellmann1996infomeasures}.
All of these methods are mathematical techniques for characterizing
the most important behaviors of the system under study.  Each is an
implementation of the idea of finding the best possible brief
description of a system, where description length is measured in bits
or by the number of coordinates employed.  However, MUI can be
formulated completely generally, in terms of our basic postulates for
information functions.  Furthermore, the MUI is by definition the
optimal such characterization.

We have defined MUI in terms of optimally effective descriptors: for
each possible amount of information invested in describing the system,
we use the descriptor which provides the best possible theoretical
return (in terms of scale-weighted information) on that investment.
However, in applied contexts, it may be difficult or impossible to
realize these theoretical maxima, due to constraints beyond those
imposed by the axioms of information functions.  It is often useful in
these contexts to consider a particular ``description scheme'', in which
descriptors are restricted to be of a particular form.  In this case,
the maximal utility we find following that description scheme can be
less than the theoretical maximal utility defined by the system's
dependency space.  We would in such a case find an approximation to
the MUI, rather than the MUI itself.

We can illustrate this issue with a straightforward description
scheme: using subsets of the system's component set as descriptors.
In this scheme, a descriptor $d$ is a set drawn from the set of all
components $A$, and the length of a descriptor is just the number of
components used in it.  Any descriptor $d$ naturally provides the full
quantity of information about the components from~$A$ which are
included in~$d$ itself.  If correlations exist among the system's
components, then the elements of~$d$ also provide information about
other components of the system.  The maximal utility \emph{possible
  within this description scheme} at a given descriptor length $x$ is
given by maximizing $\sum_{a \in A} I(d;a)$ over all possible choices
of descriptor $d$ which have length $x$.  The finite difference of
this utility curve is an approximation to the MUI.  It is not
difficult to see that for examples \exA, \exB\ and \exC\ this
approximate MUI is the same as the ideal MUI.  For example \exD,
however, the approximation and the ideal differ.  Because the
components in example \exD\ are pairwise independent, any one-component
descriptor only describes the component it mirrors, so the maximal
utility approximation at length 1 is 1.  A descriptor using two
components can describe the whole system, so the maximal utility
approximation at length 2 is 3.  Therefore, the approximated MUI in
the confines of this description scheme is a \emph{nonmonotonic}
function of descriptor length, starting at~1 and rising to~2 before
falling back to~0.  This nonmonotonicity is a consequence of the
negative information in example \exD's dependency space.
        
Other choices of description scheme are also possible.  These
description schemes bring other, more familiar quantities into the
information-theoretic framework.  Within an algorithmic context, for
example, one might study the increasing utility of algorithmic
descriptions as a function of the computational resources available.

\subsection{Mechanistic versus informational dependencies}

Our indices of structure measure multi-component relationships,
including statistical correlations among random variables.  A key
question is how causal interactions give rise to such relationships.
Importantly, causal influences at one scale can produce correlations
at another.  For example, the interactions in an Ising spin system are
pairwise in character: the interaction energy of a pair of spins is
not affected by other spins being up or down elsewhere in the lattice.
These pairwise couplings can, however, give rise to long-range
patterns \cite{kardar2007}.  Similarly, in commonly-used models of
coupled oscillators, the effect one oscillator has on another---the
force with which component $i$ pulls on component $j$---depends only
on the relative phase difference between those two oscillators, and
the total influence on an oscillator due to all the others with which
it is coupled is just the sum of their influences.  Yet, even though
the forces are dyadic, synchronization among oscillators creates
collective, coherent behavior \cite{dorogovtsev2010}.
Synchronization, in other words, creates structure on a large scale.

\subsection{Limitations of network representations}
Representing system structure by networks, a common practice in the
complex-systems field, prioritizes pairwise (scale two) relationships
and may neglect higher-scale dependencies.  Often, a network model is
formulated by computing a measure of correlation for each pair of
components, and drawing an edge between the corresponding vertices if
that correlation is statistically significant~\cite{dorogovtsev2010,
  harmon2010}.  This procedure discards information at scales three
and higher.

One way to incorporate higher-scale information into a network
representation is by changing what the presence of an edge means.  For
an illustrative example, take the case of gene regulation, in which
the system components are genes and a joint probability distribution
describes their expression levels.  We expect that if the expression
level of one gene predicts the that of another, this relationship may
be biologically significant.  We could make a straightforward network
depiction by linking the vertices corresponding to genes $a_i$ and
$a_j$ if their mutual information is large.  However, not all of these
edges will represent direct paths of biochemical influence.  Suppose
that gene $a_1$ boosts the expression of gene $a_2$, which in turn
boosts the expression of gene $a_3$.  In this case, the measured
mutual information of each of the three pairs $(a_1,a_2)$, $(a_1,a_3)$
and $(a_2,a_3)$ could be large, even though there is no direct causal
link between $a_1$ and $a_3$.  To distinuguish between direct and
indirect relationships, one must test for the \emph{conditional
  independence} of components that are correlated at the pairwise
level.  This analysis requires information on three-fold and higher
correlations.  Bayesian networks \cite{friedman2000using} are one way
of incorporating such higher-order correlations into a network model.

In contrast, our framework incorporates multiscale information not as
a tool for refining pairwise relationships, but as an important aspect
of structure in its own right.  Such multiscale information can arise
from pairwise mechanisms (as discussed above), or from causal
relationships that are intrinsically of scale three or higher. For
example, suppose the expression levels of three genes jointly satisfy
a nonlinear constraint due to joint reliance on a common precursor or
other resource.  Such a relationship may not be representable within a
Bayesian network, but can be represented in terms of information
shared among these genes.

\section{Conclusion}

Over the past century, science has made enormous strides in understanding the fundamental building blocks of physics and biology.  However, it is increasingly clear that understanding the behaviors of physical, biological and social systems requires more than a characterization of their constituent parts.  Rather, scientific progress depends on a theory of system structure.  While many conceptual elements of such a theory have been developed within the field of complex systems, a general quantitative framework has so far been elusive.

Our work aims to provide a mathematical foundation for complex systems theory, in which the fundamental concepts of dependence, scale and structure are given precise meaning.  This is achieved via an axiomatic formalism for information that generalizes classical information theory.  This formalism enables us to identify structure as the sharing of information among system components.  A system's structure can be summarized by its complexity profile or its MUI function, both of which highlight the scale of system behaviors.

Already, we have found that our framework resolves key conceptual puzzles, from the combinatorial origins of kinetic-theory expressions to the characterization of ``complex systems.''  We hope this mathematical formalism of structure can aid in the scientific transition from understanding the components of systems to understanding systems themselves.

\appendix

\section{Total scale-weighted information}
\label{app:SI}

Here we prove Theorem \ref{totalSI} of the main text, which we restate here for convenience:

\begin{thm1}
For any system $\cA$, the total scale-weighted information, $S(\depspace_\cA) = \sum_{x \in \depspace_\cA} s(x) I(x)$,  is given by the scale and information of each component, independent of the information shared among them:
\begin{equation}
S(\depspace_\cA) = \sum_{a \in A} \significance(a) H(a).
\end{equation}
\end{thm1} 

\begin{proof}
The proof amounts to a rearrangement of summations.  We begin with the definition of scale-weighted information,
\begin{equation}
S(\depspace_\cA) = \sum_{x \in \depspace_\cA} \scorder(x) I(x).
\end{equation}
Substituting the definition of $s(x)$ (main text, \ref{eq:scaledef}) and rearranging yields
\begin{align*}
S(\depspace_\cA)  
&= \sum_{x \in \depspace_\cA} \left( \sum_{\substack{a \in A \\ x \text{ includes }a}} \sigma(a) \right) I(x)\\
& = \sum_{a\in A}   \sigma(a) \sum_{\substack{x \in \depspace_\cA \\ x \text{ includes }a}} I(x)\\
& = \sum_{a \in A}  \sigma(a) I(\delta_a)\\
& = \sum_{a \in A}  \sigma(a) H(a). \qedhere
\end{align*}
\end{proof}

\section{Conservation Law for the Complexity Profile}
\label{app:conservation}
In this Appendix, we prove the conservation law for the complexity profile, Eq.~\eqref{eq:cp-conservation} of the main text. We state this law as follows:

\begin{theorem}
The area under the complexity profile of a system $\cA$ is equal to
the total scale-weighted information of $\cA$:
\begin{equation}
\int_0^\infty C(y) \; dy = S(\depspace_\cA).
\end{equation}
\end{theorem}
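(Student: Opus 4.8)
The plan is to express both sides as sums over the (finite) set $\depspace_\cA$ of irreducible dependencies and then interchange a finite sum with the integral. First I would substitute the definition of the complexity profile, Eq.~\eqref{eq:Cdef}, writing $C(y) = \sum_{x \in \depspace_\cA,\ s(x) \geq y} I(x)$, so that the integrand is a finite linear combination of indicator functions of the intervals $[0, s(x))$. Because $\depspace_\cA$ is finite, no convergence subtlety arises even though individual terms $I(x)$ may be negative, and I can pull the sum outside the integral by ordinary linearity: $\int_0^\infty C(y)\,dy = \sum_{x \in \depspace_\cA} I(x) \int_0^\infty \mathbf{1}\{s(x) \geq y\}\, dy$.

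Next I would evaluate the inner integral. Since the scale $s(x)$ of an irreducible dependency is the sum of the positive component scales $\significance(a)$ over the components included in $x$ (Eq.~\eqref{eq:scaledef}), and every irreducible dependency includes at least one component, we have $s(x) > 0$; hence $\int_0^\infty \mathbf{1}\{s(x)\geq y\}\, dy = s(x)$ (and trivially $0$ in the degenerate case $s(x)=0$, which contributes nothing to either side). Substituting gives $\int_0^\infty C(y)\, dy = \sum_{x \in \depspace_\cA} s(x) I(x) = S(\depspace_\cA)$ by the definition of scale-weighted information, Eq.~\eqref{SIdef}. This already proves the claimed identity; combining it with Theorem~\ref{totalSI} additionally gives $\int_0^\infty C(y)\, dy = \sum_{a\in A}\significance(a) H(a)$, which I would note as a corollary.

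I do not expect a genuine obstacle: the argument is just the layer-cake (Cavalieri) identity $\int_0^\infty \mu(\{s \geq y\})\, dy = \int s\, d\mu$ applied to the finite signed measure $I$ on $\depspace_\cA$. The one point deserving an explicit word is that $I$ is a \emph{signed} measure, so one cannot invoke Tonelli or monotone convergence for nonnegative integrands; finiteness of $\depspace_\cA$ is what makes the interchange of $\sum$ and $\int$ a matter of plain linearity, and I would state this explicitly so the reader sees why negativity of the tertiary shared information causes no trouble.
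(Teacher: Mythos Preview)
Your proposal is correct and follows essentially the same route as the paper: substitute the definition of $C(y)$ as a finite sum over irreducible dependencies, interchange the sum and the integral, and evaluate $\int_0^{s(x)} dy = s(x)$ to recover $\sum_{x} s(x) I(x) = S(\depspace_\cA)$. Your explicit remark that finiteness of $\depspace_\cA$ justifies the interchange despite $I$ being a signed measure is a nice clarification that the paper leaves implicit.
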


\begin{proof}
We begin by substituting the definition of $C(y)$:
\begin{align*}
\int_0^\infty C(y) \; dy &= \int_0^\infty  I \big(\{x \in \depspace_\cA\; : \; \significance(x) \geq y\} \big) \; dy\\
& = \int_0^\infty \left( \sum_{\substack{x \in \depspace_\cA \\ y \leq \significance(x)}} I(x) \right) dy.
\end{align*}
We then interchange the sum and integral on the right-hand side and apply Theorem \ref{totalSI}:
\begin{align*}
\int_0^\infty C(y) \; dy 
& = \sum_{x \in \depspace_\cA} \left( I(x) \int_0^{\significance(x)}  \; dy \right)\\
& = \sum_{x \in \depspace_\cA} \significance(x) I(x)\\
& = S(\depspace_\cA). \qedhere
\end{align*}
\end{proof}

\section{Properties of Independent Subsystems}
\label{app:independence}

Here we prove fundamental propeties of independent subsystems, which will be used to prove the additivity property of the complexity profile.  Our first target is the \emph{hereditary property of independence} (Theorem \ref{lem:indepsubsystems}), which asserts that subsystems of independent subsystems are independent \cite{perfect1981independence}.  We then establish in Theorem \ref{Icaseslem} a simple characterization of information in systems composed of independent subsystems.

For $i=1, \ldots, k$, let
$\cA_i=(A_i, H_{\cA_i})$ be subsystems of $\cA= (A,H_\cA)$, with the
subsets $A_i \subset A$ disjoint from each other.  We recall the information-theoretic definition of independent
subsystems from Section~\ref{sec:Independence}.

\begin{defn} The subsystems  $\cA_i=(A_i, H_{\cA_i})$ are \emph{independent} if 
\begin{equation*}
H(A_1 \cup \ldots \cup A_k) = H(A_1) + \ldots + H(A_k).
\end{equation*}
\end{defn}

We establish the hereditary property of independence first in the case of two subsystems (Lemma \ref{lem:twoindepsubsystems}), using repeated application of the strong subadditivity axiom. We then extend this result in Theorem \ref{lem:indepsubsystems} to arbitrary numbers of subsystems.

\begin{lemma}
If  $\cA_1$ and $\cA_2$ are independent subsystems of $\cA$, then for every pair of subsets $U_1 \subset A_1$, $U_2 \subset A_2$, $H(U_1 \cup U_2) = H(U_1) + H(U_2)$.
\label{lem:twoindepsubsystems}
\end{lemma}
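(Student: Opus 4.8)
The plan is to prove the two inequalities $H(U_1 \cup U_2) \leq H(U_1) + H(U_2)$ and $H(U_1 \cup U_2) \geq H(U_1) + H(U_2)$ separately. The first is immediate and uses nothing beyond the axioms: since $A_1$ and $A_2$ are disjoint, so are $U_1$ and $U_2$, and strong subadditivity, Eq.~\eqref{eq:strong-subadditivity}, gives $H(U_1 \cup U_2) \leq H(U_1) + H(U_2) - H(\emptyset) \leq H(U_1) + H(U_2)$, using that the information function is nonnegative. All the real work is in the reverse inequality, where the independence hypothesis $H(A_1 \cup A_2) = H(A_1) + H(A_2)$ must enter.

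The key idea is that a single application of strong subadditivity lets us ``shrink'' one of the two full subsystems down to the relevant subset. First I would apply Eq.~\eqref{eq:strong-subadditivity} to the pair of sets $A_1 \cup U_2$ and $A_2$: because $U_2 \subset A_2$, their union is $A_1 \cup A_2$ and their intersection is $U_2$, so together with independence,
\[
H(A_1) + H(A_2) = H(A_1 \cup A_2) \leq H(A_1 \cup U_2) + H(A_2) - H(U_2),
\]
which after cancelling $H(A_2)$ yields $H(A_1 \cup U_2) \geq H(A_1) + H(U_2)$. The opposite bound is again just subadditivity on the disjoint sets $A_1$ and $U_2$, so in fact $H(A_1 \cup U_2) = H(A_1) + H(U_2)$ --- the ``half-shrunk'' version of the lemma.

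Next I would apply strong subadditivity to the pair $A_1$ and $U_1 \cup U_2$. Their union is $A_1 \cup U_2$ (since $U_1 \subset A_1$), and their intersection is $U_1$ (since $U_2$ is disjoint from $A_1$ while $U_1 \subset A_1$), giving
\[
H(A_1 \cup U_2) \leq H(A_1) + H(U_1 \cup U_2) - H(U_1).
\]
Substituting $H(A_1 \cup U_2) = H(A_1) + H(U_2)$ from the previous step and cancelling $H(A_1)$ leaves $H(U_1 \cup U_2) \geq H(U_1) + H(U_2)$, which combined with the upper bound establishes the equality.

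I do not expect a genuine obstacle here; the only point requiring care is the bookkeeping of unions and intersections, all of which reduce to the disjointness of $A_1$ and $A_2$ and the containments $U_i \subset A_i$. The proof is a targeted double use of strong subadditivity, and the extension to arbitrary $k$ in Theorem~\ref{lem:indepsubsystems} will follow by iterating this two-subsystem case.
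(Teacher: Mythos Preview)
Your proof is correct and is essentially the same as the paper's: both establish the lower bound by two successive applications of strong subadditivity that ``shrink'' $A_1$ and $A_2$ down to $U_1$ and $U_2$, and then close with the trivial upper bound from subadditivity on disjoint sets. The only cosmetic difference is the order of the two shrinking steps --- you reduce $A_2$ to $U_2$ first and then $A_1$ to $U_1$, while the paper does the reverse --- and you pause to note the intermediate equality $H(A_1 \cup U_2) = H(A_1) + H(U_2)$, which the paper leaves as an inequality.
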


\begin{proof}
The strong subadditivity axiom, applied to the sets $A_1$ and $U_1 \cup A_2$, yields
\begin{equation*}
H(A_1 \cup A_2) \leq H(A_1) + H(U_1 \cup A_2) - H(U_1).
\end{equation*}
Replacing the left-hand side by $H(A_1) + H(A_2)$ and adding $H(U_1)-H(A_1)$ to both sides yields
\begin{equation}
\label{U1A2}
H(U_1) + H(A_2) \leq H(U_1 \cup A_2).
\end{equation}
Now applying strong subadditivity to the sets $U_1 \cup U_2$ and $A_2$ yields
\begin{equation*}
H(U_1 \cup A_2) \leq H(U_1 \cup U_2) + H(A_2) - H(U_2).
\end{equation*}
Combining with \eqref{U1A2} via transitivity, we have
\begin{equation*}
H(U_1) + H(A_2)\leq H(U_1 \cup U_2) + H(A_2) - H(U_2).
\end{equation*}
Adding $H(U_2)-H(A_2)$ to both sides yields
\begin{equation}
\label{eq:U1U2leq}
H(U_1) + H(U_2) \leq H(U_1 \cup U_2).
\end{equation}
But strong subadditivity applied to $U_1$ and $U_2$ yields
\begin{equation}
H(U_1 \cup U_2) \leq H(U_1) + H(U_2) - H(U_1 \cap U_2)  \leq H(U_1) + H(U_2).
\label{eq:U1U2geq}
\end{equation}
We conclude from inequalities \eqref{eq:U1U2leq} and \eqref{eq:U1U2geq} that 
\begin{equation*}
H(U_1 \cup U_2) = H(U_1) + H(U_2). \qedhere
\end{equation*}
\end{proof}

We now use an induction argument to extend the hereditary property of independence to any number of subsystems.

\begin{theorem}
\label{lem:indepsubsystems}
If  $\cA_1, \ldots, \cA_k$ are independent subsystems of $\cA$, and $U_i \subset A_i$ for $i=1, \ldots, k$ then 
\begin{equation*}
H(U_1 \cup \ldots \cup U_k) = H(U_1) + \ldots + H(U_k).
\end{equation*}
\end{theorem}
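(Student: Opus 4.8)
The plan is to prove Theorem~\ref{lem:indepsubsystems} by induction on the number $k$ of subsystems, with Lemma~\ref{lem:twoindepsubsystems} serving as the base case (the case $k=1$ being trivial). Throughout I would work only from the monotonicity and strong subadditivity axioms. As a preliminary, I would record that strong subadditivity, applied to disjoint sets, yields ordinary subadditivity $H(U \cup V) \leq H(U) + H(V)$, since $H(U \cap V) = H(\emptyset) \geq 0$ because $H$ takes nonnegative values; iterating this gives $H(A_1 \cup \ldots \cup A_m) \leq H(A_1) + \ldots + H(A_m)$ for any subcollection.

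For the inductive step I would assume the statement for $k-1$ subsystems and let $\cA_1, \ldots, \cA_k$ be independent subsystems of $\cA$. The key point is that joint independence of the whole collection forces independence of the first $k-1$ of them. Indeed, subadditivity applied to the disjoint sets $A_1 \cup \ldots \cup A_{k-1}$ and $A_k$ gives
\[
\sum_{i=1}^k H(A_i) \;=\; H(A_1 \cup \ldots \cup A_k) \;\leq\; H(A_1 \cup \ldots \cup A_{k-1}) + H(A_k),
\]
so $\sum_{i=1}^{k-1} H(A_i) \leq H(A_1 \cup \ldots \cup A_{k-1})$, which combined with the reverse inequality from iterated subadditivity forces $H(A_1 \cup \ldots \cup A_{k-1}) = \sum_{i=1}^{k-1} H(A_i)$. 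This has two consequences. First, $\cA_1, \ldots, \cA_{k-1}$ are independent subsystems, so the inductive hypothesis applies to them. Second, writing $\cB$ for the subsystem with ground set $A_1 \cup \ldots \cup A_{k-1}$, the pair $\cB$ and $\cA_k$ is independent, since $H\big((A_1 \cup \ldots \cup A_{k-1}) \cup A_k\big) = \sum_{i=1}^k H(A_i) = H(A_1 \cup \ldots \cup A_{k-1}) + H(A_k)$.

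I would then finish by chaining the two-subsystem lemma with the inductive hypothesis. Given $U_i \subset A_i$, set $V = U_1 \cup \ldots \cup U_{k-1} \subset A_1 \cup \ldots \cup A_{k-1}$. Lemma~\ref{lem:twoindepsubsystems} applied to the independent pair $\cB, \cA_k$ gives $H(V \cup U_k) = H(V) + H(U_k)$, and the inductive hypothesis applied to $\cA_1, \ldots, \cA_{k-1}$ gives $H(V) = \sum_{i=1}^{k-1} H(U_i)$; together these yield $H(U_1 \cup \ldots \cup U_k) = \sum_{i=1}^k H(U_i)$, completing the induction.

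I expect the only genuinely substantive step to be the ``regrouping'' observation that joint independence of $\cA_1, \ldots, \cA_k$ descends both to the subcollection $\cA_1, \ldots, \cA_{k-1}$ and to the grouped pair $(\cB, \cA_k)$; once that fact is established, the remainder is bookkeeping built on the already-proven $k=2$ case and the inductive hypothesis. The one minor technical caveat to handle explicitly is the reading of strong subadditivity for disjoint sets as subadditivity, which rests on the nonnegativity of $H$ (equivalently, $H(\emptyset) \geq 0$).
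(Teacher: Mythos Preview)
Your proof is correct and follows essentially the same inductive scheme as the paper: induct on $k$, use the $k=2$ lemma on the grouped pair $(\cB,\cA_k)$, and the inductive hypothesis on $\cA_1,\ldots,\cA_{k-1}$. The paper simply asserts that the grouped subsystem $\cB$ is ``clearly independent'' from $\cA_k$ and silently applies the inductive hypothesis to the first $k-1$ subsystems; you supply the subadditivity argument that actually justifies both of these steps, so your version is in fact more complete than the paper's at the one place where it matters.
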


\begin{proof}
This follows by induction on $k$.  The $k=1$ case is trivial.  Suppose inductively that
the statement is true for $k=\tilde{k}$, for some integer $\tilde{k}
\geq 1$, and consider the case $k=\tilde{k}+1$.  We have
\begin{equation*}
H(U_1) + \ldots + H(U_{\tilde{k}}) +H(U_{\tilde{k}+1}) = H( U_1 \cup \ldots \cup U_{\tilde{k}} )+H(U_{\tilde{k}+1}) 
\end{equation*}
by the inductive hypothesis, and 
\begin{equation*}
H( U_1 \cup \ldots \cup U_{\tilde{k}} )+H(U_{\tilde{k}+1})  = H( U_1 \cup \ldots \cup U_{\tilde{k}} \cup U_{\tilde{k}+1}) 
\end{equation*}
by Lemma \ref{lem:twoindepsubsystems} (since the subsystem of $\cA$ with
component set $A_1 \cup \ldots \cup A_{\tilde{k}}$ is clearly
independent from $\cA_{\tilde{k}+1}$).  This completes the proof.
\end{proof}

We now examine the information in dependencies for systems comprised of independent subsystems.  
For convenience, we introduce a new notion: The \emph{power system}
of a system $\cA$ is a system $2^\cA=(2^A, H_{2^\cA})$, where $2^A$ is
the set of all subsets of $A$ (which in set theory is called the
\emph{power set} of $A$).  In other words, the components of $2^\cA$ are the subsets of $A$.
The information function $H_{2^\cA}$ on $2^\cA$ is defined by the relation
\begin{equation}
H_{2^\cA} (U_1, \ldots, U_k ) = H_\cA(U_1\cup \ldots \cup U_k).
\end{equation}
By identifying the singleton subsets of $2^A$ with the elements of $A$
(that is, identifying each $\{a\} \in 2^A$ with $a \in A$), we can
view $\cA$ as a subsystem of $2^\cA$.

This new system allows us to use the following relation: For any
integers $k, \ell \geq 0$ and components $a_1, a_2, b_1, \ldots,
a_k, c_1, \ldots, c_\ell \in A$,
\begin{multline}
\label{mutinforelation}
I_{\cA} (a_1; a_2;  b_1; \ldots ; b_k | c_1, \ldots, c_\ell ) = 
I_{\cA} (a_1;  b_1; \ldots ; b_k | c_1, \ldots, c_\ell )\\
+ I_{\cA} (a_2;  b_1; \ldots ; b_k | c_1, \ldots, c_\ell )
- I_{2^\cA} (\{a_1, a_2\};  b_1; \ldots ; b_k | c_1, \ldots, c_\ell ).
\end{multline} 
This relation generalizes the identity
$I(a_1;a_2)=H(a_1)+H(a_2)-H(a_1,a_2)$ to conditional mutual information.  It follows directly from the
mathematical definition of $I$, Eq.~(\ref{eq:infosum}) of the main
text.

We now show that if $\cB$ and $\cC$ are independent subsystems of $\cA$, any conditional mutual information of components $\cB$ and components of $\cC$ is zero.

\begin{lemma}
\label{indepzerolem}
Let $\cB=(B, H_\cB)$ and $\cC=(C, H_\cC)$ be independent subsystems of
$\cA$.  For any components $b_1,
\ldots, b_m, b'_1, \ldots, b'_{m'} \in B$ and $c_1, \ldots, c_n, c_1',
\ldots, c'_{n'} \in C$, with $m,n \geq 1$, $m',n' \geq 0$,
\begin{equation}
I(b_1; \ldots; b_m;c_1; \ldots; c_n|b'_1, \ldots, b'_{m'}, c_1', \ldots, c'_{n'}) = 0.
\end{equation}
\end{lemma}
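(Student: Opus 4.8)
The plan is to prove the lemma by induction on $m+n$, the total number of included components, reducing each instance to strictly smaller ones by means of the conditional mutual information relation~\eqref{mutinforelation} and the power-system construction. It is convenient to phrase the inductive hypothesis so that it quantifies over \emph{all} systems and \emph{all} pairs of independent subsystems simultaneously, since the reduction will produce a term living not in $\cA$ but in its power system $2^\cA$. Write $Z = \{b'_1,\ldots,b'_{m'},c'_1,\ldots,c'_{n'}\}$ for the set of excluded components, $Z_B = \{b'_1,\ldots,b'_{m'}\}\subset B$, $Z_C=\{c'_1,\ldots,c'_{n'}\}\subset C$, so that $Z = Z_B \cup Z_C$.

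For the base case $m=n=1$, I apply~\eqref{mutinforelation} with $a_1=b_1$, $a_2=c_1$ and no further included components to get $I(b_1;c_1|Z) = I(b_1|Z) + I(c_1|Z) - I_{2^\cA}(\{b_1,c_1\}|Z)$. Each term on the right is a difference of $H$-values (a conditional information or entropy), and every joint information that appears has the form $H(U_B\cup U_C)$ with $U_B\subset B$, $U_C\subset C$; the hereditary property of independence, Theorem~\ref{lem:indepsubsystems}, lets me replace each such value by $H(U_B)+H(U_C)$. After this substitution the eight resulting terms cancel in pairs, yielding $I(b_1;c_1|Z)=0$.

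For the inductive step, let $m+n\ge 3$; by the symmetry between $B$ and $C$ I may assume $n\ge 2$. Applying~\eqref{mutinforelation} with $a_1=c_1$, $a_2=c_2$ and all remaining included components in the role of the ``$b_i$''s, and using that $I$ is symmetric in its included arguments, I obtain
\begin{multline*}
I(b_1;\ldots;b_m;c_1;\ldots;c_n|Z)
 = I_\cA(c_1;b_1;\ldots;b_m;c_3;\ldots;c_n|Z)\\
 + I_\cA(c_2;b_1;\ldots;b_m;c_3;\ldots;c_n|Z)
 - I_{2^\cA}(\{c_1,c_2\};b_1;\ldots;b_m;c_3;\ldots;c_n|Z).
\end{multline*}
Both $I_\cA$ terms on the right have $m\ge 1$ included components in $B$ and $n-1\ge 1$ in $C$, with $m+(n-1)<m+n$, hence vanish by the inductive hypothesis applied to the independent subsystems $\cB$ and $\cC$ of $\cA$. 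For the third term, I first observe that $2^\cA$ contains the subsystems $2^\cB$ and $2^\cC$ --- those whose components are, respectively, the subsets of $B$ and the subsets of $C$ --- and that these are independent subsystems of $2^\cA$: from the defining identity $H_{2^\cA}(U_1,\ldots,U_k)=H_\cA(U_1\cup\cdots\cup U_k)$ one gets $H_{2^\cA}$ of all subsets of $B$ equal to $H_\cA(B)$, likewise for $C$, and $H_{2^\cA}$ of all subsets of $B$ together with all subsets of $C$ equal to $H_\cA(B\cup C) = H_\cA(B)+H_\cA(C)$ by independence of $\cB$ and $\cC$. In the third term, the component $\{c_1,c_2\}$ together with $c_3,\ldots,c_n$ lie in $2^\cC$ ($n-1\ge 1$ included components), while $b_1,\ldots,b_m$ lie in $2^\cB$ ($m\ge 1$ of them), and the conditioning components again split between $2^\cB$ and $2^\cC$; since $m+(n-1)<m+n$, the inductive hypothesis --- now for the system $2^\cA$ with independent subsystems $2^\cB$, $2^\cC$ --- forces this term to be zero as well. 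Hence the left-hand side equals $0+0-0=0$, completing the induction.

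The routine parts are the base-case cancellation and keeping track of which components fall into which subsystem after each application of~\eqref{mutinforelation}. The step I expect to require the most care is verifying that $2^\cB$ and $2^\cC$ sit inside $2^\cA$ as independent subsystems, since this is precisely what keeps the induction running through the power-system term; the verification rests on unwinding the definition of $H_{2^\cA}$ together with Theorem~\ref{lem:indepsubsystems}. As an independent cross-check one can bypass the induction entirely: expand $I(b_1;\ldots;b_m;c_1;\ldots;c_n|Z)$ by the general inclusion--exclusion (Yeung) formula as an alternating sum of $H(S\cup Z)$ over subsets $S$ of the included components, split each $H(S\cup Z)$ with Theorem~\ref{lem:indepsubsystems} according to its $B$- and $C$-parts, and note that the resulting double sum factors, one factor being $\sum_{j}(-1)^j\binom{n}{j}=0$ since $n\ge 1$ (and symmetrically with $\sum_{j}(-1)^j\binom{m}{j}=0$).
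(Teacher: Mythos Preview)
Your inductive argument is correct and is essentially the paper's own proof: both use relation~\eqref{mutinforelation} to merge two included components from the same subsystem, note that the resulting power-system term lives in $2^\cA$ with $2^\cB$ and $2^\cC$ still independent, and recurse. The only organisational difference is that you induct on the single parameter $m+n$ and absorb arbitrary conditioning sets into the base case via a direct cancellation, whereas the paper runs a four-parameter induction starting from $m=n=1$, $m'=n'=0$ and devotes a separate step (using the decomposition $I(\cdot|Z)=I(\cdot;b'|Z)+I(\cdot|Z,b')$) to increment $m'$ and $n'$; your packaging is a little cleaner but the mechanism is identical.

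Your closing cross-check via the inclusion--exclusion (Yeung) expansion is a genuinely different and shorter route than either inductive version: once each $H(S\cup Z)$ splits by Theorem~\ref{lem:indepsubsystems}, the double sum factors and one factor is $\sum_j(-1)^j\binom{n}{j}=0$. This bypasses the power-system machinery entirely and would make a self-contained one-paragraph proof; the paper does not take this route.
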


\begin{proof}
We prove this by induction.  As a base case, we take $m=n=1, m'=n'=0$.
In this case, the statement reduces to $I(b;c)=0$ for every $b \in B$,
$c \in C$.  Since Lemma \ref{lem:twoindepsubsystems} guarantees that
$H(b,c)=H(b)+H(c)$, this claim follows directly from the identity
$I(b;c)=H(b)+H(c)-H(b,c)$.

We now inductively assume that the claim is true for all independent
subsystems $\cB$ and $\cC$ of a system $\cA$, and all $m \leq
\tilde{m}, n \leq \tilde{n}, m' \leq \tilde{m}'$, and $n' \leq
\tilde{n}'$, for some integers $\tilde{m}, \tilde{n} \geq 1$,
$\tilde{m}', \tilde{n}' \geq 0$.  We show that the truth of the claim
is maintained when each of $\tilde{m}, \tilde{n}, \tilde{m}'$, and
$\tilde{n}'$ is incremented by one.

We begin by incrementing $m$ to $\tilde{m}+1$. Applying \eqref{mutinforelation} yields
\begin{multline}
\label{mutinfoinduct}
I_{\cA} \big(b_{\tilde{m}}; b_{\tilde{m}+1};  b_1; \ldots; b_{\tilde{m}-1}; c_1; \ldots; c_{\tilde{n}}|b'_1, \ldots, b'_{\tilde{m}'}, c_1', \ldots, c'_{\tilde{n}'} \big)\\ 
=I_{\cA} \big(b_{\tilde{m}}; b_1; \ldots; b_{\tilde{m}-1}; c_1; \ldots; c_{\tilde{n}}|b'_1, \ldots, b'_{\tilde{m}'}, c_1', \ldots, c'_{\tilde{n}'} \big)\\ 
+ I_{\cA} \big( b_{\tilde{m}+1};  b_1; \ldots; b_{\tilde{m}-1}; c_1; \ldots; c_{\tilde{n}}|b'_1, \ldots, b'_{\tilde{m}'}, c_1', \ldots, c'_{\tilde{n}'} \big)\\ 
- I_{2^\cA} \big(\{b_{\tilde{m}}; b_{\tilde{m}+1} \};  b_1; \ldots; b_{\tilde{m}-1}; c_1; \ldots; c_{\tilde{n}}|b'_1, \ldots, b'_{\tilde{m}'}, c_1', \ldots, c'_{\tilde{n}'} \big).
\end{multline} 
The first two terms of the right-hand side of \eqref{mutinfoinduct}
are zero by the inductive hypothesis.  Furthermore, it is clear from
the definition of a power system that $2^\cB$ and $2^\cC$ are
independent subsystems of $2^\cA$.  Thus the final term on the
right-hand size of \eqref{mutinfoinduct} is also zero by the inductive
hypothesis. In sum, the entire right-hand side of
\eqref{mutinfoinduct} is zero, and the left-hand side must therefore
be zero as well. This proves the claim is true for $m=\tilde{m}+1$.

We now increment $m'$ to $\tilde{m}'+1$.  From Eq.~\eqref{eq:Idependency} of the main text, we have the relation
\begin{multline*}
I_{\cA} \big(b_1; \ldots; b_{\tilde{m}}; c_1; \ldots; c_{\tilde{n}}|b'_1, \ldots, b'_{\tilde{m}'}, c_1', \ldots, c'_{\tilde{n}'} \big)\\ 
= I_{\cA} \big(b'_{\tilde{m}'+1}; b_1; \ldots; b_{\tilde{m}}; c_1; \ldots; c_{\tilde{n}}|b'_1, \ldots, b'_{\tilde{m}'}, c_1', \ldots, c'_{\tilde{n}'} \big)\\ 
+ I_{\cA} \big(b_1; \ldots; b_{\tilde{m}}; c_1; \ldots; c_{\tilde{n}}|b'_1, \ldots, b'_{\tilde{m}'}, b'_{\tilde{m}'+1}, c_1', \ldots, c'_{\tilde{n}'} \big).
\end{multline*}
The left-hand side above is zero by the inductive hypothesis, and the
first term on the right-hand side is zero by the case $m=\tilde{m}+1$
proven above.  Thus the second term on the right-hand side is also
zero, which proves the claim is true for $m'=\tilde{m}'+1$.

Finally, the cases $n=\tilde{n}+1$ and $n'=\tilde{n}'+1$ follow by
interchanging the roles of $\cB$ and $\cC$.  The result now follows by
induction.
\end{proof}

We next show that for $\cB$ and $\cC$ independent subsystems of $\cA$, the amounts of information in dependencies of $\cB$ are not affected by additionally conditioning on components of $\cC$.

\begin{lemma}
\label{indepexcludelem}
Let $\cB=(B, H_\cB)$ and $\cC=(C, H_\cC)$ be independent subsystems of
$\cA$.  For integers $m \geq 1$ and $m',n' \geq 0$, let $b_1, \ldots,
b_m \in B$, $c_1, \ldots, c_n, c_1', \ldots, c'_{n'} \in C$.  Then
\begin{equation}
I(b_1; \ldots; b_m|b'_1, \ldots, b'_{m'}, c_1', \ldots, c'_{n'}) = I(b_1; \ldots; b_m|b'_1, \ldots, b'_{m'}).
\end{equation}
\end{lemma}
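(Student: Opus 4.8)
The plan is to prove the identity by induction on $n'$, the number of components of $\cC$ that appear in the conditioning list. The components $b_1,\ldots,b_m$ and $b'_1,\ldots,b'_{m'}$ play an entirely passive role: they are carried along unchanged on both sides throughout the argument, and only the list $c'_1,\ldots,c'_{n'}$ of excluded $\cC$-components varies. The base case $n'=0$ is immediate, since then the two sides of the claimed equation are literally the same expression.

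For the inductive step, I would assume the identity holds whenever at most $n'$ components of $\cC$ appear in the conditioning, and consider the case of $n'+1$ such components $c'_1,\ldots,c'_{n'+1}$. Applying Eq.~\eqref{eq:Idependency} in the form that splits a dependency according to whether a further component is included or excluded---the same relation used in the proof of Lemma~\ref{indepzerolem}---to the dependency $b_1;\ldots;b_m\,|\,b'_1,\ldots,b'_{m'},c'_1,\ldots,c'_{n'}$ and the extra component $c'_{n'+1}$, one obtains
\begin{multline*}
I\big(b_1;\ldots;b_m\,|\,b'_1,\ldots,b'_{m'},c'_1,\ldots,c'_{n'}\big) \\
= I\big(c'_{n'+1};b_1;\ldots;b_m\,|\,b'_1,\ldots,b'_{m'},c'_1,\ldots,c'_{n'}\big) \\
+ I\big(b_1;\ldots;b_m\,|\,b'_1,\ldots,b'_{m'},c'_1,\ldots,c'_{n'+1}\big).
\end{multline*}
By the inductive hypothesis the left-hand side equals $I(b_1;\ldots;b_m\,|\,b'_1,\ldots,b'_{m'})$. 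The first term on the right-hand side is a conditional shared information whose included components are $b_1,\ldots,b_m\in B$ (with $m\geq 1$) together with the single component $c'_{n'+1}\in C$, and whose excluded components lie in $B\cup C$; hence Lemma~\ref{indepzerolem} applies and this term vanishes. Rearranging then gives $I(b_1;\ldots;b_m\,|\,b'_1,\ldots,b'_{m'},c'_1,\ldots,c'_{n'+1}) = I(b_1;\ldots;b_m\,|\,b'_1,\ldots,b'_{m'})$, which completes the induction.

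There is essentially no serious obstacle here; the argument is a short induction that parallels the structure of Lemma~\ref{indepzerolem} and relies only on the additive splitting relation for $I$ and on Lemma~\ref{indepzerolem} itself. The one point that requires care is checking that the hypotheses of Lemma~\ref{indepzerolem} are genuinely met for the term to be discarded---namely that it has at least one included component from $B$ and at least one from $C$---which holds precisely because $m\geq 1$ is assumed and the lone component $c'_{n'+1}$ supplies the required $\cC$-component.
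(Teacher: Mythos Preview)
Your proof is correct and follows essentially the same approach as the paper's own proof: induction on $n'$, with the inductive step using the splitting relation from Eq.~\eqref{eq:Idependency} and invoking Lemma~\ref{indepzerolem} to eliminate the mixed term. Your explicit verification that the hypotheses of Lemma~\ref{indepzerolem} are met (at least one included component from each of $\cB$ and $\cC$) is a nice touch that the paper leaves implicit.
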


\begin{proof}
This follows by induction on $n'$. The claim is trivially true
for $n'=0$.  Suppose it is true in the case $n'=\tilde{n}'$, for some $\tilde{n}' \geq 0$.
By Eq.~\eqref{eq:Idependency} we have
\begin{multline}
I(b_1; \ldots; b_m|b'_1, \ldots, b'_{m'}, c_1', \ldots, c'_{\tilde{n}'}) \\ 
= I(b_1; \ldots; b_m; c'_{\tilde{n}'+1}|b'_1, \ldots, b'_{m'}, c_1', \ldots, c'_{n'})\\
+ I(b_1; \ldots; b_m|b'_1, \ldots, b'_{m'}, c_1', \ldots, c'_{\tilde{n}'}, c'_{\tilde{n}'+1}).
\end{multline}
The left-hand side is equal to $I(b_1; \ldots; b_m|b'_1, \ldots,
b'_{m'})$ by the inductive hypothesis, and the first term on the
right-hand side is zero by Lemma \ref{indepzerolem}.  This completes
the proof.
\end{proof}

Finally, it follows from Lemmas \ref{indepzerolem} and \ref{indepexcludelem} that if $\cA$ separates into independent subsystems, an irreducible dependency of $\cA$ has nonzero information only if it includes components from only one of these subsystems.  To state this precisely, we introduce a projection mapping from irreducible dependencies of a system $\cA$ to those of a subsystem $\cB$ of $\cA$.  This mapping, denoted $\rho_\cB^\cA: \depspace_\cA \to \depspace_\cB$, takes an irreducible
dependency among the components in $A$, and ``forgets'' those
components that are not in $B$, leaving an irreducible dependency
among only the components in $B$.  For example, suppose $A = \{a, b,
c\}$ and $B = \{b,c\}$.  Then
\begin{align}
\rho_\cB^\cA (a;b|c) & = b|c \nonumber\\
\rho_\cB^\cA (b;c|a) & = b;c.
\end{align}

We can now state the following simple characterization of information in systems comprised of independent subsystems:

\begin{theorem}
\label{Icaseslem}
Let $\cA_1, \ldots, \cA_k$ be independent subsystems of $\cA$, with
$A=A_1 \cup \ldots \cup A_k$. Then for any irreducible dependency $x
\in \mathfrak{D}_\cA$,
\begin{equation}
I_\cA(x) = \begin{cases} 
I_{\cA_i} \big ( \rho^\cA_{\cA_i}(x) \big),
& \parbox{8cm}{if $x$ includes only components of $\cA_i$ \\for some $i \in \{1, \ldots, k\}$,}\\[5mm]
0 & \text{otherwise.} \end{cases}
\end{equation}
\end{theorem}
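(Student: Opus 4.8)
\emph{Approach.} The plan is to reduce the statement to the two‑subsystem result already established (Lemma~\ref{indepzerolem}) together with a routine compatibility property of the shared information $I$ under passage to a subsystem. Two preliminary observations set this up. The first is a \emph{grouping} fact: for each fixed $i$, the subsystem $\cA_i$ and the subsystem $\cC$ on the component set $A\setminus A_i=\bigcup_{j\neq i}A_j$ are independent subsystems of $\cA$. This follows from the definition of independence together with Theorem~\ref{lem:indepsubsystems}, since $H(A)=\sum_{\ell=1}^k H(A_\ell)$ while $H(A\setminus A_i)=\sum_{j\neq i}H(A_j)$, so that $H(A)=H(A_i)+H(A\setminus A_i)$. (The case $k=1$ is trivial, with $\rho_{\cA_1}^\cA$ the identity; assume $k\geq 2$ below.)

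\emph{Compatibility identity.} The second ingredient is that for any subsystem $\cB$ of $\cA$ and any irreducible dependency $w\in\depspace_\cB$,
\begin{equation*}
I_\cB(w)=\sum_{z\in(\rho_\cB^\cA)^{-1}(w)}I_\cA(z).
\end{equation*}
To see this, note that $\rho_\cB^\cA$ preserves the included/excluded status of every component lying in $B$, so a dependency $z\in\depspace_\cA$ involves a component $u\in B$ exactly when $\rho_\cB^\cA(z)$ does; hence for each $U\subseteq B$ the fibres $(\rho_\cB^\cA)^{-1}(w)$, as $w$ ranges over $\delta_\cB(U)$, partition $\delta_\cA(U)$. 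Summing the displayed right‑hand side over $w\in\delta_\cB(U)$ therefore gives $\sum_{z\in\delta_\cA(U)}I_\cA(z)=H_\cA(U)=H_\cB(U)$, so that $w\mapsto\sum_{z\in(\rho_\cB^\cA)^{-1}(w)}I_\cA(z)$ solves the defining linear system~\eqref{eq:infosum} for the system $\cB$. Since that system has a unique solution (Section~\ref{InformationQuantity}) and $I_\cB$ solves it as well, the two functions coincide.

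\emph{The two cases.} Let $A^+\subseteq A$ be the (nonempty) set of components included in the irreducible dependency $x$. If $A^+$ meets two distinct blocks $A_i$ and $A_j$, take $\cB=\cA_i$ and $\cC$ the subsystem on $A\setminus A_i$ in the grouping fact: then $x$ includes at least one component of $\cA_i$ and at least one of $\cC$ (namely a component of $A_j$), while the excluded components of $x$ split between the two, so Lemma~\ref{indepzerolem} yields $I_\cA(x)=0$ --- the ``otherwise'' clause. If instead $A^+\subseteq A_i$ for the (necessarily unique) block containing it, then $\rho_{\cA_i}^\cA(x)$ is obtained from $x$ simply by deleting the components of $A\setminus A_i$, all of which are excluded in $x$. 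The fibre $(\rho_{\cA_i}^\cA)^{-1}\!\big(\rho_{\cA_i}^\cA(x)\big)$ consists of $x$ together with those $z$ that include, in addition, at least one component of $A\setminus A_i$; every such $z$ still includes the nonempty set $A^+\subseteq A_i$, so it meets both $\cA_i$ and the subsystem on $A\setminus A_i$ in its included components, and the previous case gives $I_\cA(z)=0$. The compatibility identity then collapses to $I_{\cA_i}\!\big(\rho_{\cA_i}^\cA(x)\big)=I_\cA(x)$, which is the claim.

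\emph{Main obstacle.} The only substantive point is the compatibility identity, and within it the step requiring care is checking that the fibres of $\rho_\cB^\cA$ over $\delta_\cB(U)$ tile $\delta_\cA(U)$ precisely, so that uniqueness of the solution of~\eqref{eq:infosum} can legitimately be invoked; everything else is bookkeeping about which components are included or excluded and in which block they reside. One small point worth flagging: Lemma~\ref{indepzerolem} requires at least one \emph{included} component on each side of the split, which is exactly why in the second case it cannot be applied to $x$ itself but only to the other members of its fibre, whose included sets genuinely meet the complementary blocks.
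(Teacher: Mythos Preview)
Your argument is correct. Both your proof and the paper's dispose of the ``otherwise'' case by reducing to a two‑block split $\cA_i$ versus the complement and invoking Lemma~\ref{indepzerolem}. The difference lies in the first case: the paper simply cites Lemma~\ref{indepexcludelem}, which was proved separately by an induction that peels off excluded $c$‑variables one at a time, and then (implicitly) uses that a conditional mutual information expression built from components of $A_i$ alone evaluates the same whether computed in $\cA$ or in $\cA_i$, since its inclusion–exclusion expansion involves only values of $H$ on subsets of $A_i$. You instead bypass Lemma~\ref{indepexcludelem} entirely by proving the general compatibility identity $I_\cB(w)=\sum_{z\in(\rho_\cB^\cA)^{-1}(w)}I_\cA(z)$ for an \emph{arbitrary} subsystem $\cB$, via uniqueness of the solution to the defining system~\eqref{eq:infosum}, and then kill all but one term in the fibre sum using the ``otherwise'' case already established. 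Your route is slightly longer here but buys a cleaner structural statement (the compatibility identity holds without any independence hypothesis) and makes the role of the projection map $\rho_\cB^\cA$ explicit; the paper's route is terser but leans on an extra inductive lemma whose content is, in the relevant special case, equivalent to your identity.
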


\begin{proof}
In the case that $x$ involves only components of $\cA_i$ for some $i$,
the statement follows from Lemma \ref{indepexcludelem}.  In all other
cases, the claim follows from Lemma \ref{indepzerolem}.
\end{proof}

\section{Additivity of the Complexity Profile}
\label{app:additivity}

Here we prove Property 4 of the complexity profile claimed in Section~\ref{sec:ComplexityProfile}: the
complexity profile is additive over independent systems.

\begin{theorem}
Let  $\cA_1, \ldots, \cA_k$ be independent subsystems of $\cA$.  Then 
\begin{equation}
C_\cA(y) = C_{\cA_1}(y) + \ldots + C_{\cA_k}(y).
\end{equation}
\end{theorem}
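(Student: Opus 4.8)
The plan is to reduce the entire statement to Theorem~\ref{Icaseslem}, which already describes how the shared information of an irreducible dependency of~$\cA$ decomposes when $\cA = \cA_1 \cup \dots \cup \cA_k$ splits into independent subsystems. (I would first note that, matching the hypothesis of Theorem~\ref{Icaseslem} and the statement of Property~4 in Section~\ref{sec:ComplexityProfile}, we take $A = A_1 \cup \dots \cup A_k$.) Starting from the definition,
\begin{equation*}
C_\cA(y) = I\big(\{x \in \depspace_\cA : \scorder(x) \geq y\}\big) = \sum_{\substack{x \in \depspace_\cA \\ \scorder(x) \geq y}} I_\cA(x),
\end{equation*}
I would invoke Theorem~\ref{Icaseslem} to delete every term in which $x$ includes components of more than one of the $\cA_i$, since $I_\cA(x) = 0$ for such~$x$. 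What remains is a sum over those irreducible dependencies whose included components all lie in a single $A_i$, and on each of these $I_\cA(x) = I_{\cA_i}\big(\rho^\cA_{\cA_i}(x)\big)$. Because $A = \bigcup_i A_i$ and an irreducible dependency has a nonempty included set, every surviving term is accounted for by exactly one value of~$i$, so the sum splits as $\sum_{i=1}^{k}$ of blocks.

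The second step is bookkeeping on scales. If $x \in \depspace_\cA$ has all its included components in $A_i$, then $\rho^\cA_{\cA_i}$ merely ``forgets'' excluded components lying outside $A_i$ and leaves the included set unchanged; since scale is defined by Eq.~\eqref{eq:scaledef} as $\sum_{a \text{ included}} \significance(a)$, this gives $\scorder_\cA(x) = \scorder_{\cA_i}\big(\rho^\cA_{\cA_i}(x)\big)$, so the cutoff condition $\scorder(x) \geq y$ transfers verbatim to the projected dependency. The third step is the elementary observation that, for fixed~$i$, the map $x \mapsto \rho^\cA_{\cA_i}(x)$ is a bijection from $\{x \in \depspace_\cA : x \text{ includes only components of } \cA_i\}$ onto $\depspace_{\cA_i}$: an irreducible dependency is determined by its nonempty included set, and these sets range over exactly the nonempty subsets of $A_i$ on the left and correspond one-to-one to the irreducible dependencies of $\cA_i$ on the right. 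Re-indexing each block by $\depspace_{\cA_i}$ then yields
\begin{equation*}
C_\cA(y) = \sum_{i=1}^{k} \sum_{\substack{x \in \depspace_\cA:\ \scorder(x)\geq y \\ x \text{ includes only components of } \cA_i}} I_{\cA_i}\big(\rho^\cA_{\cA_i}(x)\big) = \sum_{i=1}^{k} \sum_{\substack{z \in \depspace_{\cA_i} \\ \scorder(z) \geq y}} I_{\cA_i}(z) = \sum_{i=1}^{k} C_{\cA_i}(y).
\end{equation*}

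I expect no serious obstacle: the only thing requiring care is checking that the sets ``$x$ includes only components of $\cA_i$'' really do partition (up to zero-information terms) the dependencies of~$\cA$ and that the re-indexing via $\rho^\cA_{\cA_i}$ is a genuine scale-preserving bijection. Both facts are immediate once Theorem~\ref{Icaseslem} is in hand, so the body of the argument is a three-line rearrangement of a finite sum; the induction on $k$ could alternatively be folded into the general $k$-fold form of Theorem~\ref{Icaseslem} rather than redone here.
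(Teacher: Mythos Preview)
Your proof is correct and follows essentially the same route as the paper's: start from the defining sum, invoke Theorem~\ref{Icaseslem} to kill the cross-subsystem terms and rewrite the survivors via $\rho^\cA_{\cA_i}$, then re-index over $\depspace_{\cA_i}$. Your write-up is in fact more explicit than the paper's three-line version in justifying the scale-preservation and the bijectivity of $\rho^\cA_{\cA_i}$ on the relevant domain, which the paper leaves implicit.
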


\begin{proof}
We start with the definition
\begin{equation}
C_{\cA}(y) = \sum_{\substack{x \in \depspace_\cA  \\\significance(x) \geq y}} I_\cA (x).
\end{equation}
Applying Theorem \ref{Icaseslem} to each term on the right-hand side yields
\begin{align*}
C_{\cA}(y) & = \sum_{i=1}^k \sum_{\substack{x \in \depspace_\cA \\ \text{$x$ includes only components of $\cA_i$} \\\significance (x) \geq y}} 
I_{\cA_i} \big( \rho^\cA_{\cA_i}(x) \big)\\
& = \sum_{i=1}^k \sum_{\substack{x \in \depspace_{\cA_i} \\\significance (x) \geq y}}  I_{\cA_i}  (x)\\
& = \sum_{i=1}^k C_{\cA_i}(y). \qedhere
\end{align*}
\end{proof}

\section{Additivity of Marginal Utility of Information}
\label{app:mui}

Here we prove the additivity property of MUI stated in Section \ref{sec:MUI}.  We begin by recalling the mathematical context for this result.  

The maximal utility of information, $U(y)$, is defined as the maximal value of the quantity
\begin{equation}
u = \sum_{a \in A} \sigma(a) I(d;a),
\end{equation}
as the variables in the set $\{I(d;V) \}_{V \subset A}$ vary subject to the following constraints:
\renewcommand{\labelenumi}{(\roman{enumi})}
\begin{enumerate}
\item $0 \leq I(d;V) \leq H(V)$ for all $V \subset A$.
\item For any $W \subset V \subset A$,
\begin{equation}
 0 \leq I(d;V)-I(d;W) \leq H(V)-H(W).
\end{equation}
\item For any $V,W \subset A$, 
\begin{multline*}
I(d;V)+I(d;W) - I(d;V \cup W) - I(d;V \cap W)\\ \leq H(V) + H(W) - H(V \cup W) - H(V \cap W).
\end{multline*}
\item $I(d;A) \leq y$.
\end{enumerate}
The marginal utility of information, $\mui(y)$ is defined as the derivative of $U(y)$.

We emphasize for clarity that, while we intuitively regard $I(d;V)$ as the information that a descriptor $d$ imparts about utility $V$, we formally treat the quantities $\{I(d;V) \}_{V \subset A}$ not as functions of two inputs but as variables subject to the above constraints.  

Throughout this appendix we consider a system $\cA=(A, H_\cA)$ comprising two independent subsystems, $\cB=(B, H_\cB)$ and $\cC=(C, H_\cC)$.  This means that $A$ is the disjoint union of $B$ and $C$, and $H(A)=H(B)+H(C)$.  The additivity property of MUI can be stated as 
\begin{equation}
\mui_\cA(y) = \min_{\substack{y_1 + y_2 = y\\y_1,y_2 \geq 0}} \max \big \{ \mui_\cB(y_1), \mui_\cC(y_2) \big \}.
\end{equation}

Alternatively, this property can be stated in terms of the reflection $\tilde{\mui}_\cA(x)$ of $\mui_\cA(y)$, with the dependent and independent variables interchanged (see Section \ref{sec:MUI}), as
\begin{equation}
\tilde{\mui}_\cA(x) = \tilde{\mui}_\cB(x) + \tilde{\mui}_\cC(x).
\end{equation}

The proof of this property is organized as follows.  Our first major goal is Theorem \ref{dhatadditive}, which asserts that $I(d;A)=I(d;B)+I(d;C)$ when $u$ is maximized.  Lemmas \ref{dsuperadditive} and \ref{dsubsetadditive} are technical relations needed to achieve this result.  We then apply the decomposition principle of linear programming to prove an additivity property of $U_\cA$ (Theorem \ref{OUIadditivecor}).  Theorem \ref{MUIadditivethm} then deduces the additivity of $\mui_\cA$ from the additivity of $\hat{U}_\cA$.  Finally, in Corollary \ref{cor:MUIreflection}, we demonstrate the additivity of the reflected function $\tilde{\mui}_\cA$.

\begin{lemma}
\label{dsuperadditive}
Suppose the quantities $\{I(d;V) \}_{V \subset A}$ satisfy Constraints (i)--(iv).  Then for any subset $V \subset A$,
\begin{equation}
I(d;V) \geq I(d;V \cap B) + I(d; V \cap C).
\end{equation}
\end{lemma}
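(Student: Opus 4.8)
The plan is to read off the inequality directly from Constraint~(iii) --- the strong-subadditivity-type bound on the descriptor's shared information --- applied to the pair $V\cap B$ and $V\cap C$, and then to use the independence of $\cB$ and $\cC$ to make the right-hand side vanish.

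Concretely, write $V_B = V\cap B$ and $V_C = V\cap C$. Since, by the standing hypothesis of this appendix, $A$ is the disjoint union of $B$ and $C$, we have $V = V_B\cup V_C$ and $V_B\cap V_C = \emptyset$. The first step is to apply Constraint~(iii) to the pair $(V_B, V_C)$, which gives
\[
I(d;V_B) + I(d;V_C) - I(d;V) - I(d;\emptyset) \;\leq\; H(V_B) + H(V_C) - H(V) - H(\emptyset).
\]
The second step is to invoke Lemma~\ref{lem:twoindepsubsystems}: because $V_B\subset B$, $V_C\subset C$, and $\cB,\cC$ are independent subsystems of $\cA$, the lemma yields $H(V) = H(V_B) + H(V_C)$, so the three $H$-terms involving $V_B, V_C, V$ collapse and the right-hand side reduces to $-H(\emptyset)$. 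The third step is to use Constraint~(i) at the empty set, $0\leq I(d;\emptyset)\leq H(\emptyset)$, so that $I(d;\emptyset)-H(\emptyset)\leq 0$; substituting and rearranging then leaves $I(d;V) \geq I(d;V_B) + I(d;V_C)$, which is exactly the claim.

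There is no real obstacle here --- the proof is essentially a one-line consequence of Constraint~(iii) together with Lemma~\ref{lem:twoindepsubsystems} --- and the only point that needs a word of care is the empty-set contribution. I would note that $I(d;\emptyset)$ and $H(\emptyset)$ enter only through the difference $I(d;\emptyset)-H(\emptyset)$, which Constraint~(i) forces to be nonpositive, so the argument is insensitive to the value of $H(\emptyset)$, and the term disappears entirely under the usual normalization $H(\emptyset)=0$. It is also worth remarking that only Constraints~(i) and~(iii) are used: neither the information budget~(iv) nor the monotone-difference bounds~(ii) play any role, so the lemma in fact holds for any collection $\{I(d;V)\}_{V\subset A}$ satisfying~(i) and~(iii) alone.
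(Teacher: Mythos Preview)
Your proof is correct and follows essentially the same route as the paper: apply Constraint~(iii) to the disjoint pair $V\cap B$, $V\cap C$, then use Lemma~\ref{lem:twoindepsubsystems} to kill the right-hand side. The only difference is that the paper silently drops the empty-set terms (implicitly taking $H(\emptyset)=I(d;\emptyset)=0$), whereas you handle them explicitly via Constraint~(i); this extra care is fine but not a substantive departure.
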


\begin{proof}
Applying Constraint (iii) to the sets $V \cap B$ and $V \cap C$ we have
\begin{equation}
I(d;V \cap B)+I(d;V \cap C) - I(d;V) \\ \leq H(V \cap B) + H(V \cap C) - H(V).
\end{equation}
But by Lemma \ref{lem:twoindepsubsystems}, $H(V) = H(V \cap B) + H(V \cap C)$.  Thus the right-hand side above is zero, which proves the claim.
\end{proof}

\begin{lemma}
\label{dsubsetadditive} Suppose the quantities $\{I(d;V) \}_{V \subset A}$ satisfy Constraints (i)--(iv).
Suppose further that $W \subset V \subset A$ and $I(d;V) = I(d;V \cap B) + I(d; V \cap C).$  
Then $I(d;W) = I(d;W \cap B) + I(d; W \cap C)$.
\end{lemma}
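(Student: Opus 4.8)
The plan is to deduce the claim from Lemma~\ref{dsuperadditive} together with Constraint~(iii), using the independence of $\cB$ and $\cC$ to force the right-hand side of~(iii) to vanish for carefully chosen pairs of sets. Write $V_B = V\cap B$, $V_C = V\cap C$, $W_B = W\cap B$, $W_C = W\cap C$; since $A$ is the disjoint union of $B$ and $C$, each of $V$ and $W$ is the disjoint union of its $B$- and $C$-part, and $W\subseteq V$ gives $W_B \subseteq V_B$ and $W_C \subseteq V_C$. By Lemma~\ref{lem:twoindepsubsystems}, $H$ is additive across any disjoint union of a subset of $B$ with a subset of $C$; this is what I will use to check that the $H$-side of~(iii) is zero in each application below. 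The hypothesis is $I(d;V) = I(d;V_B) + I(d;V_C)$, and Lemma~\ref{dsuperadditive} always gives $I(d;W) \geq I(d;W_B) + I(d;W_C)$, so it suffices to establish the reverse inequality.

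That reverse inequality cannot be obtained in a single application of~(iii)---feeding Lemma~\ref{dsuperadditive} into~(iii) only ever reproduces the superadditive bound of Lemma~\ref{dsuperadditive}---so the plan is to propagate the equality down to $W$ one coordinate at a time. \emph{Step 1 (shrink the $C$-part).} Apply Constraint~(iii) to $P = V_B \cup W_C$ and $Q = V_C$. One checks $P\cup Q = V$ and $P\cap Q = W_C$, and that the $H$-side $H(V_B\cup W_C) + H(V_C) - H(V) - H(W_C)$ equals $0$ by Lemma~\ref{lem:twoindepsubsystems}. Hence $I(d;V_B\cup W_C) + I(d;V_C) \leq I(d;V) + I(d;W_C)$; substituting the hypothesis for $I(d;V)$ and cancelling $I(d;V_C)$ gives $I(d;V_B\cup W_C) \leq I(d;V_B) + I(d;W_C)$. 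Lemma~\ref{dsuperadditive} applied to the set $V_B\cup W_C$ supplies the opposite inequality, so $I(d;V_B\cup W_C) = I(d;V_B) + I(d;W_C)$.

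\emph{Step 2 (shrink the $B$-part).} Apply Constraint~(iii) to $P = W$ and $Q = V_B$. Then $P\cup Q = V_B\cup W_C$ and $P\cap Q = W_B$, and the $H$-side again vanishes by Lemma~\ref{lem:twoindepsubsystems}. This yields $I(d;W) + I(d;V_B) \leq I(d;V_B\cup W_C) + I(d;W_B)$; substituting the identity from Step~1 and cancelling $I(d;V_B)$ gives $I(d;W) \leq I(d;W_B) + I(d;W_C)$. Combined with the inequality from Lemma~\ref{dsuperadditive}, this proves $I(d;W) = I(d;W_B) + I(d;W_C)$.

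The one point that needs care is the verification, in each step, that the $H$-side of Constraint~(iii) collapses to $0$; this is exactly where the independence of $\cB$ and $\cC$ is essential, and it is the reason no such statement holds without that hypothesis. Everything else is routine set bookkeeping with the four sets $V_B, V_C, W_B, W_C$ and their unions and intersections. (Steps~1 and~2 are two instances of the same ``shrink one coordinate'' argument, the second being the first with the roles of $\cB$ and $\cC$ interchanged, and could be isolated as an auxiliary lemma if preferred; note also that only Constraint~(iii) is actually used.)
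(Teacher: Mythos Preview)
Your proof is correct and follows essentially the same approach as the paper: two applications of Constraint~(iii), each with the $H$-side forced to zero by Lemma~\ref{lem:twoindepsubsystems}, passing through an intermediate ``mixed'' set before combining with Lemma~\ref{dsuperadditive}. The only difference is cosmetic: you route through $V_B\cup W_C$ (shrinking the $C$-part first), whereas the paper routes through $W\cup V_C = W_B\cup V_C$ (shrinking the $B$-part first)---these are mirror images under $B\leftrightarrow C$, and your parenthetical remark that only Constraint~(iii) is needed is also true of the paper's argument.
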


\begin{proof}
Constraint (iii), applied to the sets $V \cap B$ and $W \cup (V \cap C)$, yields
\begin{multline}
\label{WcupVcapC}
I(d;V \cap B) + I \big(d; W \cup (V \cap C) \big) - I(d;V) - I(d;W \cap B)\\ 
\leq H(V \cap B) + H\big( W \cup (V \cap C) \big) - H(V) - H(W \cap B).
\end{multline}
By Lemma \ref{lem:twoindepsubsystems}, we have
\begin{align}
\label{hWcupVcapC}
H\big( W \cup (V \cap C) \big) & = H(W \cap B) + H(V \cap C)\\
\nonumber
H(V) & = H(V \cap B) + H(V \cap C).
\end{align}
With these two relations, the right-hand side of \eqref{WcupVcapC} simplifies to zero.  Making this simplification and substituting $I(d;V) = I(d;V \cap B) + I(d; V \cap C)$ (as given), we obtain
\begin{equation}
\label{WcupVcapC2}
I \big(d; W  \cup (V \cap C) \big) - I(d;W \cap B) - I(d; V \cap C) \leq 0.
\end{equation}

We next apply Constraint (iii) to $V \cap C$ and $W$, yielding
\begin{multline}
\label{WcupVcapC3}
I(d; V \cap C) + I (d; W) - I \big(d; W \cup (V \cap C) \big) - I(d; W \cap C)\\
\leq H(V \cap C) + H(W) - H(W \cup (V \cap C) \big) - H(W \cap C).
\end{multline}
Lemma \ref{lem:twoindepsubsystems} implies $H(W) = H(W \cap B) + H(W \cap C)$.  Combining this relation with \eqref{hWcupVcapC}, the right-hand side of \eqref{WcupVcapC3} simplifies to zero.  We then rewrite \eqref{WcupVcapC3} as
\begin{equation}
I (d; W) - I(d; W \cap C) \leq  I \big(d; W \cup (V \cap C) \big) -I(d; V \cap C).
\end{equation}
By \eqref{WcupVcapC2}, the right-hand side above is less than or equal to $I(d;W \cap B)$.  Making this substitution and rearranging, we obtain
\begin{equation}
I(d;W) \leq I(d; W \cap B) + I(d; W \cap C).
\end{equation}
Combining now with Lemma \ref{dsuperadditive}, it follows that $I(d;W) = I(d; W \cap B) + I(d; W \cap C)$ as desired.
\end{proof}

\begin{theorem}
\label{dhatadditive}
Suppose the quantities $\{I(\hat{d};V) \}_{V \subset A}$ maximixe $u=\sum_{a \in A} \sigma(a) I(d;a)$ subject to Constraints (i)--(iv) for some $0 \leq y \leq H(A)$. Then
\begin{equation}
I(\hat{d};A) = I(\hat{d};B) + I(\hat{d};C).
\end{equation}
\end{theorem}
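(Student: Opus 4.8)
The plan is to prove the identity by establishing the two inequalities separately. The inequality $I(\hat{d};A) \geq I(\hat{d};B) + I(\hat{d};C)$ is immediate from Lemma~\ref{dsuperadditive} applied with $V = A$ (using $A \cap B = B$, $A \cap C = C$), so the whole content is the reverse inequality, and it must use optimality of $\hat{d}$ — some feasible point typically has $I(d;A)$ strictly larger than $I(d;B)+I(d;C)$.

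For the reverse direction I would argue by contradiction: assume $I(\hat{d};A) > I(\hat{d};B) + I(\hat{d};C)$ and produce a feasible point with the same utility that meets the informational constraint with a strictly smaller value, then derive a contradiction from the fact that the maximal utility is \emph{strictly} increasing below $H(A)$. The feasible point is the ``projection'' $d'$ of $\hat{d}$ onto the two subsystems, defined by $I(d';V) := I(\hat{d};V \cap B) + I(\hat{d}; V \cap C)$ for every $V \subset A$. First I would verify that $\{I(d';V)\}_{V \subset A}$ satisfies Constraints (i)--(iv): using the splitting identities $(V \cup W)\cap B = (V\cap B)\cup(W\cap B)$ and $(V\cap W)\cap B = (V\cap B)\cap(W\cap B)$ (and the same with $C$), each of (i)--(iii) reduces to adding the corresponding constraint for $\hat{d}$ on subsets of $B$ to the one for $\hat{d}$ on subsets of $C$ and then collapsing $H(V\cap B)+H(V\cap C)$ to $H(V)$ by Lemma~\ref{lem:twoindepsubsystems}; Constraint~(iv) for $d'$ holds since $I(d';A) = I(\hat{d};B) + I(\hat{d};C) \leq I(\hat{d};A) \leq y$. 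Because $\{a\}\cap B$ is either $\{a\}$ or $\emptyset$, the singleton values are unchanged, so $u(d') = u(\hat{d}) = U(y)$; thus $d'$ is itself a maximizer for parameter $y$, yet it satisfies the informational constraint with the value $I(d';A) = I(\hat{d};B) + I(\hat{d};C)$, which under our assumption is $< y$.

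The second ingredient is that $U$ is strictly increasing on $[0, H(A)]$. The main text records that $U$ is concave and nondecreasing by elementary linear programming; I would supplement this with the observation that the objective attains its global maximum $\sum_{a \in A}\sigma(a)H(a)$ only when $I(d;a) = H(a)$ for every $a$, and that this forces $I(d;V) = H(V)$ for all $V$ by iterating Constraint~(iii) one component at a time (if $I(d;V)=H(V)$ and $I(d;a)=H(a)$, then (iii) together with (i) give $I(d;V\cup\{a\}) = H(V\cup\{a\})$). Hence this maximum requires budget $H(A)$, so $U(b) < U(H(A))$ for every $b < H(A)$, and a concave nondecreasing function with this property is strictly increasing on $[0,H(A)]$. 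The contradiction then follows: $d'$ is feasible with informational constraint value $I(d';A) < y \leq H(A)$ and utility $U(y)$, so $U\big(I(d';A)\big) \geq u(d') = U(y)$, contradicting $U\big(I(d';A)\big) < U(y)$. Therefore $I(\hat{d};A) = I(\hat{d};B) + I(\hat{d};C)$.

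I expect the main obstacle to be isolating the right auxiliary statement for the ``$\leq$'' half: a maximizer that inflates $I(d;A)$ above $I(d;B)+I(d;C)$ can be deflated at no cost, but this only yields a contradiction once one knows that $U$ genuinely grows until the system is fully described. Verifying feasibility of the projected point is routine but must be carried out for all of (i)--(iii), and the propagation argument that a full description costs exactly $H(A)$ units is the lemma on which the strict-monotonicity claim rests. As a by-product, once equality holds at $A$, Lemma~\ref{dsubsetadditive} with $V=A$ upgrades it to $I(\hat{d};W) = I(\hat{d};W\cap B) + I(\hat{d};W\cap C)$ for every $W \subset A$, the form needed for the subsequent linear-programming decomposition.
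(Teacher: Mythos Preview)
Your argument is correct and takes a genuinely different route from the paper's. The paper invokes linear-programming duality to recast the optimal $\hat{d}$ as a minimizer of $I(d;A)$ at fixed utility, and then constructs an $\epsilon$-perturbation $\tilde{d}$ that lowers $I(\tilde{d};V)$ on the set $S$ of subsets where strict superadditivity holds; verifying Constraint~(iii) for $\tilde{d}$ requires a multi-case analysis. You instead build the global projection $I(d';V)=I(\hat{d};V\cap B)+I(\hat{d};V\cap C)$, for which feasibility under (i)--(iii) follows in one line from Lemma~\ref{lem:twoindepsubsystems}, and close with the strict monotonicity of $U$ on $[0,H(A)]$, which you establish via the propagation argument through Constraint~(iii). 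Your approach is more elementary---no duality, no case split---at the price of needing the auxiliary strict-monotonicity lemma; the paper's approach avoids that lemma but leans on a duality step that is itself somewhat informal (it tacitly assumes the maximizer is also a minimizer of $I(d;A)$ at fixed utility, which is not automatic when maximizers are non-unique). Your observation that Lemma~\ref{dsubsetadditive} then propagates the equality to every $W\subset A$ is exactly how the paper proceeds in Theorem~\ref{OUIadditivecor}.
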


\begin{proof}
Let $\hat{u} = \sum_{a \in A} \sigma(a) I(\hat{d};a)$ be the maximal value of $u$.  By the duality principle of linear programming, the quantities $\{I(\hat{d};V) \}_{V \subset A}$ minimize the value of $I(d;A)$ as $\{I(d;V) \}_{V \subset A}$ varies subject to Constraints (i)--(iii) along with the additional constraint $u \geq \hat{u}$.  (Informally, the descriptor $\hat{d}$ achieves utility $\hat{u}$ using minimal information.)

Assume for the sake of contradiction that $I(\hat{d};A) > I(\hat{d};B) + I(\hat{d};C)$.  We will obtain a contradiction by showing that there is another set of quantities $\{I(\tilde{d};V) \}_{V \subset A}$, satisfying (i)--(iii) and $\tilde{u} =\hat{u}$, with $I(\tilde{d};A)<I(\hat{d};A)$.  Here, $\tilde{u}$ is the utility associated to $\{I(\tilde{d};V) \}_{V \subset A}$; that is, $\tilde{u}=\sum_{a \in A} \sigma(a) I(\tilde{d};a)$.  (Informally, we construct a new descriptor $\tilde{d}$ that achieves the same utility as $\hat{d}$ using less information.)

To obtain such quantities $\{I(\tilde{d};V) \}_{V \subset A}$, we first define $S \subset 2^A$ as the set of all subsets $V \subset A$ that satisfy 
\begin{equation}
I(\hat{d};V) > I(\hat{d};V \cap B) + I(\hat{d}; V \cap C).
\end{equation}
We observe that, by Lemma \ref{dsuperadditive}, if $V \notin S$, then $I(\hat{d};V) = I(\hat{d};V \cap B) + I(\hat{d}; V \cap C)$.  It then follows from Lemma \ref{dsubsetadditive} that if $W \subset V \subset A$ and $W \in S$, then $V \in S$ as well.

Next we choose $\epsilon > 0$ sufficiently small that, for each $V \in S$, the following two conditions are satisfied:
\renewcommand{\labelenumi}{(\arabic{enumi})}
\begin{enumerate}
\item $I(\hat{d};V) > I(\hat{d};V \cap B) + I(\hat{d}; V \cap C) + \epsilon,$
\item $I(\hat{d};V) > I(\hat{d};W) + \epsilon$, for all $W \subset V, W \notin S$.
\end{enumerate}
There is no problem arranging for condition (2) to be satisfied for any particular $V \in S$, since it follows readily from Constraint (ii) on $\hat{d}$ that if $W \subset V$ and $W \notin S$, then $I(\hat{d};V) > I(\hat{d};W)$.  We also note that since $A$ is finite, there are only a finite number of conditions to be satisfied as $V$ and $W$ vary, so it is possible to choose an $\epsilon >0$ satisfying all of them.

Having chosen such an $\epsilon$, we define the quantities $\{I(\tilde{d};V) \}_{V \subset A}$ by
\begin{equation}
I(\tilde{d};V) = \begin{cases} I(\hat{d};V) - \epsilon &  V \in S\\
I(\hat{d};V) & \text{otherwise.} \end{cases}
\end{equation}
In words, we reduce the amount of information that is imparted about the sets in $S$ by an amount $\epsilon$, while leaving fixed the amount that is imparted about sets not in $S$.  Intuitively, one could say that we are exploiting an inefficiency in the amount of information imparted by $\hat{d}$ about sets in $S$, and that the new descriptor $\tilde{d}$ is more efficient in terms of minimizing $I(d;A)$ while maintaining $U(d) \geq \hat{u}$.

We will now show that $\tilde{d}$ satisfies Constraints (i)--(iii) and $U(\tilde{d}) =\hat{u}$.  First, since $0 \leq I(\tilde{d};V) \leq I(\hat{d};V) \leq H(V)$ for all $V \subset A$, Constraint (i) is clearly satisfied. 

For Constraint (ii), consider any $W \subset V \subset A$.  If $V$ and $W$ are either both in $S$ or both not in $S$ then $I(\tilde{d};V)-I(\tilde{d};W)=I(\hat{d};V)-I(\hat{d};W)$, and Constraint (ii) is satisfied for $\tilde{d}$ since it is satisfied for $\hat{d}$.  It only remains to consider the case that $V \in S$ and $W \notin S$.  In this case, we have
\begin{equation}
I(\tilde{d};V)-I(\tilde{d};W) = I(\hat{d};V) - I(\hat{d};W) - \epsilon > 0,
\end{equation}
since $V$ and $\epsilon$ satisfy condition (2) above.  Furthermore,
\begin{align*}
I(\tilde{d};V)-I(\tilde{d};W) & = I(\hat{d};V)-I(\hat{d};W) - \epsilon\\
& \leq H(V) - H(W) - \epsilon\\
& < H(V) - H(W).
\end{align*}
Thus Constraint (ii) is satisfied.

To verify Constraint (iii), we must consider a number of cases, only one of which is nontrivial.
\begin{itemize}
\item If either
\begin{itemize}
\item none of $V$, $W$, $V \cup W$ and $V \cap W$ belong to $S$,
\item all of $V$, $W$, $V \cup W$ and $V \cap W$ belong to $S$,
\item $V$ and $V \cup W$ belong to $S$ while $W$ and $V \cap W$ do not, or
\item $W$ and $V \cup W$ belong to $S$ while $V$ and $V \cap W$ do not,
\end{itemize}
then the difference on the left-hand side of Constraint (iii) has the same value for $d=\hat{d}$ and $d=\tilde{d}$---that is, the changes in each term cancel out in the difference.  Thus Constraint (iii) is satisfied for $\tilde{d}$ since it is satisfied for $\hat{d}$. 
\item If $V$, $W$, and $V \cup W$ belong to $S$ while $V \cap W$ does not, then
\begin{multline*}
I(\tilde{d};V)+I(\tilde{d};W) - I(\tilde{d};V \cup W) - I(\tilde{d};V \cap W)\\
= I(\hat{d};V)+I(\hat{d};W) - I(\hat{d};V \cup W) - I(\hat{d};V \cap W) - \epsilon.
\end{multline*}
The left-hand side of Constraint (iii) therefore decreases when moving from $d=\hat{d}$ to $d=\tilde{d}$.  So Constraint (iii) is satisfied for $\tilde{d}$ since it is satisfied for $\hat{d}$.  
\item The nontrivial case is that $V \cup W$ belongs to $S$ while $V$, $W$ and $V \cap W$ do not. Then
\begin{multline}
\label{nontrivialSS}
I(\tilde{d};V)+I(\tilde{d};W) - I(\tilde{d};V \cup W) - I(\tilde{d};V \cap W)\\
= I(\hat{d};V)+I(\hat{d};W) - \left( I(\hat{d};V \cup W) - \epsilon \right) - I(\hat{d};V \cap W).
\end{multline}
By the definition of $S$ and condition (1) on $\epsilon$, we have
\begin{align*}
I(\hat{d};V \cup W) - \epsilon & > I \left(\hat{d}; (V \cup W) \cap B \right) + I \left(\hat{d}; (V \cup W) \cap C \right) \\
I(\hat{d};V) & = I(\hat{d};V \cap B) + I(\hat{d};V \cap C)\\
I(\hat{d};W) & = I(\hat{d};W \cap B) + I(\hat{d};W \cap C)\\
I(\hat{d};V \cap W)  & = I \left(\hat{d}; (V \cap W) \cap B \right) + I \left(\hat{d}; (V \cap W) \cap C \right).
\end{align*}
Substituting into \eqref{nontrivialSS} we have
\begin{align*}
& I(\tilde{d};V)+I(\tilde{d};W) - I(\tilde{d};V \cup W) - I(\tilde{d};V \cap W)\\
& \quad < I(\hat{d};V \cap B)+I(\hat{d};W \cap B) \\
& \qquad  - I \big(\hat{d}; (V \cup W) \cap B \big) - I \big(\hat{d}; (V \cap W) \cap B \big) \\
& \qquad \quad + I(\hat{d};V \cap C)+I(\hat{d};W \cap C) \\
& \qquad \qquad- I \big(\hat{d}; (V \cup W) \cap C \big) - I \big(\hat{d}; (V \cap W) \cap C \big). 
\end{align*}
Applying Constraint (iii) on $\hat{d}$ twice to the right-hand side above, we have
\begin{multline*}
I(\tilde{d};V)+I(\tilde{d};W) - I(\tilde{d};V \cup W) - I(\tilde{d};V \cap W)\\
< H(V \cap B) + H(W \cap B) - H \big( (V \cup W) \cap B \big) - H \big( (V \cap W) \cap B \big)\\
\quad + H(V \cap C) + H(W \cap C) - H \big( (V \cup W) \cap C \big) - H \big( (V \cap W) \cap C \big).
\end{multline*}
But Lemma \ref{lem:twoindepsubsystems} implies that $H(Z \cap B) + H(Z \cap C)=H(Z)$ for any subset $Z \subset A$.  We apply this to the sets $V$, $W$, $V \cup W$ and $V \cap W$ to simplify the right-hand side above, yielding
\begin{multline*}
I(\tilde{d};V)+I(\tilde{d};W) - I(\tilde{d};V \cup W) - I(\tilde{d};V \cap W)\\
< H(V) + H(W) - H(V\cup W) - H(V \cap W),
\end{multline*}
as required.
\end{itemize}
No other cases are possible, since, as discussed above, any superset of a set in $S$ must also be in $S$.

Finally, it is clear that no singleton subsets of $A$ are in $S$.  Thus $I(\tilde{d};a) = I(\hat{d};a)$ for each $a \in A$, and it follows that $\sum_{a \in A} \sigma(a) I(\tilde{d};a) =\hat{u}$.  

We have now verified that $\tilde{d}$ satisfies Constraints (i)--(iii) and $U(\tilde{d})=\hat{u}$.  Furthermore, since $A \in S$ by assumption, we have $I(\tilde{d};A) < I(\hat{d};A)$.  This contradicts the assertion that $\hat{d}$ minimizes $I(d;A)$ subject to Constraints (i)--(iii) and $U(d) \geq \hat{u}$.  Therefore our assumption that $I(\hat{d};A)>I(\hat{d};B)+I(\hat{d};C)$ was incorrect, and we must instead have $I(\hat{d};A)=I(\hat{d};B)+I(\hat{d};C)$.
\end{proof}

\begin{theorem}
\label{OUIadditivecor}
The maximal utility function $U(y)$ is additive over independent subsystems in the sense that
\begin{equation}
U_\cA(y) 
= \max_{\substack{y_1 + y_2 = y\\y_1,y_2 \geq 0}} \left( U_\cB(y_1) + U_\cC(y_2) \right).
\end{equation}
\end{theorem}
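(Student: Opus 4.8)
The plan is to prove the two inequalities $\oui_\cA(y)\ge\max_{y_1+y_2=y}\big(\oui_\cB(y_1)+\oui_\cC(y_2)\big)$ and $\oui_\cA(y)\le\max_{y_1+y_2=y}\big(\oui_\cB(y_1)+\oui_\cC(y_2)\big)$ separately. This is essentially the decomposition principle for block-structured linear programs: the constraints (i)--(iii) for $\cA$ almost split across $B$ and $C$, the only coupling being through variables $I(d;V)$ for sets $V$ meeting both, and Theorem~\ref{dhatadditive} is exactly the fact that at an optimum this coupling is inert.

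\emph{The ``$\ge$'' direction.} Fix $y_1,y_2\ge 0$ with $y_1+y_2=y$, and let $\{I(d_\cB;V)\}_{V\subset B}$ and $\{I(d_\cC;V)\}_{V\subset C}$ be feasible assignments attaining the maxima defining $\oui_\cB(y_1)$ and $\oui_\cC(y_2)$. I would then set $I(d;V):=I(d_\cB;V\cap B)+I(d_\cC;V\cap C)$ for every $V\subset A$ and check that this assignment is feasible for the program defining $\oui_\cA(y)$. Using $(V\cup W)\cap B=(V\cap B)\cup(W\cap B)$ and $(V\cap W)\cap B=(V\cap B)\cap(W\cap B)$ (and likewise for $C$), each of Constraints~(i)--(iii) for $d$ splits as the sum of the corresponding constraint for $d_\cB$ on the $B$-parts and for $d_\cC$ on the $C$-parts; the right-hand sides recombine to the required $H_\cA$ expression by the identity $H(Z)=H(Z\cap B)+H(Z\cap C)$, which is Lemma~\ref{lem:twoindepsubsystems}. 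Constraint~(iv) holds since $I(d;A)=I(d_\cB;B)+I(d_\cC;C)\le y_1+y_2=y$. As each singleton $\{a\}\subset A$ lies wholly in $B$ or wholly in $C$, the utility of $d$ equals $\sum_{a\in A}\sigma(a)I(d;a)=\oui_\cB(y_1)+\oui_\cC(y_2)$, so $\oui_\cA(y)\ge\oui_\cB(y_1)+\oui_\cC(y_2)$; maximizing over $(y_1,y_2)$ gives the inequality.

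\emph{The ``$\le$'' direction.} Let $\{I(\hat d;V)\}_{V\subset A}$ be an optimizer of the program defining $\oui_\cA(y)$, so $\oui_\cA(y)=\sum_{a\in A}\sigma(a)I(\hat d;a)$. Restricting this family to subsets of $B$ yields an assignment that satisfies Constraints~(i)--(iii) for $\cB$ automatically (since $H_\cB$ is the restriction of $H_\cA$) and satisfies Constraint~(iv) with budget $y_1:=I(\hat d;B)$; hence $\oui_\cB(y_1)\ge\sum_{a\in B}\sigma(a)I(\hat d;a)$, and symmetrically $\oui_\cC(y_2)\ge\sum_{a\in C}\sigma(a)I(\hat d;a)$ with $y_2:=I(\hat d;C)$. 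Adding gives $\oui_\cB(y_1)+\oui_\cC(y_2)\ge\oui_\cA(y)$. By Theorem~\ref{dhatadditive} and Constraint~(iv) on $\hat d$, $y_1+y_2=I(\hat d;B)+I(\hat d;C)=I(\hat d;A)\le y$; if this is strict I would replace $y_1$ by $y-y_2$, which (since $\oui_\cB$ is nondecreasing---recall $\oui$ is piecewise linear, increasing and concave) only increases $\oui_\cB(y_1)$, preserving the bound while making the budgets sum to exactly $y$. Thus $\max_{y_1+y_2=y}\big(\oui_\cB(y_1)+\oui_\cC(y_2)\big)\ge\oui_\cA(y)$, and combining with the previous paragraph proves the theorem.

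\emph{Main obstacle.} The one substantive ingredient is Theorem~\ref{dhatadditive}, already in hand: it says an optimal descriptor for $\cA$ wastes no information coupling $\cB$ and $\cC$, so $I(\hat d;A)$ splits; without it the ``$\le$'' direction fails in general (the failure surfaces, for instance, in the negative values of $D(k)$ for example~\exD). Granting Theorem~\ref{dhatadditive}, the remaining labor is the routine but fiddly case check of Constraint~(iii) for the concatenated assignment---one must keep the $B$- and $C$-parts of the four sets $V$, $W$, $V\cup W$, $V\cap W$ aligned so the additive splittings of $I(d;\cdot)$ and of $H$ match term by term---plus a single appeal to the monotonicity of $\oui$.
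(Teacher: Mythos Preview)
Your proof is correct and takes essentially the same approach as the paper: both identify the argument as an instance of the decomposition principle for block-structured linear programs, with Theorem~\ref{dhatadditive} supplying the crucial fact that an optimal descriptor does not couple $\cB$ and $\cC$. The only difference is presentational---the paper adds the constraint $I(d;V)=I(d;V\cap B)+I(d;V\cap C)$ (via Theorem~\ref{dhatadditive} together with Lemma~\ref{dsubsetadditive}), eliminates the cross-variables, and then cites Dantzig--Wolfe decomposition as a black box, whereas you unpack that principle by proving the two inequalities directly; your version is slightly more self-contained and in the $\le$ direction needs only Theorem~\ref{dhatadditive} itself rather than its strengthening through Lemma~\ref{dsubsetadditive}.
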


\begin{proof}
For a given $y \geq 0$, let $\{I(\hat{d};V) \}_{V \subset A}$ maximixe $u=\sum_{a \in A} \sigma(a) I(d;a)$ subject to Constraints (i)--(iv). Combining Theorem \ref{dhatadditive} with Lemma \ref{dsubsetadditive}, it follows that $I(\hat{d}; V) = I(\hat{d}; V \cap B) + I(\hat{d}; V \cap C)$.  We may therefore augment our linear program with the additional constraint,
\renewcommand{\labelenumi}{(\roman{enumi})}
\begin{enumerate}
\setcounter{enumi}{4}
\item $I(d; V) = I(d; V \cap B) + I(d; V \cap C)$,
\end{enumerate}
for each $V \subset A$, without altering the optimal solution.

Upon doing so, we can use this new constraint to eliminate the variables $I(d;V)$ for $V$ not a subset of either $B$ or $C$.  We thereby reduce the set of variables from $\big \{ I(d;V) \big \}_{V \subset A}$ to
\begin{equation}
\big \{ I(d;V) \big \}_{V \subset B} \quad \cup \quad \big \{ I(d;W) \big \}_{W \subset C}.
\end{equation}

We observe that this reduced linear program has the following structure: The variables in the set $\big \{ I(d;V) \big \}_{V \subset B}$ are restricted by Constraints (i)--(iii) as applied to these variables.  Separately, variables in the set $\big \{ I(d;W) \big \}_{W \subset C}$ are also restricted by Constraints (i)--(iii), as they apply to the variables in this second set.  The only constraint that simultaneously involves variables in both sets is (iv).  This constraint can be rewritten as
\begin{equation}
u_\cB  + u_\cC \leq y,
\end{equation}
with
\begin{equation}
u_\cB = \sum_{b \in B} \sigma(b) I(d;b), \qquad u_\cC = \sum_{c \in C} \sigma(c) I(d;c).
\end{equation}

This structure enables us to apply the decomposition principle for linear programs \cite{dantzig1960decomposition} to decompose the full program into two linear sub-programs, one on the variables $\big \{ I(d;V) \big \}_{V \subset B}$ and one on $\big \{ I(d;W) \big \}_{W \subset C}$, together with a coordinating program described by Constraint (iv).  The desired result then follows from standard theorems of linear program decomposition \cite{dantzig1960decomposition}.  
\end{proof}

\begin{theorem} 
\label{MUIadditivethm}
$M_\cA$ is additive over independent subsystems in the sense that\begin{equation}
M_\cA(y) = \min_{\substack{y_1 + y_2 = y\\y_1,y_2 \geq 0}} \max \big \{ M_\cB(y_1), M_\cC(y_2) \big \}.
\end{equation}
\end{theorem}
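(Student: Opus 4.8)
The plan is to deduce the formula from the additivity of $U$ already established in Theorem~\ref{OUIadditivecor}, using $M = U'$. Write $U_\cA(y) = \max_{y_1+y_2=y,\,y_1,y_2\geq 0}\bigl(U_\cB(y_1)+U_\cC(y_2)\bigr)$; recall from Section~\ref{sec:MUI} that $U_\cB$ and $U_\cC$ are continuous, concave, nondecreasing and piecewise linear, so $U_\cA$ is too, and $M_\cA, M_\cB, M_\cC$ are the associated nonincreasing, nonnegative, piecewise-constant right-derivatives. First I would fix $y$ and analyze $\phi_y(t):=U_\cB(t)+U_\cC(y-t)$ on $t\in[0,y]$: it is concave (a sum of two concave functions, the second composed with a decreasing affine map), hence attains its maximum on $[0,y]$ at some split $t^\ast=t^\ast(y)$, and $U_\cA(y)=\phi_y(t^\ast)$. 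The standard first-order condition for this concave maximization is that the superdifferentials $\partial U_\cB(t^\ast)$ and $\partial U_\cC(y-t^\ast)$ have a common element (or that $t^\ast$ is an endpoint, i.e.\ the whole budget is assigned to one subsystem).

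The heart of the argument is two claims tied together by $t^\ast$. The first is the subdifferential-calculus identity for sup-convolutions, $\partial U_\cA(y)=\partial U_\cB(t^\ast)\cap\partial U_\cC(y-t^\ast)$, which I would verify here by elementary means: the inclusion $\supseteq$ is immediate from concavity of $U_\cB,U_\cC$ and the defining inequality of $U_\cA$, while $\subseteq$ uses the explicit competitors $(t^\ast+h,\,y-t^\ast)$ and $(t^\ast,\,y-t^\ast+h)$ inside the $\max$. Taking lower endpoints (our right-continuity convention makes $M_\cA(y)=\min\partial U_\cA(y)$) then gives $M_\cA(y)=\max\{M_\cB(t^\ast),\,M_\cC(y-t^\ast)\}$. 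The second claim is that the inner minimization on the right-hand side of the theorem is also solved at $t^\ast$, with value $\max\{M_\cB(t^\ast),M_\cC(y-t^\ast)\}$: since $t\mapsto M_\cB(t)$ is nonincreasing and $t\mapsto M_\cC(y-t)$ is nondecreasing on $[0,y]$, their pointwise maximum is minimized exactly where the two meet or cross, which by the monotone-staircase structure is precisely a point where $\partial U_\cB(t)\cap\partial U_\cC(y-t)\neq\emptyset$, i.e.\ an optimal split for $\phi_y$. Combining the two claims yields $M_\cA(y)=\min_{y_1+y_2=y,\,y_1,y_2\geq 0}\max\{M_\cB(y_1),M_\cC(y_2)\}$. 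As a consistency check I would note that for independent blocks this reproduces the rectangle sums of Theorem~\ref{reflection}, and for examples~\exA\ and~\exC\ it matches Figure~\ref{fig:mui-toy-examples}; the corollary on the reflected function then follows by inverting this min--max relation.

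I expect the main obstacle to be the non-smoothness bookkeeping: because $M_\cB$ and $M_\cC$ are step functions, $U_\cB$ and $U_\cC$ have corners, the maximizer $t^\ast$ of $\phi_y$ is generally a whole interval rather than a point, and the ``balance'' $M_\cB(t^\ast)=M_\cC(y-t^\ast)$ can fail exactly at a jump and must be replaced by the superdifferential-overlap condition; one must also handle the corner solutions $t^\ast=0$ and $t^\ast=y$ separately (there the relevant value is $M_\cC(y)$, resp.\ $M_\cB(y)$, which one checks is still the minimum of the max). A cleaner but less self-contained route --- worth a remark but not the main proof --- would be to pass through conjugate duality: sup-convolution dualizes to ordinary addition of concave conjugates, and since the derivative of the conjugate is the generalized inverse of the derivative, this gives $\tilde M_\cA=\tilde M_\cB+\tilde M_\cC$ at once (Eq.~\eqref{eq:mui-additivityreflection}), from which the min--max form of Eq.~\eqref{eq:mui-additivity} follows by reflecting back.
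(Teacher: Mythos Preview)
Your proposal is correct and follows essentially the same route as the paper: both start from Theorem~\ref{OUIadditivecor}, analyze the concave function $\phi_y(t)=U_\cB(t)+U_\cC(y-t)$ on $[0,y]$, locate the optimal split, and then combine the derivative relation at that split with the monotone-crossing argument for $t\mapsto M_\cB(t)$ versus $t\mapsto M_\cC(y-t)$ to obtain the min--max formula. Your packaging via the sup-convolution superdifferential identity $\partial U_\cA(y)=\partial U_\cB(t^\ast)\cap\partial U_\cC(y-t^\ast)$ is somewhat cleaner than the paper's explicit case analysis (which fixes $\hat{y}_1$ and varies $y$ to differentiate, then splits into cases according to which of $M_\cB,M_\cC$ jumps at the split, hand-waving the endpoint cases and simply discarding the ``nongeneric'' corner where both jump); your version absorbs all of these into the superdifferential overlap condition, which is exactly the bookkeeping you anticipated. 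Your conjugate-duality aside---deriving $\tilde M_\cA=\tilde M_\cB+\tilde M_\cC$ first and reflecting back---is a genuinely different and shorter path; the paper instead proves Eq.~\eqref{eq:mui-additivity} directly and obtains Eq.~\eqref{eq:mui-additivityreflection} as a corollary.
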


\begin{proof}
We define the function
\begin{equation}
F(y_1;y) = U_\cB(y_1) + U_\cC(y-y_1).
\end{equation}
The result of Theorem \ref{OUIadditivecor} can then be expressed as 
\begin{equation}
U(y) = \max_{0 \leq y_1 \leq y} F (y_1; y).
\end{equation}
We choose and fix an arbitrary $y$-value $\tilde{y} \geq 0$, and we will prove the desired result for $y=\tilde{y}$. 

We observe that $F(y_1; \tilde{y})$ is concave in $y_1$ since $U_\cB(y_1)$ and $U_\cC(\tilde{y}-y_1)$ are.  It follows that any local maximum of $F(y_1;\tilde{y})$ in $y_1$ is also a global maximum.   We assume that the maximum of $F(y_1; \tilde{y})$ in $y_1$ is achieved at a single point $\hat{y}_1$ with $0 < \hat{y}_1 < \tilde{y}$.  The remaining cases---that the maximum is achieved at $y_1=0$ or $y_1=\tilde{y}$, or is achieved on a closed interval of $y_1$-values---are trivial extensions of this case.

Assuming we are in the case described above (and again invoking the concavity of $F$ in $y_1$), $\hat{y}_1$ must be the unique point at which the derivative $\frac{\partial F}{\partial y_1} (y_1; \tilde{y})$ changes sign from positive to negative.  This derivative can be written
\begin{equation}
\frac{\partial F}{\partial y_1} (y_1; \tilde{y}) = M_\cB(y_1) - M_\cC(\tilde{y}-y_1).
\end{equation}
It follows that $\hat{y}_1$ is the unique real number in $[0,\tilde{y}]$ satisfying
\begin{equation}
\label{yhatinequalities}
\begin{cases}
M_\cB(y_1) > M_\cC(\tilde{y}-y_1) & y_1 < \hat{y}_1\\
M_\cB(y_1) < M_\cC(\tilde{y}-y_1) & y_1 > \hat{y}_1.
\end{cases}
\end{equation}

From inequalities \eqref{yhatinequalities}, and using the fact that $M_\cB(y_1)$ and $M_\cC(y_2)$ are nonincreasing, piecewise-constant functions, we see that either $M_\cB(y_1)$ decreases at $y_1=\hat{y}_1$, or $M_\cC(y_2)$ decreases at $y_2=\tilde{y}-\hat{y}_1$, or both.  We analyze these cases separately.

\paragraph*{Case 1: $M_\cB(y_1)$ decreases at $y_1=\hat{y}_1$, while $M_\cC(y_2)$ is constant in a neighborhood of $y_2=\tilde{y}-\hat{y}_1$.} Pick $\epsilon>0$ sufficiently small so that $M_\cC(y_2)$ has constant value for $y_2 \in (\tilde{y}-\hat{y}_1-\epsilon, \tilde{y}-\hat{y}_1+\epsilon)$.  Then inequalities \eqref{yhatinequalities} remain satisfied with $\tilde{y}$ replaced by any $y \in (\tilde{y}-\epsilon, \tilde{y}+\epsilon)$ and $\hat{y}_1$ fixed.  Thus for $y$ in this range, we have
\begin{equation}
U_\cA(y) = U_\cB(\hat{y}_1) + U_\cC(y-\hat{y}_1).
\end{equation}
Taking the derivative of both sides in $y$ at $y=\tilde{y}$ yields
\begin{equation}
\label{MAytilde}
M_\cA(\tilde{y})=M_\cC(\tilde{y}-\hat{y}_1).
\end{equation}
We claim that 
\begin{equation}
\label{Mminmaxclaim}
M_\cC(\tilde{y}-\hat{y}_1) 
=  \min_{0 \leq y_1 \leq \tilde{y}} \max \big \{ M_\cB(y_1), M_\cC(\tilde{y} - y_1) \big \}.
\end{equation}
To prove this claim, we first note that, by the inequalities \eqref{yhatinequalities},
\begin{equation}
\label{eq:Mcases}
\max \big \{ M_\cB(y_1), M_\cC(\tilde{y} - y_1) \big \}=
\begin{cases}
M_\cB(y_1)  & y_1 < \hat{y}_1\\
M_\cC(\tilde{y}-y_1) & y_1 > \hat{y}_1.
\end{cases}
\end{equation}
Since both $M_\cB$ and $M_\cC$ are piecewise-constant and nonincreasing, the minimax in Eq.~\eqref{Mminmaxclaim} is achieved for values $y_1$ near $\hat{y}_1$.  We therefore can restrict to the range $y_1 \in (\hat{y}_1 - \epsilon, \hat{y}_1 + \epsilon)$.  Combining Eq.~\eqref{eq:Mcases} with the conditions defining Case 1 and the definition of $\epsilon$, we have 
\begin{equation}
\begin{split}
\max \big \{ M_\cB(y_1), M_\cC(\tilde{y} - y_1) \big \} = M_\cB(y_1) > M_\cC(\tilde{y} - \hat{y}_1) &
 \qquad \text{for $y_1 \in (\hat{y}_1 - \epsilon, \hat{y}_1)$}\\
\max \big \{ M_\cB(y_1), M_\cC(\tilde{y} - y_1) \big \} = M_\cC(\tilde{y} - y_1) = M_\cC(\tilde{y} - \hat{y}_1) &
 \qquad \text{for $y_1 \in (\hat{y}_1, \hat{y}_1 + \epsilon)$.}
\end{split}
\end{equation}
Thus the minimax in Eq.~\eqref{Mminmaxclaim} is achieved at a value of $M_\cC(\tilde{y} - \hat{y}_1)$ when $y_1 \in (\hat{y}_1, \hat{y}_1 + \epsilon)$, verifying Eq.~\eqref{Mminmaxclaim}.  Combining with Eq.~\eqref{MAytilde}, we have
\begin{equation}
M_\cA(y) = \min_{\substack{y_1 + y_2 = y\\y_1,y_2 \geq 0}} \max \big \{ M_\cB(y_1), M_\cC(y_2) \big \},
\end{equation}
proving the theorem in this case.

\paragraph*{Case 2: $M_\cB(y_1)$ is constant in a neighborhood of $y_1=\hat{y}_1$, while $M_\cC(y_2)$ decreases at $y_2 =\tilde{y}-\hat{y}_1$.}  In this case, we define $\hat{y}_2 = \tilde{y}-\hat{y}_1$.  The proof then follows exactly as in Case 1, with $\cB$ and $\cC$ interchanged, and the subscripts 1 and 2 interchanged.

\paragraph*{Case 3: $M_\cB(y_1)$ decreases at $y_1=\hat{y}_1$ and $M_\cC(y_2)$ decreases at $y_2 =\tilde{y}-\hat{y}_1$.}  This case only occurs at the $y$-values for which $U_\cA(y)$ changes slope and $M_\cA(y)$ changes value.  At these nongeneric points, $M_\cA(y)$ (defined as the derivative of $U_\cA(y)$) is undefined.  We therefore disregard this case. 
\end{proof}

We now define $\tilde{M}_\cA(x)$ as the reflection of $M_\cA(y)$ with the dependent and independent variables interchanged.  Since $M_\cA$ is positive and nonincreasing, $\tilde{M}_\cA$ is a well-defined function given by the formula
\begin{equation}
\tilde{\mui}_\cA(x)  = \max\{y:\mui_\cA(y) \leq x \}.
\end{equation}
The following corollary gives a simpler expression of the additivity property of MUI.  

\begin{corollary}
\label{cor:MUIreflection}
If $\cA$ consists of independent subsystems $\cB$ and $\cC$, then $\tilde{\mui}_\cA(x) = \tilde{\mui}_\cB(x) + \tilde{\mui}_\cC(x)$ for all $x \geq 0$.
\end{corollary}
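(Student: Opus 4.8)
The plan is to obtain Corollary~\ref{cor:MUIreflection} as a direct rephrasing of the minimax identity in Theorem~\ref{MUIadditivethm}, by translating that identity into a statement about the level sets of $\mui_\cA$, $\mui_\cB$ and $\mui_\cC$. Since the reflection $\tilde{\mui}$ of a positive, nonincreasing, piecewise-constant function $\mui$ is determined entirely by the family of sets $\{y : \mui(y) \leq x\}$ (each of which is a half-line, so that $\tilde{\mui}(x)$ is simply the threshold at which $\mui$ drops to $x$), it suffices to understand how this family of sets for $\cA$ relates to those for $\cB$ and $\cC$.

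First I would prove the level-set identity
\begin{equation*}
\{\, y \geq 0 : \mui_\cA(y) \leq x \,\} \;=\; \{\, y_1 \geq 0 : \mui_\cB(y_1) \leq x \,\} \;+\; \{\, y_2 \geq 0 : \mui_\cC(y_2) \leq x \,\}
\end{equation*}
for every $x \geq 0$, where the sum on the right denotes the Minkowski sum of subsets of $[0,\infty)$. The inclusion $\supseteq$ is immediate from Theorem~\ref{MUIadditivethm}: if $\mui_\cB(y_1) \leq x$ and $\mui_\cC(y_2) \leq x$, then $\max\{\mui_\cB(y_1),\mui_\cC(y_2)\} \leq x$, and $\mui_\cA(y_1+y_2)$ is by definition the minimum of such maxima, hence $\leq x$. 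For the reverse inclusion $\subseteq$, I would use that the minimum in Theorem~\ref{MUIadditivethm} is attained: the decomposition simplex $\{(y_1,y_2) : y_1+y_2 = y,\ y_1,y_2 \geq 0\}$ is compact, and $(y_1,y_2)\mapsto \max\{\mui_\cB(y_1),\mui_\cC(y_2)\}$ is lower semicontinuous, since $\mui_\cB$ and $\mui_\cC$, being nonincreasing, may be taken right-continuous. An optimal decomposition $(y_1,y_2)$ then satisfies $\mui_\cB(y_1) \leq \max\{\mui_\cB(y_1),\mui_\cC(y_2)\} = \mui_\cA(y) \leq x$, and similarly $\mui_\cC(y_2)\leq x$, so $y = y_1+y_2$ lies in the Minkowski sum.

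Given this identity, the corollary follows by taking reflections of both sides. Writing $S_\cB(x)$ and $S_\cC(x)$ for the two level sets on the right, each is a nonempty half-line (nonempty since $\mui_\cB$ and $\mui_\cC$ vanish for arguments exceeding $H(B)$ and $H(C)$ by Property~2 of Section~\ref{sec:MUI}), so the reflection extracts its left endpoint; thus $\tilde{\mui}_\cB(x)$ is the endpoint of $S_\cB(x)$, and likewise for $\cC$ and for $\cA$. Since the left endpoint of a Minkowski sum of half-lines is the sum of the left endpoints, $\tilde{\mui}_\cA(x) = \tilde{\mui}_\cB(x) + \tilde{\mui}_\cC(x)$. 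As a consistency check, one can instead start from the additivity of maximal utility, Theorem~\ref{OUIadditivecor}, and verify that the sup-convolution structure of $\oui_\cA$ produces the same reflected relation.

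The one step requiring care is the attainment of the minimum in the reverse inclusion, i.e.\ the behaviour of the step functions $\mui_\cB$ and $\mui_\cC$ at their jump points; because these functions have only finitely many pieces, this can alternatively be dispatched by an explicit finite case analysis on which constant piece of $\mui_\cB$ and of $\mui_\cC$ the optimal split falls into. The remainder of the argument is a formal manipulation of level sets, requiring nothing beyond Theorem~\ref{MUIadditivethm} and the definition of the reflection from Section~\ref{sec:MUI}.
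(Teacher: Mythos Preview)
Your proposal is correct and follows essentially the same route as the paper. The paper presents the argument as a single chain of equalities---substituting Theorem~\ref{MUIadditivethm} into $\tilde{\mui}_\cA(x)=\max\{y:\mui_\cA(y)\leq x\}$, rewriting ``$\min_{y_1+y_2=y}\max\{\mui_\cB(y_1),\mui_\cC(y_2)\}\leq x$'' as ``$\exists\,y_1,y_2\geq 0$ with $y_1+y_2=y$, $\mui_\cB(y_1)\leq x$, $\mui_\cC(y_2)\leq x$'', and then splitting the outer maximum---which is exactly your level-set/Minkowski-sum identity unpacked line by line; your explicit attention to attainment of the minimum is the only substantive addition, and the paper leaves that step implicit.
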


\begin{proof}
Combining the above formula for $\tilde{\mui}_\cA(x)$ with the result of Theorem \ref{MUIadditivethm}, we write
\begin{align*}
\tilde{\mui}_\cA(x)  & = \max\{y:\mui_\cA(y) \leq x \}\\
& = \max \left \{y: \left( \min_{\substack{y_1 + y_2 = y\\y_1,y_2 \geq 0}} \max \big \{ M_\cB(y_1), M_\cC(y_2) \big  \} \right) \leq x \right \}\\
& = \max \left \{y: \Big (\exists y_1, y_2 \geq 0 : 
\left( y_1 + y_2 = y \text{ AND } \max \big \{ M_\cB(y_1), M_\cC(y_2) \big \} \leq x \right) \Big) \right \} \\
& = \max \left \{(y_1 + y_2) : 
\Big (y_1, y_2 \geq 0 \text{ AND } M_\cB(y_1) \leq x \text{ AND } M_\cC(y_2) \leq x \Big) \right \} \\
& = \max \{ y_1 : M_\cB(y_1) \leq x \} + \max \{ y_2 : M_\cB(y_2) \leq x \} \\
& = \tilde{\mui}_\cB(x) + \tilde{\mui}_\cC(x). \qedhere
\end{align*}
\end{proof}

\section{Marginal Utility of Information for Parity Bit Systems}
\label{app:MUIpbit}

Here we compute the MUI for a family of systems which exhibit exchange
symmetry and have a constraint at the largest scale.  Systems in this
class have $N \geq 3$ components and information function given by
\begin{equation}
H(V) = H_{|V|} = \begin{cases}
 |V| & |V| \leq N-1 \\
 N-1 & |V| = N.
\end{cases}
\label{eq:paritybitH}
\end{equation}
This includes example \exD{} as the case $N = 3$.  More generally,
this family includes systems of $N-1$ independent random bits together
with one parity bit.

Since these systems have exchange symmetry, we expect an optimal description to have exchange symmetry as well; that is $I(d;U)$ should depend only on the number of components in $U$.  We therefore use the simplified notation $I_n$ for information that $d$ imparts about any set of $n$ components, $0 \leq n \leq N$.

We begin by establishing three relations among the $I_n$.  First, Constraint~(ii) in Section \ref{sec:MUI} tells us that
\begin{equation}
0 \leq I_N- I_{N-1}
 \leq H_N - H_{N-1}.
\end{equation}
But the right side of this expression vanishes ($H_N = H_{N-1}=N-1$), so we have that 
\begin{equation}
\label{eq:INminus1}
I_{N-1}=I_N.
\end{equation}  

Second, Constraint (iii) applied to disjoint subsets of size $m$ and $n$, with $m+n \leq N-1$, implies that
\begin{equation}
I_m + I_n - I_{m+n}
 \leq H_m + H_n - H_{m+n}.
\end{equation}
The right-hand side of the inequality vanishes by Eq.~\eqref{eq:paritybitH}, so 
\begin{equation}
\label{eq:pbitsubadditive}
I_m + I_n \leq I_{m+n} \qquad \text{for all $m,n \geq 0, \; m+n \leq N-1$.}
\end{equation}
By iteratively applying Eq.~\eqref{eq:pbitsubadditive} we arrive at the inequality
\begin{equation}
\label{eq:I1}
(N-1) I_1 \leq I_{N-1}.
\end{equation}

Third, Constraint (iv) implies that 
\begin{equation}
\label{eq:IN}
I_N \leq y.
\end{equation}

Combining \eqref{eq:INminus1},  \eqref{eq:I1} and \eqref{eq:IN}, we have
\begin{equation}
\label{eq:I1y}
I_1 \leq \frac{I_{N-1}}{N-1} = \frac{I_N}{N-1} \leq \frac{y}{N-1}.
\end{equation}
By definition, the utility of a descriptor in this system satisfies
\begin{equation}
\label{eq:uI1}
u = N I_1.
\end{equation}
Combining with \eqref{eq:I1y} yields the inequality
\begin{equation}
\label{eq:uy}
u \leq \frac{N}{N-1} y.
\end{equation}

Inequality \eqref{eq:uy} places a limit on the utility of any descriptor satisfying exchange symmetry.  To complete the argument, we exhibit a descriptor for which equality holds in~\eqref{eq:uy}.  This descriptor is defined by 
\begin{equation}
I_m = \begin{cases}
 \frac{m}{N-1} \min\{y,N-1\} & 0 \leq m \leq N-1 \\
  \min\{y,N-1\} & m = N.
\end{cases}
\label{eq:paritybitI}
\end{equation}
It is straightforward to verify that Constraints (i)--(iv) are satisfied by this descriptor.  Combining Eqs.~\eqref{eq:uI1} and \eqref{eq:paritybitI}, we have that for $0 \leq y \leq N-1$,
\begin{equation}
u = \frac{N}{N-1} y.
\end{equation}
By inequality \eqref{eq:uy}, this descriptor achieves optimal utility.  
We therefore have
\begin{equation}
\boxed{U(y) = \frac{N}{N-1}y,\qquad M(y) = \frac{N}{N-1},
\qquad 0 \leq y \leq N - 1.}
\end{equation}
Setting $N = 3$, we recover the MUI for example \exD{} as stated in
the main text, Eq.~(\ref{eq:MUI-exD-maintext}).

\section{Complexity Profile and Kinetic Theory}
\label{app:green}
The complexity profile, Eq.~(\ref{eq:CP}), presented in
\cite{baryam2004b}, has an intellectual antecedent in a series
expansion for entropy introduced by Green in kinetic theory
\cite{green1952,nettleton1958}.  Though this familial relationship has
been acknowledged in the literature \cite{sgs2004}, it has yet to be
studied in detail.  We shall do so in this appendix.

In kinetic theory, we deal with large numbers of particles, at least
comparable to Avogadro's number ($\approx 10^{23}$).  We use
statistical methods to deduce macroscopic properties of the aggregate
from our knowledge of the microscopic interactions among individual
atoms~\cite{kardar2007}.  A microstate of the system is uniquely
identified by specifying the positions and momenta of all the atoms.
If we are uncertain about what the system's microstate might be, we
can encapsulate our knowledge of the system in a probability
distribution defined over the set of all possible microstates, which
we write
\begin{equation}
\rho = \rho(\vec{p}_1,\vec{p}_2,\ldots,\vec{p}_N,
            \vec{q}_1,\vec{q}_2,\ldots,\vec{q}_N).
\end{equation}

Often, we do not care \emph{which} particle is doing something, only
that \emph{any} particle is.  So, we define a one-particle
distribution function by projecting $\rho$ down to a single particle.
By further assuming that the density $\rho$ is symmetric under
particle exchange, we can write the one-particle distribution function as
\begin{equation}
f_{\rm I}(\vec{p},\vec{q},t) = N \int \prod_{i = 2}^N d^3\vec{p}_i d^3\vec{q}_i
   \,\rho(\vec{p}_1 = \vec{p}, \vec{q}_1 = \vec{q},
          \vec{p}_2, \vec{q}_2, \ldots, \vec{p}_N, \vec{q}_N, t).
\end{equation}
A two-body probability density can be defined in a similar way:
\begin{equation}
f_{\rm II}(\vec{p}_1,\vec{q}_1,\vec{p}_2,\vec{q}_2,t) =
   N(N - 1) \int \prod_{i = 3}^N dV_i\, \rho(\vec{p}_1,\vec{q}_1,
                                           \vec{p}_2,\vec{q}_2,\ldots,
                                           \vec{p}_N,\vec{q}_N,t).
\label{eq:fII}
\end{equation}

Green \cite{green1952} provides a way of estimating the Shannon
entropy of the full phase-space distribution $\rho$ in terms of
lower-scale correlation functions $f_l$.  The series expansion for
$S[\rho]$ involves particular logarithmic transforms of the functions
$f_l$, which we will now investigate.

Green suggests looking at the quantity $z_{\rm II}^{ij}$, defined by
the following relation:
\begin{equation}
\exp z_{\rm II}^{ij} = \frac{f_{\rm II}^{ij}}{f_{\rm I}^i f_{\rm
    I}^j}.
\label{eq:zII}
\end{equation}
This quantity indicates the probability of finding two molecules in a
given configuration, {\em relative} to the case where they do not
influence each other.  When $z_{\rm II}^{ij}$ is zero, then its
exponential is unity, and the molecules are statistically independent.

If we define the first-scale $z$-function by
\begin{equation}
\log f_{\rm I}^i = z_{\rm I}^i,
\end{equation}
then we can rewrite Eq.~(\ref{eq:zII}) in the following suggestive
way:
\begin{equation}
\log f_{\rm II}^{ij} = z_{\rm II}^{ij} + z_{\rm I}^{i} + z_{\rm I}^{j}.
\end{equation}
The logarithm of the second-scale $f$-function is a sum over
$z$-functions of first and second scale, and at each scale, all
possible groupings of that size are represented.  This suggests a
natural way to define $z_{\rm III}$ in terms of the three-body
function $f_{\rm III}$:
\begin{equation}
\log f_{\rm III}^{ijk} = z_{\rm III}^{ijk} 
   + z_{\rm II}^{ij} + z_{\rm II}^{jk} + z_{\rm II}^{ki}
   + z_{\rm I}^{i} + z_{\rm I}^{j} + z_{\rm I}^{k}.
\label{eq:fIII}
\end{equation}
Generally speaking, higher-scale $z_l$ are defined by writing the
logarithm of $f_l$ as the sum of $z$-functions of all scales up to
$l$, with each possible subset of $l$ molecules represented by a term.
Inverting these relations gives
\begin{align}
z_{\rm I}^i &= \log f_{\rm I}^i\\
z_{\rm II}^{ij} &= \log f_{\rm II}^{ij}
                  - \log f_{\rm I}^i - \log f_{\rm I}^j  \\
z_{\rm III}^{ijk} &= \log f_{\rm III}^{ijk}
                  - \log f_{\rm II}^{ij} - \log f_{\rm II}^{jk}
                  - \log f_{\rm II}^{ki}
                  + \log f_{\rm I}^i
                  + \log f_{\rm I}^j
                  + \log f_{\rm I}^k \label{eq:zIII}
\end{align}
Again, all possible groupings appear on the right-hand side, although
this time the signs are varied.  Each term has a prefactor $(-1)^l$,
where $l$ is the number of molecules ``left out'' from the group.  For
example, the coefficient on the $f_{\rm I}$ terms in $z_{\rm III}$ is
$(-1)^{{\rm III} - {\rm I}} = 1$.

The ``fine-grained'' entropy we wish to estimate is
\begin{equation}
S = -\frac{k_B}{N!} \int d\vec{q}_1\cdots d\vec{q}_N
                         d\vec{p}_1\cdots d\vec{p}_N\,
                         f_N \log f_N.
\end{equation}
Green observed that if all $N$ molecules are identical, the $\log f_N$
can be expanded in the following way:
\begin{equation}
S = -\frac{k_B}{N!} \int d\vec{q}_1\cdots d\vec{q}_N
                         d\vec{p}_1\cdots d\vec{p}_N\,
                         f_N \left[\binom{N}{1} z_{\rm I}^1
                                  + \binom{N}{2} z_{\rm II}^{12}
                                  + \ldots + z_N\right].
\end{equation}
Bringing the $1/N!$ into the integral cancels some of the
combinatorial factors, and others can be absorbed into the symmetry
factors which occur in the multi-particle distribution functions, as
in Eq.~(\ref{eq:fII}).  This yields
\begin{equation}
\boxed{S = -k_B \left[\frac{1}{1!}
                      \int d\vec{q}_1d\vec{p}_1\, f_{\rm I} z_{\rm I}
                      + \frac{1}{2!}
                        \int d\vec{q}_1d\vec{p}_1
                             d\vec{q}_2d\vec{p}_2\, f_{\rm II} z_{\rm II} +
               \ldots\right].}
\label{eq:green-expansion}
\end{equation}
Eq.~(\ref{eq:green-expansion}) is Green's expansion for the entropy.
It is a sum over scales:  the total entropy is given by a one-particle
contribution, plus a correction due to two-particle correlations, and
so on.

Note that the two-molecule correction term in
Eq.~(\ref{eq:green-expansion}) is, up to a symmetry factor, just the
mutual information between molecules, since
\begin{equation}
\int f_{\rm II} z_{\rm II} = \int f_{\rm II}^{12} 
    \log \frac{f_{\rm II}^{12}}{f_{\rm I}^1 f_{\rm I}^2}.
\end{equation}
When the molecules are uncorrelated, the mutual information vanishes
and the approximation gives the exact entropy, as expected.
Furthermore, the next correction, involving an integral over $f_{\rm
  III} z_{\rm III}$, is a constant factor times the {\em multivariate}
mutual information defined in Section~\ref{Dependencies}.  Compare the
signs in Eq.~(\ref{eq:C3}), where we wrote $C(3)$ for a
three-component system, to those in Eq.~(\ref{eq:zIII}), where we
defined $z_{\rm III}$: terms involving one or three components
(molecules) get a $+$ sign, while those involving two components
(molecules) get a $-$ sign.  The sign choices in the complexity
profile, Eq.~(\ref{eq:CP}), are revealed as the signs produced by
inverting the system of linear equations which define $f_l$ in terms
of $z_l$, as in Eq.~(\ref{eq:fIII}).

Wolf \cite{wolf1996} writes the total system entropy as a sum of
``information correlations'' equivalent to
Eq.~(\ref{eq:green-expansion}) and derives formulas which we can
identify as multivariate mutual information functions; however, to our
knowledge, the connection to $D(k)$ and $C(k)$ has not been made
explicit in the literature until now.  (To illustrate how mathematical
discovery happens: The thesis of Wolf \cite{wolf1996} rediscovers
multiple mutual information without naming it as such or connecting to
the literature \cite{mcgill1980,han1980,yeung1991,jakulin2003}.  Also,
it reinvents the composition of Joyal's {\em esp\`eces de structure,}
or combinatorial species \cite{joyal1981, bergeron1998, baez2000,
  morton2006}, without drawing the connection to combinatorial species
theory.)

Green's approximation has also seen some use in certain
\emph{nonequilibrium} molecular dynamics work, where one must consider
the \emph{time evolution} of the joint probability distribution
$\rho$.  In this context, the full sum over~$I_k$ is ill-behaved (due
to global constraints affecting the largest-$k$ contributions) and not
necessarily physically meaningful.  Truncating the sum at a small
value of~$k$ gives a more meaningful result~\cite{evans1989,
  evans2002}.  The issue of how a probability distribution like~$\rho$
depends on time is a subtle one.  When talking of dynamical systems,
our probability assignments really carry \emph{two} time indices: one
for the time when we have information in hand, and the other for the
time to which that information pertains.  If we ascribe probabilities
in a certain way about what the microstate might be at time $t$, and
we specify the interactions which can exist between individual atoms,
then our hands are forced: to be consistent with how we assign
probabilities for the microstate at $t$, we must make probability
assignments about what will happen at another time $t'$ in accord with
the Liouville equation~\cite{kardar2007}.  This is different than what
would happen if we gained new information at a later time and changed
our probability distribution accordingly.  

Having placed the Green expansion, Eq.~(\ref{eq:green-expansion}), in
its proper context, we can find applications for it beyond kinetic
theory.  For example, in the study of complex networks, one statistic
of interest is the Shannon entropy of a network's degree distribution.
This quantity is invoked when exploring, \emph{e.g.,} the response of
a system modeled by a network to an external attack or perturbation
\cite{wang2005}.  However, focusing on the degrees of a network's
nodes ignores the possibility of \emph{degree-degree correlations,}
which are known to be nontrivial in many cases of interest: the
probability that a node of degree $d$ is linked with another node of
degree $d'$ is not always computable knowing only the probability
distribution of node degrees $p(d)$ \cite{dorogovtsev2010, boguna2002,
  dorogovtsev2004}.  In turn, measuring the degree-degree correlation
itself fails to capture possible structure of higher rank
\cite{mahadevan2006}.  If we define a family of distribution functions
$f_l(d_1,\ldots,d_l)$, each giving the probability that the nodes in a
subnetwork of size $l$ have the degrees $d_1,\ldots,d_l$, then we can
use Green's approach to calculate the multivariate mutual information
content at scale $l$.  The overall degree-based complexity of the
network is then $C(1)$, as found by Eq.~(\ref{eq:green-joint-info}).
For a network which has no structure at rank $k \geq 2$, such as a
simple Erd\H{o}s--R\'enyi model, $C(1) = D(1)$.

\bibliographystyle{utphys}
\bibliography{multiscale.bib}

\providecommand{\href}[2]{#2}\begingroup\raggedright\begin{thebibliography}{10%
0}

\bibitem{baryam2003}
Y.~Bar-\!Yam, {\em Dynamics of Complex Systems}.
\newblock Westview Press, 2003.
\newblock \url{http://necsi.edu/publications/dcs/index.html}.

\bibitem{haken2006}
H.~Haken, {\em Information and Self-Organization: A Macroscopic Approach to
  Complex Systems}.
\newblock Springer Verlag, 2006.

\bibitem{millerpage2007}
J.~H. Miller and S.~E. Page, {\em Complex Adaptive Systems: An Introduction to
  Computational Models of Social Life}.
\newblock Princeton University Press, 2007.

\bibitem{boccara2010}
N.~Boccara, {\em Modeling Complex Systems}.
\newblock Springer Verlag, 2010.

\bibitem{newman2011}
M.~E.~J. Newman, ``Complex systems: A survey,''
  \href{http://dx.doi.org/10.1119/1.3590372}{{\em American Journal of Physics}
  {\bf 79} (2011)  800--10}, \href{http://arxiv.org/abs/1112.1440}{{\tt
  arXiv:1112.1440 [cond-mat.stat-mech]}}.

\bibitem{sethna2006}
J.~P. Sethna, {\em Statistical Mechanics: Entropy, Order Parameters, and
  Complexity}.
\newblock Oxford University Press, 2006.
\newblock \url{http://pages.physics.cornell.edu/~sethna/StatMech/}.

\bibitem{ncatlab-universality}
B.~C. Stacey {\em et al.}, ``\,`{U}niversality class' [entry on the {nLab}
  website].'' \url{http://ncatlab.org/nlab/show/universality+class}, 2013.

\bibitem{castellano2009}
C.~Castellano, S.~Fortunato, and V.~Loreto, ``Statistical physics of social
  dynamics,'' \href{http://dx.doi.org/10.1103/RevModPhys.81.591}{{\em Reviews
  of Modern Physics} {\bf 81} (2009) no.~2, 591--646},
  \href{http://arxiv.org/abs/0710.3256}{{\tt arxiv:0710.3256
  [physics.soc-ph]}}.

\bibitem{watts1998}
D.~Watts and S.~Strogatz, ``Collective dynamics of `small-world' networks,''
  \href{http://dx.doi.org/10.1038/30918}{{\em Nature} {\bf 393} (1998)
  no.~6684, 440--442}.

\bibitem{barabasi1999}
A.~L. Barab\'asi and R.~Albert, ``Emergence of scaling in random networks,''
  \href{http://dx.doi.org/10.1126/science.286.5439.509}{{\em Science} {\bf 286}
  (1999)  509--12}, \href{http://arxiv.org/abs/cond-mat/9910332}{{\tt
  arXiv:cond-mat/9910332}}.

\bibitem{strogatz2001}
S.~Strogatz, ``Exploring complex networks,''
  \href{http://dx.doi.org/10.1038/35065725}{{\em Nature} {\bf 410} (2001)
  no.~6825, 268--276}.

\bibitem{newman2003networks}
M.~E.~J. Newman, ``The structure and function of complex networks,'' {\em SIAM
  Review} {\bf 45} (2003) no.~2, 167--256,
  \href{http://arxiv.org/abs/condmat/0303516}{{\tt arXiv:condmat/0303516}}.

\bibitem{wang2003}
X.~F. Wang and G.~Chen, ``Complex networks: small-world, scale-free and
  beyond,'' \href{http://dx.doi.org/10.1109/MCAS.2003.1228503}{{\em Circuits
  and Systems Magazine, IEEE} {\bf 3} (2003) no.~1, 6--20}.

\bibitem{watts2003}
D.~J. Watts, {\em Small worlds: the Dynamics of Networks Between Order and
  Randomness}.
\newblock Princeton University Press, 2003.

\bibitem{boccaletti2006networks}
S.~Boccaletti, V.~Latora, Y.~Moreno, M.~Chavez, and D.-U. Hwang, ``Complex
  networks: Structure and dynamics,''
  \href{http://dx.doi.org/10.1016/j.physrep.2005.10.009}{{\em Physics Reports}
  {\bf 424} (2006) no.~4-5, 175--308}.

\bibitem{dorogovtsev2010}
S.~N. Dorogovtsev, {\em Lectures on Complex Networks}.
\newblock Oxford University Press, 2010.

\bibitem{chopard1998cellular}
B.~Chopard and M.~Droz, {\em Cellular Automata Modeling of Physical Systems}.
\newblock Cambridge University Press, Cambridge, UK, 1998.

\bibitem{hoekstra2010}
A.~G. Hoekstra, J.~Kroc, and P.~M.~A. Sloot,
  \href{http://dx.doi.org/10.1007/978-3-642-12203-3}{{\em Simulating Complex
  Systems by Cellular Automata}}.
\newblock Springer, 2010.

\bibitem{stacey2011}
B.~C. Stacey, A.~Gros, and Y.~Bar-\!Yam, ``Eco-evolutionary feedback in
  host--pathogen spatial dynamics,'' Tech. Rep. 2011-10-02 (updated), NECSI,
  2014.
\newblock \href{http://arxiv.org/abs/1110.3845}{{\tt arXiv:1110.3845
  [nlin.CG]}}.
\newblock \url{http://necsi.edu/research/evoeco/spatialevolution.html}.

\bibitem{liggett1985interacting}
T.~M. Liggett, {\em Interacting particle systems}.
\newblock Springer Verlag, 1985.

\bibitem{durrett1994}
R.~Durrett and S.~Levin, ``The importance of being discrete (and spatial),''
  \href{http://dx.doi.org/10.1006/tpbi.1994.1032}{{\em Theoretical Population
  Biology} {\bf 46} (1994) no.~3, 363--394}.

\bibitem{axelrod1997complexity}
R.~M. Axelrod, {\em The Complexity of Cooperation: Agent-Based Models of
  Competition and Collaboration}.
\newblock Princeton University Press, 1997.

\bibitem{helbing2001traffic}
D.~Helbing, ``Traffic and related self-driven many-particle systems,''
  \href{http://dx.doi.org/10.1103/RevModPhys.73.1067}{{\em Rev. Mod. Phys.}
  {\bf 73} (2001) no.~4, 1067--1141},
  \href{http://arxiv.org/abs/cond-mat/0012229}{{\tt arXiv:cond-mat/0012229}}.

\bibitem{bonabeau2002}
E.~Bonabeau, ``Agent-based modeling: Methods and techniques for simulating
  human systems,'' \href{http://dx.doi.org/10.1073/pnas.082080899}{{\em
  Proceedings of the National Academy of Sciences of the USA} {\bf 99} (2002)
  no.~Suppl 3, 7280--7287}.

\bibitem{mackey1977oscillation}
M.~Mackey and L.~Glass, ``Oscillation and chaos in physiological control
  systems,'' \href{http://dx.doi.org/10.1126/science.267326}{{\em Science} {\bf
  197} (1977) no.~4300, 287}.

\bibitem{may1976simple}
R.~May, ``Simple mathematical models with very complicated dynamics,''
  \href{http://dx.doi.org/10.1038/261459a0}{{\em Nature} {\bf 261} (1976)
  no.~5560, 459--467}.

\bibitem{ott1981}
E.~Ott, ``Strange attractors and chaotic motions of dynamical systems,''
  \href{http://dx.doi.org/10.1103/RevModPhys.53.655}{{\em Reviews of Modern
  Physics} {\bf 53} (1981) no.~4, 655--671}.

\bibitem{turing1952}
A.~M. Turing, ``The chemical basis of morphogenesis,''
  \href{http://dx.doi.org/10.1098/rstb.1952.0012}{{\em Philosophical
  Transactions of the Royal Society of London B} {\bf 237} (1952) no.~641,
  37--72}.

\bibitem{cross1993pattern}
M.~C. Cross and P.~C. Hohenberg, ``Pattern formation outside of equilibrium,''
  \href{http://dx.doi.org/10.1103/RevModPhys.65.851}{{\em Reviews of Modern
  Physics} {\bf 65} (1993) no.~3, 851}.

\bibitem{maini1997spatial}
P.~K. Maini, K.~J. Painter, and H.~N.~P. Chau, ``Spatial pattern formation in
  chemical and biological systems,''
  \href{http://dx.doi.org/10.1039/A702602A}{{\em Journal of the Chemical
  Society, Faraday Transactions} {\bf 93} (1997) no.~20, 3601--3610}.

\bibitem{grassberger1986}
P.~Grassberger, ``Toward a quantitative theory of self-generated complexity,''
  \href{http://dx.doi.org/10.1007/BF00668821}{{\em International Journal of
  Theoretical Physics} {\bf 25} (1986)  907--938}.

\bibitem{crutchfield1989}
J.~P. Crutchfield and K.~Young, ``Inferring statistical complexity,''
  \href{http://dx.doi.org/10.1103/PhysRevLett.63.105}{{\em Physical Review
  Letters} {\bf 63} (1989) no.~2, 105--108}.

\bibitem{crutchfield1994}
J.~P. Crutchfield, ``The calculi of emergence: computation, dynamics and
  induction,'' \href{http://dx.doi.org/10.1016/0167-2789(94)90273-9}{{\em
  Physica D: Nonlinear Phenomena} {\bf 75} (1994) no.~1-3, 11--54}.

\bibitem{baryam2004b}
Y.~Bar-\!Yam, ``Multiscale complexity/entropy,'' {\em Advances in Complex
  Systems} {\bf 7} (2004)  47--63. \url{http://necsi.edu/research/multiscale/}.

\bibitem{shannon1948}
C.~Shannon, ``A mathematical theory of communication,'' {\em Bell Systems
  Technical Journal} {\bf 27} (1948)  379--423.
  \url{http://cm.bell-labs.com/cm/ms/what/shannonday/paper.html}.

\bibitem{shallit2001}
J.~Shallit and M.~Wang, ``Automatic complexity of strings,'' {\em Journal of
  Automata, Languages and Combinatorics} {\bf 6} (2001) no.~4, 537--54.
  \url{http://www.math.uwaterloo.ca/~shallit/Papers/auto5.ps}.

\bibitem{calude2009}
C.~S. Calude, K.~Salomaa, and T.~K. Roblot, ``Finite-state complexity and
  randomness,'' Tech. Rep. 374, CDMTCS, 2009.
\newblock
  \url{http://www.cs.auckland.ac.nz/CDMTCS/researchreports/374cris.pdf}.

\bibitem{ahnert2010}
S.~E. Ahnert {\em et al.}, ``Self-assembly, modularity and physical
  complexity,'' {\em Physical Review E} {\bf 82} (2010)  026117,
  \href{http://arxiv.org/abs/0912.3464}{{\tt arXiv:0912.3464
  [cond-mat.stat-mech]}}.

\bibitem{steudel2010}
B.~Steudel {\em et al.}, ``Causal {M}arkov condition for submodular information
  measures,'' \href{http://arxiv.org/abs/1002.4020}{{\tt 1002.4020 [cs.IT]}}.

\bibitem{allen2008}
B.~Allen, ``The category-theoretic arithmetic of information,''
  \href{http://arxiv.org/abs/0803.3608}{{\tt arXiv:0803.3608 [math.CT]}}.

\bibitem{dougherty2007}
R.~Dougherty, C.~Freiling, and K.~Zeger, ``Networks, matroids, and
  non-{S}hannon information inequalities,''
  \href{http://dx.doi.org/10.1109/TIT.2007.896862}{{\em IEEE Transactions on
  Information Theory} {\bf 53} (2007) no.~6, 1949--69}.

\bibitem{jacob1961}
F.~Jacob and J.~Monod, ``Genetic regulatory mechanisms in the synthesis of
  proteins,'' \href{http://dx.doi.org/10.1016/S0022-2836(61)80072-7}{{\em
  Journal of Molecular Biology} {\bf 3} (1961) no.~3, 318--356}.

\bibitem{britten1969gene}
R.~J. Britten and E.~H. Davidson, ``Gene regulation for higher cells: a
  theory,'' \href{http://dx.doi.org/10.1126/science.165.3891.349}{{\em Science}
  {\bf 165} (1969) no.~891, 349--357}.

\bibitem{carey2001transcriptional}
M.~Carey and S.~Smale, {\em Transcriptional regulation in eukaryotes: concepts,
  strategies, and techniques}.
\newblock Cold Spring Harbor Laboratory Pr, 2001.

\bibitem{elowitz2002stochastic}
M.~B. Elowitz, A.~J. Levine, E.~D. Siggia, and P.~S. Swain, ``Stochastic gene
  expression in a single cell,''
  \href{http://dx.doi.org/10.1126/science.1070919}{{\em Science} {\bf 297}
  (2002) no.~5584, 1183--1186},
  \href{http://www.ncbi.nlm.nih.gov/pubmed/12183631}{{\tt PMID:12183631}}.

\bibitem{lee2002}
T.~I. Lee, N.~J. Rinaldi, F.~Robert, D.~T. Odom, Z.~Bar-Joseph, G.~K. Gerber,
  N.~M. Hannett, C.~T. Harbison, C.~M. Thompson, I.~Simon, J.~Zeitlinger, E.~G.
  Jennings, H.~L. Murray, D.~B. Gordon, B.~Ren, J.~J. Wyrick, J.-B. Tagne,
  T.~L. Volkert, E.~Fraenkel, D.~K. Gifford, and R.~A. Young, ``Transcriptional
  regulatory networks in \emph{{S}accharomyces cerevisiae},''
  \href{http://dx.doi.org/10.1126/science.1075090}{{\em Science} {\bf 298}
  (2002) no.~5594, 799--804},
  \href{http://www.ncbi.nlm.nih.gov/pubmed/12399584}{{\tt PMID:12399584}}.

\bibitem{shen2002network}
S.~S. Shen-Orr, R.~Milo, S.~Mangan, and U.~Alon, ``Network motifs in the
  transcriptional regulation network of \emph{{E}scherichia coli},''
  \href{http://dx.doi.org/10.1038/ng881}{{\em Nature Genetics} {\bf 31} (2002)
  no.~1, 64--68}.

\bibitem{boyer2005}
L.~A. Boyer, T.~I. Lee, M.~F. Cole, S.~E. Johnstone, S.~S. Levine, J.~P.
  Zucker, M.~G. Guenther, R.~M. Kumar, H.~L. Murray, R.~G. Jenner, D.~K.
  Gifford, D.~A. Melton, R.~Jaenisch, and R.~A. Young, ``Core transcriptional
  regulatory circuitry in human embryonic stem cells,''
  \href{http://dx.doi.org/10.1016/j.cell.2005.08.020}{{\em Cell} {\bf 122}
  (2005) no.~6, 947--956}.

\bibitem{chowdhury2010}
S.~Chowdhury {\em et al.}, ``Information propagation within the genetic network
  of \emph{{S}accharomyces cerevisiae},''
  \href{http://dx.doi.org/10.1186/1752-0509-4-143}{{\em BMC Systems Biology}
  {\bf 4} (2010)  143}, \href{http://www.ncbi.nlm.nih.gov/pmc/2975643}{{\tt
  PMC:2975643}}.

\bibitem{hopfield1982neural}
J.~J. Hopfield, ``Neural networks and physical systems with emergent collective
  computational abilities,'' {\em Proceedings of the National Academy of
  Sciences of the USA} {\bf 79} (1982) no.~8, 2554--2558,
  \href{http://www.ncbi.nlm.nih.gov/pmc/346238}{{\tt PMC:346238}}.
  \url{http://www.pnas.org/content/79/8/2554.abstract}.

\bibitem{rabinovich2006}
M.~I. Rabinovich, P.~Varona, A.~I. Selverston, and H.~D.~I. Abarbanel,
  ``Dynamical principles in neuroscience,''
  \href{http://dx.doi.org/10.1103/RevModPhys.78.1213}{{\em Reviews of Modern
  Physics} {\bf 78} (2006) no.~4, 1213--1265}.

\bibitem{schneidman2006weak}
E.~Schneidman, M.~J. Berry, R.~Segev, and W.~Bialek, ``Weak pairwise
  correlations imply strongly correlated network states in a neural
  population,'' \href{http://dx.doi.org/10.1038/nature04701}{{\em Nature} {\bf
  440} (2006) no.~7087, 1007--1012}.

\bibitem{mandelbrot1967distribution}
B.~Mandelbrot and H.~Taylor, ``On the distribution of stock price
  differences,'' \href{http://dx.doi.org/10.2307/168611}{{\em Operations
  Research} (1967)  1057--1062}.

\bibitem{mantegna1999}
R.~N. Mantegna, ``Hierarchical structure in financial markets,''
  \href{http://dx.doi.org/10.1007/s100510050929}{{\em The European Physical
  Journal B - Condensed Matter and Complex Systems} {\bf 11} (1999)  193--197}.

\bibitem{sornette2004stock}
D.~Sornette, {\em Why stock markets crash: critical events in complex financial
  systems}.
\newblock Princeton University Press, 2004.

\bibitem{may2008}
R.~M. May, S.~A. Levin, and G.~Sugihara, ``Complex systems: ecology for
  bankers,'' \href{http://dx.doi.org/10.1038/451893a}{{\em Nature} {\bf 451}
  (2008) no.~7181, 893--895}.

\bibitem{schweitzer2009}
F.~Schweitzer, G.~Fagiolo, D.~Sornette, F.~Vega-Redondo, A.~Vespignani, and
  D.~R. White, ``Economic networks: The new challenges,''
  \href{http://dx.doi.org/10.1126/science.1173644}{{\em Science} {\bf 325}
  (2009) no.~5939, 422--425},
  \href{http://www.ncbi.nlm.nih.gov/pubmed/19628858}{{\tt PMID:19628858}}.
  \url{http://citeseerx.ist.psu.edu/viewdoc/summary?doi=10.1.1.231.2754}.

\bibitem{harmon2010}
D.~Harmon, B.~C. Stacey, and Y.~Bar-\!Yam, ``Networks of economic market
  independence and systemic risk,'' Tech. Rep. 2009-03-01 (updated), NECSI,
  2010.
\newblock \href{http://arxiv.org/abs/1011.3707}{{\tt arXiv:1011.3707
  [q-fin.ST]}}.

\bibitem{haldane2011systemic}
A.~G. Haldane and R.~M. May, ``Systemic risk in banking ecosystems,''
  \href{http://dx.doi.org/10.1038/nature09659}{{\em Nature} {\bf 469} (2011)
  no.~7330, 351--355}.

\bibitem{harmon2011predicting}
D.~Harmon, M.~De~Aguiar, D.~Chinellato, D.~Braha, I.~Epstein, and Y.~Bar-\!Yam,
  ``Predicting economic market crises using measures of collective panic,''
  \href{http://arxiv.org/abs/1102.2620}{{\tt arXiv:1102.2620 [q-fin.ST]}}.
  \url{http://www.necsi.edu/research/economics/economicpanic.html}.

\bibitem{misra2011}
V.~Misra, M.~Lagi, and Y.~Bar-\!Yam, ``Evidence of market manipulation in the
  financial crisis,'' Tech. Rep. 2011-12-01, NECSI, 2011.
\newblock \href{http://arxiv.org/abs/1112.3095}{{\tt arXiv:1112.3095
  [q-fin.GN]}}.
\newblock \url{http://www.necsi.edu/research/economics/bearraid.html}.

\bibitem{domb1972}
C.~Domb and M.~S. Green, eds., {\em Phase {T}ransitions and {C}ritical
  {P}henomena}.
\newblock Academic Press, 1972.

\bibitem{kardar2007b}
M.~Kardar, {\em Statistical Physics of Fields}.
\newblock Cambridge University Press, 2007.
\newblock \url{http://www.mit.edu/~kardar/teaching/index.html}.

\bibitem{feynman1996}
R.~P. Feynman, T.~Hey, and R.~W. Allen, {\em The Feynman Lectures on
  Computation}.
\newblock Addison--Wesley, 1996.

\bibitem{sgs2004}
S.~Gheorghiu-Svirschevski and Y.~Bar-\!Yam, ``Multiscale analysis of
  information correlations in an infinite-range, ferromagnetic {I}sing
  system,'' {\em Physical Review E} {\bf 70} (2004)  066115.
  \url{http://necsi.edu/research/multiscale/}.

\bibitem{erickson1996}
M.~J. Erickson, {\em Introduction to Combinatorics}.
\newblock Wiley, 1996.

\bibitem{yeung1991}
R.~W. Yeung, ``A new outlook on {S}hannon's information measures,''
  \href{http://dx.doi.org/10.1109/18.79902}{{\em IEEE Transactions on
  Information Theory} {\bf 37} (1991)  466--74}.

\bibitem{cover1991}
T.~M. Cover and J.~A. Thomas, {\em Elements of Information Theory}.
\newblock Wiley, 1991.

\bibitem{mcgill1980}
W.~J. McGill, ``Multivariate information transmission,''
  \href{http://dx.doi.org/10.1109/TIT.1954.1057469}{{\em Psychometrika} {\bf
  46} (1980)  26--45}.

\bibitem{han1980}
T.~S. Han, ``Multiple mutual information and multiple interactions in frequency
  data,'' \href{http://dx.doi.org/10.1016/S0019-9958(80)90478-7}{{\em
  Information and Control} {\bf 46} (1980)  26--45}.

\bibitem{caves1996}
C.~M. Caves, ``\,`{Is} there a mutual information for three random variables?'
  [note to {M.~Nielsen}].''
  \url{http://info.phys.unm.edu/~caves/reports/reports.html}, 1996.

\bibitem{cerf1997}
N.~J. Cerf and C.~Adami, ``Entropic {B}ell inequalities,'' {\em Physical Review
  A} {\bf 55} (1997) no.~5, 3371--74,
  \href{http://arxiv.org/abs/quant-ph/9608047}{{\tt arXiv:quant-ph/9608047}}.

\bibitem{jakulin2003}
A.~Jakulin and I.~Bratko, ``Quantifying and visualizing attribute
  interactions,'' \href{http://arxiv.org/abs/cs.AI/0308002}{{\tt
  arXiv:cs.AI/0308002}}.

\bibitem{baryam2004a}
Y.~Bar-\!Yam, ``A mathematical theory of strong emergence using multiscale
  variety,'' {\em Complexity} {\bf 9} (2004) no.~6, 15--24.
  \url{http://necsi.edu/research/multiscale/}.

\bibitem{baryam2004c}
Y.~Bar-\!Yam, ``Multiscale variety in complex systems,'' {\em Complexity} {\bf
  9} (2004) no.~4, 37--45. \url{http://necsi.edu/research/multiscale/}.

\bibitem{metzler2005}
R.~Metzler and Y.~Bar-\!Yam, ``Multiscale complexity of correlated
  {G}aussians,'' {\em Physical Review E} {\bf 71} (2005)  046114.
  \url{http://necsi.edu/research/multiscale/}.

\bibitem{kolchinsky2011}
A.~Kolchinsky and L.~M. Rocha, ``Prediction and modularity in dynamical
  systems,'' \href{http://arxiv.org/abs/1106.3703}{{\tt 1106.3703 [nlin.AO]}}.

\bibitem{james2011}
R.~G. James {\em et al.}, ``Anatomy of a bit: {I}nformation in a time series
  observation,'' \href{http://arxiv.org/abs/1105.2988}{{\tt 1105.2988
  [cs.IT]}}.

\bibitem{baryam2012}
Y.~Bar-\!Yam, D.~Harmon, and Y.~Bar-\!Yam, ``Computationally tractable pairwise
  complexity profile,'' \href{http://dx.doi.org/10.1002/cplx.21437}{{\em
  Complexity} {\bf 18} (2013) no.~5, 20--27},
  \href{http://arxiv.org/abs/1208.0823}{{\tt arXiv:1208.0823 [nlin.PS]}}.

\bibitem{gallagher1968information}
R.~G. Gallagher, {\em Information Theory and Reliable Communication}.
\newblock John Wiley, 1968.

\bibitem{perfect1981independence}
H.~Perfect, ``Independence theory and matroids,'' {\em The Mathematical
  Gazette} {\bf 65} (1981) no.~432, 103--11.
  \url{http://www.jstor.org/stable/3615730}.

\bibitem{may2010systemic}
R.~M. May and N.~Arinaminpathy, ``Systemic risk: the dynamics of model banking
  systems,'' \href{http://dx.doi.org/10.1098/rsif.2009.0359}{{\em Journal of
  The Royal Society Interface} {\bf 7} (2010) no.~46, 823--838}.

\bibitem{beale2011}
N.~Beale, D.~G. Rand, H.~Battey, K.~Croxson, R.~M. May, and M.~A. Nowak,
  ``Individual versus systemic risk and the regulator's dilemma,''
  \href{http://dx.doi.org/10.1073/pnas.1105882108}{{\em Proceedings of the
  National Academy of Sciences} {\bf 108} (2011) no.~31, 12647--12652}.

\bibitem{wets1966}
R.~J.~B. Wets, ``Programming under uncertainty: The equivalent convex
  program,'' {\em SIAM Journal on Applied Mathematics} {\bf 14} (1966) no.~1,
  89--105. \url{http://www.jstor.org/stable/2946179}.

\bibitem{green1952}
H.~Green, {\em The Molecular Theory of Fluids}.
\newblock North--Holland, 1952.

\bibitem{bennett1982}
C.~H. Bennett, ``The thermodynamics of computation---a review,''
  \href{http://dx.doi.org/10.1007/BF02084158}{{\em International Journal of
  Theoretical Physics} {\bf 21} (1982) no.~12, 905--40}.

\bibitem{delrio2010}
L.~del Rio {\em et al.}, ``The thermodynamic meaning of negative entropy,''
  {\em Nature} {\bf 474} (2011)  61--63,
  \href{http://arxiv.org/abs/1009.1630}{{\tt arXiv:1009.1630 [quant-ph]}}.

\bibitem{toyabe2010}
S.~Toyabe {\em et al.}, ``Experimental demonstration of information-to-energy
  conversion and validation of the generalized {J}arzynski equality,'' {\em
  Nature Physics} {\bf 6} (2010)  988--92,
  \href{http://arxiv.org/abs/1009.5287}{{\tt arXiv:1009.5287
  [cond-mat.stat-mech]}}.

\bibitem{koski2014}
J.~V. Koski, V.~F. Maisi, J.~P. Pekola, and D.~V. Averin, ``Experimental
  realization of a {Szilard} engine with a single electron,''
  \href{http://dx.doi.org/10.1073/pnas.1406966111}{{\em PNAS} {\bf 111} (2014)
  online before print}, \href{http://arxiv.org/abs/1402.5907}{{\tt
  arXiv:1402.5907 [cond-mat.mes-hall]}}.

\bibitem{jun2014}
Y.~Jun, M.~Gavrilov, and J.~Bechhoefer, ``High-precision test of {Landauer's}
  principle in a feedback trap,'' \href{http://arxiv.org/abs/1408.5089}{{\tt
  arXiv:1408.5089 [cond-mat.stat-mech]}}.

\bibitem{Sagawa2012}
T.~Sagawa and M.~Ueda, ``Nonequilibrium thermodynamics of feedback control,''
  {\em Physical Review E} {\bf 85} (2012)  021104,
  \href{http://arxiv.org/abs/1105.3262}{{\tt arXiv:1105.3262
  [cond-mat.stat-mech]}}.

\bibitem{fuchs2012inbook}
C.~A. Fuchs, M.~Schlosshauer, and B.~C. Stacey, {\em My Struggles with the
  Block Universe}, ch.~2006: Wheelerfest, pp.~1273--76.
\newblock 2014.
\newblock \href{http://arxiv.org/abs/1405.2390}{{\tt arXiv:1405.2390
  [quant-ph]}}.

\bibitem{schrodinger1944life}
E.~Schr{\"o}dinger, {\em What Is Life? the Physical Aspect of the Living Cell
  and Mind}.
\newblock Cambridge University Press, Cambridge, UK, 1944.

\bibitem{brillouin1953negentropy}
L.~Brillouin, ``The negentropy principle of information,''
  \href{http://dx.doi.org/10.1063/1.1721463}{{\em Journal of Applied Physics}
  {\bf 24} (1953) no.~9, 1152--1163}.

\bibitem{studeny1998multiinformation}
M.~Studen{\`y} and J.~Vejnarov{\'a}, ``The multiinformation function as a tool
  for measuring stochastic dependence,'' in {\em Learning in Graphical Models},
  pp.~261--297.
\newblock Springer, Netherlands, 1998.

\bibitem{schneidman2003network}
E.~Schneidman, S.~Still, M.~J. Berry, W.~Bialek, {\em et al.}, ``Network
  information and connected correlations,''
  \href{http://dx.doi.org/10.1103/PhysRevLett.91.238701}{{\em Physical Review
  Letters} {\bf 91} (2003) no.~23, 238701}.

\bibitem{ashby1956}
W.~R. Ashby, {\em An Introduction to Cybernetics}.
\newblock Chapman \& Hall, London, UK, 1956.
\newblock \url{http://pespmc1.vub.ac.be/ASHBBOOK.html}.

\bibitem{allen2014}
B.~Allen and M.~A. Nowak, ``Games on graphs,''
  \href{http://dx.doi.org/10.4171/EMSS/3}{{\em EMS Surveys in Mathematical
  Sciences} {\bf 1} (2014) no.~1, 113--51}.

\bibitem{tschinkel2014}
W.~R. Tschinkel and E.~O. Wilson, ``Scientific natural history: Telling the
  epics of nature,'' \href{http://dx.doi.org/10.1093/biosci/biu033}{{\em
  BioScience} {\bf 64} (2014) no.~5, 438--43}.

\bibitem{stacey2008}
B.~C. Stacey and Y.~Bar-\!Yam, ``Principles of security: Human, cyber, and
  biological,'' Tech. Rep. 2008-06-01, NECSI, 2008.
\newblock \href{http://arxiv.org/abs/1303.2682}{{\tt 1303.2682 [cs.CR]}}.
\newblock \url{http://www.necsi.edu/research/military/cyber/}.

\bibitem{bar2003complexity}
Y.~Bar-\!Yam, ``Complexity of military conflict: Multiscale complex systems
  analysis of littoral warfare,'' tech. rep., NECSI, 2003.
\newblock \url{http://www.necsi.edu/projects/yaneer/SSG_NECSI_3_Litt.pdf}.

\bibitem{ellison2009}
C.~J. Ellison, J.~R. Mahoney, and J.~P. Crutchfield, ``Prediction,
  retrodiction, and the amount of information stored in the present,''
  \href{http://arxiv.org/abs/0905.3587}{{\tt arXiv:0905.3587
  [cond-mat.stat-mech]}}.

\bibitem{mahoney2009}
J.~R. Mahoney, C.~J. Ellison, and J.~P. Crutchfield, ``Information
  accessibility and cryptic processes,''
  \href{http://arxiv.org/abs/0905.4787}{{\tt arXiv:0905.4787
  [cond-mat.stat-mech]}}.

\bibitem{crutchfield2009}
J.~P. Crutchfield, C.~J. Ellison, and J.~R. Mahoney, ``Time's barbed arrow:
  irreversibility, crypticity and stored information,'' {\em Physical Review
  Letters} {\bf 103} (2009)  094101, \href{http://arxiv.org/abs/0902.1209}{{\tt
  arXiv:0902.1209 [cond-mat.stat-mech]}}.

\bibitem{crutchfield2011}
J.~P. Crutchfield, ``Between order and chaos,''
  \href{http://dx.doi.org/10.1038/nphys2190}{{\em Nature Physics} {\bf 8}
  (2011)  17--24}.

\bibitem{peng1994}
C.-K. Peng {\em et al.}, ``Mosaic organization of {DNA} nucleotides,'' {\em
  Physical Review E} {\bf 49} (1994)  1685--9.
  \url{http://www.physionet.org/physiotools/dfa/}.

\bibitem{hu2002}
K.~Hu, P.~Ivanov, Z.~Chen, P.~Carpena, and H.~E. Stanley, ``Effect of trends on
  detrended fluctuation analysis,'' {\em Physical Review E} {\bf 64} (2002)
  011114, \href{http://arxiv.org/abs/physics/0103018}{{\tt
  arXiv:physics/0103018}}.

\bibitem{slonim1999}
N.~Slonim and L.~Tishby, ``Agglomerative information bottleneck,'' {\em
  Advances in Neural Information Processing Systems (NIPS)} {\bf 12} (1999)
  617--23.
  \url{http://citeseerx.ist.psu.edu/viewdoc/summary?doi=10.1.1.83.4979}.

\bibitem{shalizi2000}
C.~R. Shalizi and J.~P. Crutchfield, ``Information bottlenecks, causal states,
  and statistical relevance bases: {H}ow to represent relevant information in
  memoryless transduction,'' {\em Advances in Complex Systems} {\bf 5} (2002)
  91--95, \href{http://arxiv.org/abs/nlin/0006025}{{\tt arXiv:nlin/0006025}}.

\bibitem{tishby2000}
N.~Tishby, F.~C. Pereira, and W.~Bialek, ``The information bottleneck method,''
  \href{http://arxiv.org/abs/physics/0004057}{{\tt arXiv:physics/0004057}}.

\bibitem{ziv2005}
E.~Ziv, M.~Middendorf, and C.~Wiggins, ``An information-theoretic approach to
  network modularity,'' {\em Physical Review E} {\bf 71} (2005)  046117,
  \href{http://arxiv.org/abs/q-bio.QM/0411033}{{\tt arXiv:q-bio.QM/0411033}}.

\bibitem{vereshchagin2003}
N.~Vereshchagin and P.~Vit\'anyi, ``Meaningful information,'' {\em IEEE
  Transactions on Information Theory} {\bf 50} (2003) no.~12, 3265--90,
  \href{http://arxiv.org/abs/cs.CC/0204037}{{\tt arXiv:cs.CC/0204037}}.

\bibitem{gruenwald2004}
P.~Gr\"unwald and P.~Vit\'anyi, ``Shannon information and {K}olmogorov
  complexity,'' \href{http://arxiv.org/abs/cs.IT/0410002}{{\tt
  arXiv:cs.IT/0410002}}.

\bibitem{vitanyi2006}
P.~Vit\'anyi, ``Meaningful information,'' {\em IEEE Transactions on Information
  Theory} {\bf 52} (2006) no.~10, 4617--26,
  \href{http://arxiv.org/abs/cs.CC/0111053}{{\tt arXiv:cs.CC/0111053}}.

\bibitem{gellmann1996infomeasures}
M.~Gell-Mann and S.~Lloyd, ``Information measures, effective complexity, and
  total information,''
  \href{http://dx.doi.org/10.1002/(SICI)1099-0526(199609/10)2:1<44::AID-CPLX10%
>3.0.CO;2-X}{{\em Complexity} {\bf 2} (1996) no.~1, 44--52}.

\bibitem{kardar2007}
M.~Kardar, {\em Statistical Physics of Particles}.
\newblock Cambridge University Press, 2007.
\newblock \url{http://www.mit.edu/~kardar/teaching/index.html}.

\bibitem{friedman2000using}
N.~Friedman, M.~Linial, I.~Nachman, and D.~Pe'er, ``Using {B}ayesian networks
  to analyze expression data,''
  \href{http://dx.doi.org/10.1089/106652700750050961}{{\em Journal of
  Computational Biology} {\bf 7} (2000) no.~3--4, 601--620}.

\bibitem{dantzig1960decomposition}
G.~B. Dantzig and P.~Wolfe, ``Decomposition principle for linear programs,''
  {\em Operations Research} (1960)  101--11,
  \href{http://www.jstor.org/stable/167547}{{\tt JSTOR:167547}}.

\bibitem{nettleton1958}
R.~E. Nettleton and M.~S. Green, ``Expression in terms of molecular
  distribution functions for the entropy density in an infinite system,''
  \href{http://dx.doi.org/10.1063/1.1744724}{{\em The Journal of Chemical
  Physics} {\bf 29} (1958) no.~6, 1365--70}.

\bibitem{wolf1996}
D.~R. Wolf, {\em Information and Correlation in Statistical Mechanical
  Systems}.
\newblock PhD thesis, University of Texas at Austin, 1996.
\newblock \url{http://hagar.ph.utexas.edu/ifs/ifsreports/745_Wolf.pdf}.

\bibitem{joyal1981}
A.~Joyal, ``Une th\'eorie combinatoire des s\'eries formelles,''
  \href{http://dx.doi.org/10.1016/0001-8708(81)90052-9}{{\em Advances in
  Mathematics} {\bf 42} (1981)  1--82}.

\bibitem{bergeron1998}
F.~Bergeron, G.~Labelle, and P.~Leroux, {\em Combinatorial Species and
  Tree-Like Structures}.
\newblock Cambridge University Press, 1998.

\bibitem{baez2000}
J.~Baez and J.~Dolan, ``From finite sets to {F}eynman diagrams,'' in {\em
  Mathematics Unlimited -- 2001 and Beyond}, B.~Engquist and W.~Schmid, eds.,
  pp.~29--50.
\newblock Springer, 2001.
\newblock \href{http://arxiv.org/abs/math/0004133}{{\tt arXiv:math/0004133}}.

\bibitem{morton2006}
J.~Morton, ``Categorified algebra and quantum mechanics,'' {\em Theory and
  Applications of Categories} {\bf 16} (2006) no.~29, 785,
  \href{http://arxiv.org/abs/math/0601458}{{\tt arXiv:math/0601458 [math.QA]}}.

\bibitem{evans1989}
D.~J. Evans, ``On the entropy of nonequilibrium states,''
  \href{http://dx.doi.org/10.1007/BF01022830}{{\em Journal of Statistical
  Physics} {\bf 57} (1989) no.~3/4, 745--58}.

\bibitem{evans2002}
D.~J. Evans and L.~Rondoni, ``Comments on the entropy of nonequilibrium steady
  states,'' \href{http://dx.doi.org/10.1023/A:1020435219996}{{\em Journal of
  Statistical Physics} {\bf 109} (2002) no.~3/4, 895--920}.

\bibitem{wang2005}
B.~Wang {\em et al.}, ``Entropy optimization of scale-free networks' robustness
  to random failures,'' {\em Physica A} {\bf 363} (2005) no.~2, 591--96,
  \href{http://arxiv.org/abs/cond-mat/0506725}{{\tt arXiv:cond-mat/0506725
  [cond-mat.dis-nn]}}.

\bibitem{boguna2002}
M.~{Bogu\~n\'a} and R.~Pastor-Satorras, ``Epidemic spreading in correlated
  complex networks,'' {\em Physical Review E} {\bf 66} (2002)  047104,
  \href{http://arxiv.org/abs/cond-mat/0205621}{{\tt arXiv:cond-mat/0205621
  [cond-mat.stat-mech]}}.

\bibitem{dorogovtsev2004}
S.~N. Dorogovtsev, ``Clustering of correlated networks,'' {\em Physical Review
  E} {\bf 69} (2004)  027104, \href{http://arxiv.org/abs/cond-mat/0308444}{{\tt
  arXiv:cond-mat/0308444 [cond-mat.stat-mech]}}.

\bibitem{mahadevan2006}
P.~Mahadevan {\em et al.}, ``Systematic topology analysis and generation using
  degree correlations,'' \href{http://dx.doi.org/10.1145/1151659.1159930}{{\em
  ACM SIGCOMM Computer Communication Review} {\bf 36} (2006) no.~4, n.p.},
  \href{http://arxiv.org/abs/cs/0605007}{{\tt arXiv:cs/0605007 [cs.NI]}}.

\end{thebibliography}\endgroup

\end{document}